\newcommand{\up}{\uparrow}
\newcommand{\down}{\downarrow}
\renewcommand{\up}[1][{\hspace{-0.055555em}}]{{\uparrow_{#1}}}
\renewcommand{\down}[1][{\hspace{-0.055555em}}]{{\downarrow_{#1}}}
\newcommand{\tu}[1] {\langle #1 \rangle}
\newcommand{\class}[2][\sim]{\llbracket #2 \rrbracket_{#1}}
\newcommand{\interval}[1]{\llbracket #1 \rrbracket}
\newcommand{\simfg}{\sim_{\tu{f,g}}}
\def\deg#1#2{{}^{#1\!}/#2}
\def\Deg#1#2{{}^{#1\!}/#2}
\def\from{\leftarrow}
\def\lat{\mathbf L}
\def\latU{\mathbf U}
\def\latV{\mathbf V}
\def\cut#1#2{{}^{#1}{#2}}
\def\Lcone{\mathcal L}
\def\Ucone{\mathcal U}
\DeclareMathOperator{\igal}{IGal}
\DeclareMathOperator{\eigal}{EIGal}
\DeclareMathOperator{\ctol}{CTol}
\DeclareMathOperator{\Fix}{Fix}
\def\Fixfg{\Fix_{\tu{f,g}}}
\begin{document}

\title{Complete relations on fuzzy complete lattices\thanks{This paper is an extended and thoroughly rewritten version of a part of a paper presented at CLA 2011.
We also fix some inaccuracies that appeared in the original paper.}
\thanks{The authors acknowledge support  
  by the ESF project No. CZ.1.07/2.3.00/20.0059, the project
  is co-financed by the European Social Fund and the state 
  budget of the Czech Republic.}}
\subtitle{(Preprint)}
%\tnotetext[ack]{This paper is an extended and thoroughly rewritten version of a part of a paper presented at CLA 2011.
%We also fix some inaccuracies that appeared in the original paper.}

\author{Jan Konecny \and Michal Krupka}

\institute{Data Analysis and Modeling Lab,\\
Dept. of Computer Science, Palack\'y University, Olomouc\\
17. listopadu 12, Olomouc, Czech Republic \\
\email{jan.konecny@upol.cz} \\
\email{michal.krupka@upol.cz}}

%\begin{keyword}
%Fuzzy tolerance \sep Fuzzy order \sep Fuzzy Galois connection \sep Factorization
%\end{keyword}

\maketitle

\begin{abstract}
We generalize the notion of complete binary relation on complete lattice to residuated lattice valued ordered sets and show its properties. Then we focus on complete fuzzy tolerances on fuzzy complete lattices and prove they are in one-to-one correspondence with extensive isotone Galois connections. Finally, we prove that fuzzy complete lattice, factorized by a complete fuzzy tolerance, is again a fuzzy complete lattice.
\end{abstract}

\section{Introduction}
In classical algebra, a complete relation on a complete lattice is a relation which preserves arbitrary infima and suprema. For instance, a binary relation $\sim$ on a complete lattice $U$ is complete, if for each system $\{\tu{u_i,v_i}\}_{i\in I}$ of pairs of elements from $U$, $u_i\sim v_i$ for each $i\in I$ implies $\bigwedge_{i\in I} u_i\sim\bigwedge_{i\in I} v_i$ and $\bigvee_{i\in I} u_i\sim\bigvee_{i\in I} v_i$.

One of the goals of this paper is to define a notion of complete relation for fuzzy sets. That is, we need to state an appropriate condition for completeness of a  fuzzy relation on a set, possessing an appropriate structure of a complete lattice in fuzzy sense. However, the above definition cannot be used as is. 

As it turns out, there is an equivalent condition to that of completeness of a relation on a complete lattice, that involves extending relations between sets to relations between power sets (i.e. sets of all subsets). This situation is known from the theory of so called \emph{power algebras} \cite{Bri:Ps}, which offers a natural way to extend a binary relation $R$ on a set $X$ to a binary relation $R^+$ on the power set $2^X$.

This extension allows us formulate the following equivalent condition for completeness of binary relations: a binary relation $\sim$ on a complete lattice $\latU$ is complete, if and only if for any two subsets $V_1,V_2$ in $\latU$, $V_1 \sim^+ V_2$ implies $\bigwedge V_1\sim\bigwedge V_2$ and $\bigvee V_1\sim\bigvee V_2$.

In \cite{Geo:Fps}, Georgescu extended the theory of power algebras to a fuzzy setting. He shows a way of extending any fuzzy $n$-ary relation $R$ on a set $X$ to a fuzzy $n$-ary relation on the set of all fuzzy sets in $X$. In this paper, we use these results to define a notion of a~complete binary fuzzy relation on a complete fuzzy lattice.

As a general framework, we use $\lat$-valued fuzzy sets, where $\lat$ is a complete residuated lattice, thus covering $[0,1]$-valued fuzzy sets with arbitrary left-continuous t-norm on $[0,1]$ as a special case. Under this framework, we use a notion of $\lat$-ordered set, which is, basically, a set with an $\lat$-relation satisfying requirements of reflexivity, antisymmetry and transitivity. A complete fuzzy lattice, or, more precisely, a completely lattice $\lat$-ordered set, is then an $\lat$-ordered set whose each $\lat$-subset has a (properly defined) infimum and supremum.

$\lat$-valued fuzzy sets, completely lattice $\lat$-ordered sets and other basic notions from the fuzzy set theory (e.g. isotone $\lat$-Galois connections and $\lat$-closure and $\lat$-interior operators) are introduced in Sec.~\ref{Sec:prelim}. In this section, we also prove some basic new results we need in subsequent parts of the paper, namely some properties on isotone $\lat$-Galois connections.

Sec.~\ref{Sec:power} is devoted to some basic parts of the Georgescu's theory of fuzzy power structures and its applications to $\lat$-ordered sets. We start with recalling the notion of power binary $\lat$-relations and their basic properties and then we prove some results on power relations of $\lat$-orders. Section \ref{Sec:compl} contains our definition
of complete binary $\lat$-relation on completely lattice $\lat$-ordered set. We also prove some basic properties of complete $\lat$-relations.

In the main part of the paper, Section \ref{Sec:tol}, we focus on complete fuzzy  tolerances. A (crisp) tolerance on a set is a reflexive and symmetric binary relation. A block of a tolerance is a set whose elements are pairwise related. A maximal block is a block which is maximal w.r.t. set inclusion. The set of all maximal blocks of a tolerance is called the factor set. One of basic results on tolerances on complete lattices is that complete lattices can be factorized by complete tolerances \cite{Cze:Flt,Wille:Ctrcl}. That is, there can be introduced in a natural way an ordering on the set of all maximal blocks of a complete tolerance, such that the factor set, together with this ordering, is again a complete lattice.

We show that the same holds for complete $\lat$-tolerances on completely lattice $\lat$-ordered sets. More precisely, we use the usual definition of fuzzy tolerance and corresponding factor set and introduce an $\lat$-order on the factor set of completely lattice $\lat$-ordered set by a complete $\lat$-tolerance, such that the new $\lat$-order is again a complete lattice $\lat$-order.

To prove this main result, we investigate deeply properties of complete $\lat$-tolerances. We use similar techniques to those used in classical ordered sets. However, we also introduce a result that is new even in the classical case: we show that complete fuzzy tolerances are in one-to-one correspondence with so-called extensive isotone fuzzy Galois connections.

Note that factorization of complete lattices, either in ordinary or fuzzy setting, has been studied in the past \cite{Wille:Ctrcl,GanWi:FCA,Belohlavek:2007aa,Bel:FRS,Kr:Frl} as it is useful for reducing dimensionality of concept lattices. This paper can be viewed as a contribution to this area. 

%Dodelat
%The organization of the paper is as follows.

\section{Preliminaries}\label{Sec:prelim}
\subsection{Residuated lattices and fuzzy sets}
A \emph{complete residuated lattice} \cite{Bel:FRS,Haj:MFL,WaDi:Rl} is a
structure $\mathbf{L}=\langle
L,\wedge,\vee,\otimes,\rightarrow,0,1\rangle$ such that
\begin{enumerate}\parskip=2pt
\item[(i)]
$\langle L,\wedge,\vee,0,1\rangle$ is a complete
lattice, i.e. a partially ordered set in which arbitrary infima
and suprema exist, $0=\bigwedge L$, $1 = \bigvee L$; 
\item[(ii)]
$\langle L,\otimes,1\rangle$ is a commutative monoid, i.e. $\otimes$
is a binary operation which is commutative, associative,
and $a \otimes 1 = a$ for each $a \in L$;
\item[(iii)]
$\otimes$ and $\rightarrow$ satisfy adjointness, i.e.  $a\otimes b\leq c$
if{}f $a\leq b\rightarrow c$.
\end{enumerate}
The partial order
of $\mathbf{L}$ is denoted by $\leq$.
% ; infima and suprema in $\mathbf{L}$ by $\wedge$ and $\vee$. 
Throughout the paper,
$\mathbf{L}$ denotes an arbitrary complete residuated lattice.

Elements of $L$ are called truth degrees. $\otimes$ and
$\rightarrow$ are (truth functions of) ``fuzzy conjunction'' and
``fuzzy implication''.

Common examples of complete residuated lattices include those defined
on $[0,1]$, (i.e. $L=[0,1]$), $\wedge$ being minimum,
$\vee$ maximum, $\otimes$ being a left-continuous
\mbox{t-norm} with the corresponding $\rightarrow$. 

The three most important pairs
of adjoint operations on the unit interval are
\medskip

\begin{tabular}{ll}
\L ukasiewicz: &
$\begin{array}{rcl}
  a \otimes b&=&\mathrm{max} (a + b - 1, 0) \\
  a \rightarrow b&=&\mathrm{min} (1 - a + b, 1)
\end{array}$\\[15pt]
G\"odel: &
$\begin{array}{rcl}
  a \otimes b&=& \mathrm{min} (a, b) \\
  a \rightarrow b&=&
  \left\{
  \begin{array}{ll}
    1 & a \le b \\
    b & \mbox{otherwise}
  \end{array}
  \right. 
\end{array}$
\\[15pt]
Goguen (product): &
$\begin{array}{rcl}
  a \otimes b&=&a \cdot b \\
  a \rightarrow b&=& \left\{
  \begin{array}{ll}
    1 & a \le b \\
    \frac{b}{a} & \mbox{otherwise}
  \end{array}
  \right.
\end{array}$
\end{tabular}
\medskip

%\L ukasiewicz and G\"odel operations can also be introduced on a finite chain: If or on a finite chain in the unit
%interval, e.g.  $L=\{0,\frac{1}{n},\dots,\frac{n-1}{n},1\}$, $\wedge$

An \emph{$\mathbf{L}$-set} (or \emph{fuzzy set}) $A$ in a universe set $X$ is a
mapping assigning to each $x \in X$ some truth degree $A(x)\in L$. The set of all
$\mathbf{L}$-sets in a universe $X$ is denoted 
$L^X$.

The operations with $\mathbf{L}$-sets are defined elementwise. For instance,
\emph{the intersection of $\mathbf{L}$-sets $A,B \in L^X$} is
an $\mathbf{L}$-set $A \cap B$ in $X$ such that
$(A \cap B)(x) = A(x) \wedge B(x)$ for each $x \in X$, etc.
An $\mathbf{L}$-set $A \in L^X$ is also denoted $\{\Deg{A(x)}{x}\,|\,x\in X \}$.
If for all $y \in X$ distinct from $x_1, x_2, \dots, x_n$ we have $A(y) = 0$, we also write
$\{ \Deg{A(x_1)}{x_1}, \Deg{A(x_2)}{x_1}, \dots, \Deg{A(x_n)}{x_n} \}$.
%If there is exactly one $x \in X$ s.t. $A(x) > 0$ (i.e. $A = \{ \Deg{A(x)}{x} \}$)
%we call $A$ a singleton.

Binary $\mathbf{L}$-relations (binary fuzzy relations) between $X$ and $Y$
can be thought of as $\mathbf{L}$-sets in the universe $X \times Y$.
That is, a \emph{binary $\mathbf{L}$-relation $I\in L^{X\times Y}$ between
a set $X$ and a set $Y$} is a mapping assigning to each $x\in X$ and each $y\in
Y$ a truth degree $I(x,y)\in L$ (a~degree to which $x$ and $y$ are related by
$I$). The \emph{inverse relation $I^{-1}$ to the $\lat$-relation $I$} is an $\lat$-set in $Y\times X$ and is defined by $I^{-1}(y,x)=I(x,y)$.

The \emph{composition $R\circ T$ of binary $\lat$-relations $R\in L^{X\times Y}$ and $T\in L^{Y\times Z}$} \cite{KoBa:Rpaip} is a binary $\lat$-relation between $X$ and $Z$ defined by
\begin{align}
(R\circ T)(x,z) &= \bigvee_{y\in Y}R(x,y)\otimes T(y,z).
\end{align}
$\lat$-sets in a set $X$ can be naturally identified with binary $\lat$-relations between $\{1\}$ and $X$, resp.~$X$ and $\{1\}$. Thus, we can also consider composition of an $\lat$-sets and a binary $\lat$-relation and even composition of two $\lat$ set: for $A,A_1,A_2\in\lat^X$, $B\in \lat^Y$ and $R\in\lat^{X\times Y}$ we have
\begin{align}
(A\circ R)(y) &= \bigvee_{x\in X}A(x)\otimes R(x,y), &
(R\circ B)(x) &= \bigvee_{y\in Y}R(x,y)\otimes B(y), 
\end{align}
and
\begin{align}
A_1\circ A_2 &= \bigvee_{x\in X}A_1(x)\otimes A_2(x).
\end{align}

An $\mathbf{L}$-set $A\in L^X$ is called \emph{crisp} if $A(x)\in\{0,1\}$
for each $x\in X$. Crisp $\mathbf{L}$-sets can be identified with ordinary
sets. For a crisp 
%K: pridavam $\mathbf{L}$-set
$\mathbf{L}$-set $A$, we also write $x\in A$ for $A(x)=1$ and 
$x\not\in A$ for $A(x)=0$.
An $\mathbf{L}$-set $A\in L^X$ is called \emph{empty} (denoted by $\emptyset$)
if $A(x)=0$ for each $x\in X$. 
For $a\in L$ and $A\in L^X$, $a\otimes A\in L^X$
and $a\rightarrow A\in L^X$ are defined by
\[(a\otimes A)(x)=a\otimes A(x) \mbox{ and }
(a\rightarrow A)(x)=a\rightarrow A(x).\]

For an $\lat$-set $A\in L^X$ and $a\in L$, the \emph{$a$-cut of $A$} is a crisp subset $\cut aA\subseteq X$ such that $x\in \cut aA$ i{f}f $a\leq A(x)$. This definition applies also to binary $\lat$-relations, whose $a$-cuts are classical (crisp) binary relations. 

For a universe $X$ we
define an $\mathbf{L}$-relation of {\em graded subsethood} $L^X \times L^X \rightarrow L$ by:
\begin{align}
\label{eq:S}
S(A,B) &= \bigwedge_{x \in X} A(x) \rightarrow B(x).
\end{align}
Graded subsethood generalizes the classical subsethood relation $\subseteq$;
indeed, in the crisp case (i.e. $L=\{0,1\}$) \eqref{eq:S} becomes
$S(A,B)=1$ if{}f for each ${x \in X}: x \in A \text{ implies } y \in B$.
Note that $S$ is a binary $\mathbf{L}$-relation on $L^X$.
Described verbally, $S(A,B)$ represents a~degree to which $A$ is a subset of
$B$. In particular, we write $A \subseteq B$ if{}f $S(A,B)=1$. As a~consequence,
we have $A \subseteq B$ if{}f $A(x) \le B(x)$ for each $x \in X$. 

Further we set
\begin{align}
\label{eq:approxx}
A \approx^X B = S(A,B) \wedge S(B,A).
\end{align}
The value $A \approx^X B$ is interpreted as the degree to which the sets $A$ and $B$ are similar.

A binary $\lat$-relation $R$ on a set $X$ is called \emph{reflexive} if $R(x,x)=1$ for any $x\in X$, \emph{symmetric} if $R(x,y)=R(y,x)$ for any $x,y\in X$, 
and \emph{transitive} if $R(x,y)\otimes R(y,z)\leq R(x,z)$ for any $x,y,z\in X$. $R$ is called an \emph{$\lat$-tolerance}, if it is reflexive and symmetric, \emph{$\lat$-equivalence} if it is reflexive, symmetric and transitive. If $R$ is an $\lat$-equivalence such that for any $x,y\in X$ from $R(x,y)=1$ it follows $x=y$, then $R$ is called an \emph{$\lat$-equality} on $X$. $\lat$-equalities are often denoted by $\approx$. 
The similarity $\approx^X$ of $\lat$-sets
\eqref{eq:approxx}
is an $\lat$-equality on $L^X$.

Let $\sim$ be an $\lat$-equivalence on $X$.
We say that an $\lat$-set $A$ in $X$ is  \emph{compatible with $\sim$} (or \emph{extensional w.r.t. $\sim$}, if for any $x,x'\in X$ it holds
\begin{align}
A(x)\otimes(x\sim x')&\leq A(x').
\end{align}
A binary $\lat$-relation $R$ on $X$ is \emph{compatible with $\sim$}, if for each $x,x',y,y'\in X$,
\begin{align}
R(x,y)\otimes(x\sim x')\otimes(y\sim y')\leq R(x',y').
\end{align}

\emph{Zadeh's extension principle} \cite{Zad:Clvaar} allows extending any mapping $f\!:X\to Y$ to a mapping $f^+\!: L^X\to L^Y$ by setting for each $A\in L^X$
\begin{align}
f^+(A)(y) = \bigvee_{x\in X, f(x) = y}A(x). \label{eqn:image}
\end{align}

In the
following we use well-known properties of residuated lattices and fuzzy
structures which can be found e.g. in \cite{Bel:FRS,Haj:MFL}.

%NASLEDUJICI DODAT!:
%-- relacni soucin

\subsection{$\lat$-ordered sets}\label{subsec:L_ord_sets}
In this section, we recall basic definitions and results of the theory of $\lat$-ordered sets. Basic references are \cite{Bel:Clofl,Bel:FRS} and the references therein. 

An \emph{$\lat$-order} on a set $U$ with an $\lat$-equality $\approx$ is a binary $\lat$-relation $\preceq$ on $U$ which is compatible with $\approx$, reflexive, transitive and satisfies
$(u\preceq v)\wedge (v\preceq u) \leq u\approx v$ for any $u,v\in U$ (\emph{antisymmetry}).
The tuple $\mathbf U = \tu{\tu{U,\approx},\preceq}$ is called an \emph{$\lat$-ordered set}.
An immediate consequence of the definition is that for any $u,v\in U$ it holds
\begin{equation}\label{eqn:antisym_eq}
u\approx v = (u\preceq v)\wedge (v\preceq u) .
\end{equation}

If $\latU=\tu{\tu{U,\approx},\preceq}$ is an $\lat$-ordered set, then the tuple $\tu{U,\cut 1{\preceq}}$, where $\cut 1{\preceq}$ is the $1$-cut of $\preceq$, is a (partially) ordered set. We sometimes write $\leq$ instead of $\cut 1{\preceq}$ and use the symbols $\wedge$, $\bigwedge$ resp. $\vee$, $\bigvee$ for denoting infima resp. suprema in $\tu{U,\cut 1{\preceq}}$.

For two $\lat$-ordered sets $\mathbf U=\tu{\tu{U,\approx_U},\preceq_U}$ and $\mathbf V = \tu{\tu{V,\approx_V},\preceq_V}$, a mapping $f\!:U\to V$ is \emph{isotone}, if $(u_1\preceq_U u_2) \leq (f(u_1)\preceq_V f(u_2))$ for any $u_1,u_2\in V$. The mapping $f$ is called an \emph{isomorphism of $\mathbf U$ and $\mathbf V$}, if it is a bijection and 
$(u_1\preceq_U u_2) = (f(u_1)\preceq_V f(u_2))$
for any $u_1,u_2\in V$. 
$\mathbf U$ and $\mathbf V$ are then called \emph{isomorphic}.

In classical theory of ordered sets, a subset $V$ of an ordered set is called a lower set, if for each element $u$ such that there is $v\in V$ satisfying $u\leq v$, it holds $u\in V$. Equivalently, for a lower set $V$ it holds: if $u\leq v$, then $v\in V$ implies $u\in V$. 

Analogously, for an $\lat$-ordered set $\latU$, an $\lat$-set 
%K: $V\in U$ -> $V\in L^U$
$V\in L^U$
is called a \emph{lower set} (resp.~an \emph{upper set}), if for each $u,v\in U$ it holds
\begin{align}
%K: menim
%u\preceq v &\leq v\to u & \text{(resp.~} u\preceq v &\leq u\to v\text{).}
u\preceq v &\leq V(v)\to  V(u) & \text{(resp.~} u\preceq v &\leq  V(u)\to  V(v)\text{).}
\end{align}
\emph{The lower} (resp.~\emph{upper}) \emph{set of an $\lat$-set $V\in L^U$} is the $\lat$-set $\down V$ (resp.~$\up V$), defined by
\begin{align}
\down V(u) &= ({\preceq}\circ V)(u)= \bigvee_{v\in U}(u\preceq v)\otimes V(v),
\\ 
\up V(u)&= (V\circ{\preceq})(u) =\bigvee_{v\in U}(v\preceq u)\otimes V(v).
\end{align}
In a similar manner we define lower and upper cone of $V\in L^U$. For any $v\in U$ we set
\begin{align}
\Lcone V(v) &= \bigwedge_{u\in U}V(u)\to (v\preceq u), &
\Ucone V(v) &= \bigwedge_{u\in U}V(u)\to (u\preceq v).
\end{align}
The right-hand side of the first equation is the degree of ``For each $u\in U$, if $u$ is in $V$, then $v$ is less than or equal to $u$'', and similarly for the second equation. Thus, $\Lcone V(v)$ ($\Ucone V(v)$) can be seen as the degree to which $v$ is less (greater) than or equal to each element of $V$, that is \emph{the degree to which $v$ is a lower} (\emph{upper}) \emph{bound of $V$}.

In the case $\Lcone V(v)=1$ (resp.~$\Ucone V(v)=1$) we say simply $v$ is a \emph{lower} (\emph{upper}) \emph{bound of $V$}.
$\Lcone V$ (resp. $\Ucone V$) is called the \emph{$\lat$-set of lower bounds} (resp.~\emph{upper bounds}) \emph{of $V$}, or  \emph{the lower cone} (resp. \emph{the upper cone}) of $V$.

If $u,v\in U$, $v\leq u$, then the $\lat$-set $\interval{v,u}=\Ucone\{v\}\cap\Lcone\{u\}$ is called
an \emph{$\lat$-interval} (or simply an \emph{interval}) in $\latU$. 

We set $[v,u]=\cut 1{\interval{v,u}}$. Thus, $[v,u]$ denotes the classical interval with respect to the $1$-cut of $\preceq$: $[v,u] = \{u'\ |\ v\leq u'\leq u\}$.

An $\lat$-set $V\in L^U$ is \emph{convex} if $V = \down V\cap\up V$. The ``$\subseteq$'' inclusion always holds as the lower set as well as upper set of $V$ always contain $V$ as a subset. For each $V\in L^U$, each of the following $\lat$-sets is convex: $\down V$, $\up V$, $\Lcone V$, $\Ucone V$. Every $\lat$-interval $\interval{v,u}$ in $\latU$ is also convex.
Every convex $\lat$-set in $\latU$ is compatible with $\approx$.

%: CONE LEMMAS
In the following two lemmas we formulate basic properties of lower and upper sets and cones that will be needed in the sequel. All the properties can be proved by direct computation.
\begin{lemma}
For each $V\in L^U$ we have
\begin{align}
\down V &= \down\down V, & \up V &= \up\up V, \label{eqn:lower_set_cosure}
\\
\Lcone V &= \down\Lcone V = \Lcone\up V,
& \Ucone V &= \up\Ucone V = \Ucone\down V. \label{eqn:lower_sets_cones_rel}
\end{align}
\end{lemma}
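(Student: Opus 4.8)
The plan is to verify each of the four identities by direct computation, unfolding the definitions of $\down$, $\up$, $\Lcone$, $\Ucone$ and repeatedly using the adjointness $a\otimes b\leq c \iff a\leq b\to c$ together with reflexivity and transitivity of $\preceq$. Since all four are equalities, I would in each case prove the two inequalities separately, one of which is typically immediate (an ``always contains'' direction coming from reflexivity $u\preceq u = 1$) and the other requiring transitivity to collapse a double quantifier.

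For the first pair \eqref{eqn:lower_set_cosure}, consider $\down V = \down\down V$. Expanding, $\down\down V(u) = \bigvee_{w}(u\preceq w)\otimes\bigvee_{v}(w\preceq v)\otimes V(v)$. The inclusion $\down V\subseteq\down\down V$ follows by choosing $w=u$ and using $u\preceq u = 1$. For the reverse, I would distribute $\otimes$ over $\bigvee$ to rewrite the double supremum as $\bigvee_{v,w}(u\preceq w)\otimes(w\preceq v)\otimes V(v)$ and then apply transitivity $(u\preceq w)\otimes(w\preceq v)\leq(u\preceq v)$ to bound each term by $(u\preceq v)\otimes V(v)$, whose supremum over $v$ is exactly $\down V(u)$. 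The statement for $\up$ is the order-dual and proceeds identically with $\preceq$ reversed.

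For the second pair \eqref{eqn:lower_sets_cones_rel}, I would treat $\Lcone V = \Lcone\up V$ and $\Lcone V = \down\Lcone V$ (the $\Ucone$ equations again being dual). Here the relevant tool is the adjunction in its form relating $\otimes$ on one side to $\to$ on the other: expanding $\Lcone\up V(w) = \bigwedge_{v}\up V(v)\to(w\preceq v) = \bigwedge_{v}\bigl(\bigvee_{u}(u\preceq v)\otimes V(u)\bigr)\to(w\preceq v)$, and using that an implication from a supremum is an infimum of implications, $\bigl(\bigvee_u a_u\bigr)\to c = \bigwedge_u(a_u\to c)$, turns this into $\bigwedge_{v,u}\bigl((u\preceq v)\otimes V(u)\bigr)\to(w\preceq v)$. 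Adjointness rewrites the summand as $V(u)\to\bigl((u\preceq v)\to(w\preceq v)\bigr)$; transitivity gives $(w\preceq u)\otimes(u\preceq v)\leq(w\preceq v)$, i.e.\ $(w\preceq u)\leq(u\preceq v)\to(w\preceq v)$, and pairing this with the reflexivity instance $v=u$ (which forces the inner infimum down to exactly $w\preceq u$) yields $\bigwedge_u V(u)\to(w\preceq u) = \Lcone V(w)$.

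The main obstacle, such as it is, lies in handling the interaction of $\otimes$, $\to$, $\bigvee$ and $\bigwedge$ correctly: the step where an implication out of a supremum becomes an infimum of implications, and the currying $(a\otimes b)\to c = a\to(b\to c)$, are the points where a careless computation could go wrong, and one must be careful that transitivity is available in the correct orientation (note $\preceq$ is not symmetric). Once these lattice-theoretic identities are marshalled, each equality reduces to matching a reflexivity instance against a transitivity bound, and no genuinely hard step remains; this is why the lemma can fairly be dispatched with the remark that the properties follow ``by direct computation.''
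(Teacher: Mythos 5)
Your computations are correct and are exactly the ``direct computation'' that the paper invokes without writing out: each equality is split into a trivial inequality obtained from the reflexivity instance ($w=u$ or $v=u$) and a nontrivial one obtained from transitivity after rearranging suprema/infima via distributivity of $\otimes$ over $\bigvee$, the identity $(\bigvee_u a_u)\to c=\bigwedge_u(a_u\to c)$, and currying. The only omission is that you spell out $\Lcone V=\Lcone\up V$ but not $\Lcone V=\down\Lcone V$; the latter follows by the same pattern (for each $u'$, $(u\preceq v)\otimes(V(u')\to(v\preceq u'))\otimes V(u')\leq u\preceq u'$, then adjoin), so nothing is missing in substance.
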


\begin{lemma}%\label{lem:cones_gal}
For each $V,V_1,V_2\in L^U$, $u,v\in U$ we have
\begin{align}
S(V_1,V_2)&\leq S(\Lcone V_2,\Lcone V_1), & S(V_1,V_2)&\leq S(\Ucone V_2,\Ucone V_1), 
\label{eqn:cones_gal_mon}
\\
\Lcone\Ucone\Lcone V &= \Lcone V, & \Ucone\Lcone\Ucone V &= \Ucone V 
\label{eqn:cones_gal_clos} \\
%\label{lem:cone_union}
V&\subseteq \Ucone\Lcone V& V&\subseteq \Lcone\Ucone V \label{eqn:cones_twice} \\
\Lcone(V_1\cup V_2) &= \Lcone V_1\cap \Lcone V_2, &
\Ucone(V_1\cup V_2) &= \Ucone V_1\cap \Ucone V_2  \label{eqn:cone_union} \\
%\label{lem:single_cones}
\Lcone{\{v\}}(u)&= u\preceq v, & \Ucone \{v\}(u)&= v\preceq u, \label{eqn:single_cone_1} \\
\Lcone\Ucone\{v\} &= \Lcone\{v\}, & \Ucone\Lcone\{v\} &= \Ucone\{v\},\label{eqn:single_cone_2} \\
u\preceq v &= S(\Lcone\{u\},\Lcone\{v\}), &
u\preceq v &= S(\Ucone\{v\},\Ucone\{u\}).
\label{eqn:preceq_by_cones}
\end{align}
\end{lemma}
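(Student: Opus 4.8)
The plan is to build the proof around the single structural fact that the two cone operators form an antitone $\lat$-Galois connection, expressed by the adjointness identity
\begin{equation*}
S(V_1,\Lcone V_2) = S(V_2,\Ucone V_1) \tag{$\star$}
\end{equation*}
for all $V_1,V_2\in L^U$. I would establish $(\star)$ first, because the three ``general'' groups \eqref{eqn:cones_gal_mon}, \eqref{eqn:cones_gal_clos} and \eqref{eqn:cones_twice} are then purely formal consequences of it. Unfolding both sides and applying the residuation laws $a\to\bigwedge_i b_i=\bigwedge_i(a\to b_i)$ and $a\to(b\to c)=(a\otimes b)\to c$, both $S(V_1,\Lcone V_2)$ and $S(V_2,\Ucone V_1)$ rewrite to the symmetric double infimum $\bigwedge_{u,v}\bigl(V_1(v)\otimes V_2(u)\bigr)\to(v\preceq u)$; the only delicate point is interchanging the two infima and using commutativity of $\otimes$ to symmetrize the antecedent, and this is where I expect the bookkeeping to be heaviest.

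With $(\star)$ available, the extensivity laws \eqref{eqn:cones_twice} are immediate: taking $V_1=\Lcone V$, $V_2=V$ gives $S(V,\Ucone\Lcone V)=S(\Lcone V,\Lcone V)=1$, hence $V\subseteq\Ucone\Lcone V$, and dually $V\subseteq\Lcone\Ucone V$. The antitonicity laws \eqref{eqn:cones_gal_mon} then follow from \eqref{eqn:cones_twice}, $(\star)$, and transitivity of graded subsethood $S(A,B)\otimes S(B,C)\le S(A,C)$, via $S(V_1,V_2)\le S(V_1,\Ucone\Lcone V_2)=S(\Lcone V_2,\Lcone V_1)$. For the closure identities \eqref{eqn:cones_gal_clos} I would sandwich: applying the (now antitone) operator $\Lcone$ to \eqref{eqn:cones_twice} yields $\Lcone\Ucone\Lcone V\subseteq\Lcone V$, while \eqref{eqn:cones_twice} applied to the $\lat$-set $\Lcone V$ yields $\Lcone V\subseteq\Lcone\Ucone\Lcone V$; since $\approx^X$ is an $\lat$-equality, two-sided graded inclusion forces equality, and the $\Ucone$-version is dual.

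The remaining identities involve unions and singletons, and I would verify them directly. For \eqref{eqn:cone_union} I would use $(a\vee b)\to c=(a\to c)\wedge(b\to c)$ and distribute the defining infimum over the resulting meet, splitting $\Lcone(V_1\cup V_2)$ into $\Lcone V_1\cap\Lcone V_2$ pointwise (and dually for $\Ucone$). For \eqref{eqn:single_cone_1}, substituting the crisp singleton $\{v\}$ into the definition collapses the infimum through $1\to a=a$ and $0\to a=1$, leaving precisely $u\preceq v$, resp.\ $v\preceq u$. Finally \eqref{eqn:single_cone_2} and \eqref{eqn:preceq_by_cones} reduce, after invoking \eqref{eqn:single_cone_1}, to collapsing an infimum of the shape $\bigwedge_{c}\bigl((c\preceq a)\to(c\preceq b)\bigr)$ to the value $a\preceq b$ (and its transpose $\bigwedge_{c}\bigl((a\preceq c)\to(b\preceq c)\bigr)=b\preceq a$): the ``$\le$'' direction comes from the extremal index $c=a$ together with reflexivity $a\preceq a=1$, and the ``$\ge$'' direction is exactly transitivity $(c\preceq a)\otimes(a\preceq b)\le c\preceq b$ read through adjointness.

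The step I expect to be the real obstacle is the verification of $(\star)$: it is the only place where the two nested quantifiers genuinely interact, and getting the interchange of infima and the $\otimes$-symmetrization right is essential, since every structural consequence rests on it. Everything else is routine residuated-lattice algebra, and the only background facts I rely on are the standard residuation laws, transitivity of $S$, reflexivity and transitivity of $\preceq$, and the fact that $\approx^X$ being an $\lat$-equality upgrades two-sided graded inclusion to genuine equality.
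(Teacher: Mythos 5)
Your proof is correct. Note, however, that the paper itself offers no written proof of this lemma: it simply remarks beforehand that ``all the properties can be proved by direct computation,'' i.e.\ each identity is meant to be verified by unfolding the definitions of $\Lcone$, $\Ucone$ and $S$ and applying residuation laws. Your route is genuinely different in organization: you isolate the adjointness identity $S(V_1,\Lcone V_2)=S(V_2,\Ucone V_1)$ as a single key lemma (itself a one-line computation once both sides are rewritten as $\bigwedge_{u,v}(V_1(v)\otimes V_2(u))\to(v\preceq u)$), and then derive \eqref{eqn:cones_twice}, \eqref{eqn:cones_gal_mon} and \eqref{eqn:cones_gal_clos} as purely formal consequences of that antitone Galois connection, reserving raw computation for the union and singleton identities. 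This buys a cleaner and less error-prone argument for the three ``structural'' groups (and makes explicit the Galois-connection structure the paper uses implicitly throughout), at the cost of front-loading one quantifier-interchange computation; the paper's intended element-wise verification is more uniform but repeats essentially the same residuation manipulations seven times. One small remark: you slightly overstate the difficulty of $(\star)$ -- the interchange of the two infima is unconditional in a complete lattice and the $\otimes$-symmetrization is just commutativity -- and in \eqref{eqn:cones_gal_clos} the appeal to $\approx^X$ being an $\lat$-equality is unnecessary, since two-sided crisp inclusion already gives pointwise equality of membership degrees. Neither point affects correctness.
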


%\begin{lemma}\label{lem:cone_compl}
%For each $V\in L^U$ it holds $\Lcone V = \Lcone C_\approx(V)$, $\Ucone V = \Ucone C_\approx(V)$.
%\end{lemma}
%\begin{proof}
%We prove the equality for lower cones. Since $V\subseteq C_\approx(V)$, then $\Lcone C_\approx(V)\subseteq \Lcone V$ (this follows directly from the definition of lower cone and antitony of $\to$ in first operand). The prove of the opposite inclusion is as follows.
%
%By compatibility of $\preceq$ with $\approx$ we have $(u\preceq v')\otimes(v'\approx v)\leq (u\preceq v)$, which is equivalent to $(u\preceq v')\leq (v'\approx v)\to (u\preceq v)$. Now,
%\begin{align*}
%\Lcone C_\approx(V)(u) &= \bigwedge_{v\in U}C_\approx(V)(v)\to (u\preceq v) 
%  =  \bigwedge_{v\in U}\Bigg(\bigvee_{v'\in U}V(v')\otimes (v'\approx v)\Bigg) \to (u\preceq v) \\
% & = \bigwedge_{v,v'\in U}(V(v')\otimes (v'\approx v)) \to (u\preceq v)  \\
% &= \bigwedge_{v,v'\in U}V(v')\to ((v'\approx v) \to (u\preceq v)) 
% \geq \bigwedge_{v'\in U}V(v') \to (u\preceq v') \\ &= \Lcone V(u).
%\end{align*}
%\end{proof}

\subsection{Completely lattice $\lat$-ordered sets}
For any $\lat$-set $V\in L^U$ there exists at most one element $u\in U$ such that $\Lcone V(u) \wedge \Ucone(\Lcone V)(u) = 1$ (resp. $\Ucone V \wedge \Lcone(\Ucone V)(u) = 1$) \cite{Bel:Clofl,Bel:FRS}. If there is such an element, we call it \emph{the infimum of $V$} (resp. \emph{the supremum of $V$}) and denote $\inf V$ (resp. $\sup V$); otherwise we say that the infimum (resp.~supremum) does not exist.

If $\inf V$ exists and $V(\inf V)=1$, then it is called \emph{minimum of $V$} and denoted $\min V$. Similarly, if $\sup V$ exists and $V(\sup V)=1$, then we call it \emph{maximum of $V$} and denote $\max V$.

Infimum (supremum) of $V$ is obviously a lower (upper) bound of $V$ and, in the same time, an upper bound of $\Lcone V$ (a lower bound of $\Ucone V$).

%: SUP INF LEMMAS
\begin{lemma}\label{lem:inf_single_1}
If $\inf V$ exists, then $\Lcone V = \Lcone\{\inf V\}$. If $\sup V$ exists, then $\Ucone V = \Ucone\{\sup V\}$.
\end{lemma}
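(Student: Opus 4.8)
The plan is to prove the first statement, $\Lcone V = \Lcone\{\inf V\}$; the second then follows by the dual argument (swapping $\Lcone$ with $\Ucone$ and reversing $\preceq$). Write $w = \inf V$. By the very definition of infimum recalled just above, $w$ satisfies $\Lcone V(w) = 1$ and $\Ucone(\Lcone V)(w) = 1$; in other words, $w$ is a lower bound of $V$ and simultaneously an upper bound of $\Lcone V$. I would aim to show the two inclusions $\Lcone V \subseteq \Lcone\{w\}$ and $\Lcone\{w\} \subseteq \Lcone V$ separately, reading each as ``$S(\cdot,\cdot) = 1$'' via the graded subsethood $S$.

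For the inclusion $\Lcone\{w\} \subseteq \Lcone V$, the key observation is that $w$ is a lower bound of $V$, i.e.\ $\Lcone V(w) = 1$. By \eqref{eqn:single_cone_1} we have $\Lcone\{w\}(u) = (u \preceq w)$, so I want $(u \preceq w) \leq \Lcone V(u)$ for every $u$. Since $\Lcone V(w) = 1 \leq \Lcone V(w)$ and $\Lcone V$ is a lower set (it is convex, hence $\Lcone V = \down\Lcone V$, so in particular a lower set by \eqref{eqn:lower_sets_cones_rel}), the lower-set condition $u \preceq w \leq \Lcone V(w) \to \Lcone V(u)$ together with $\Lcone V(w) = 1$ gives exactly $(u \preceq w) \leq \Lcone V(u)$. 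Hence $S(\Lcone\{w\}, \Lcone V) = 1$.

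For the reverse inclusion $\Lcone V \subseteq \Lcone\{w\}$, I would use that $w$ is an upper bound of $\Lcone V$, i.e.\ $\Ucone(\Lcone V)(w) = 1$. Unwinding $\Ucone$, this says $\bigwedge_{u \in U}\bigl(\Lcone V(u) \to (u \preceq w)\bigr) = 1$, which is precisely $\Lcone V(u) \leq (u \preceq w) = \Lcone\{w\}(u)$ for every $u$, i.e.\ $S(\Lcone V, \Lcone\{w\}) = 1$. Combining the two inclusions yields $\Lcone V = \Lcone\{w\} = \Lcone\{\inf V\}$.

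The argument is essentially a matter of correctly translating the defining conditions on $\inf V$ (lower bound of $V$, upper bound of $\Lcone V$) into the two subsethood inequalities, so I expect no serious obstacle. The one point needing care is the first inclusion: I must invoke that $\Lcone V$ is a lower set in order to propagate the membership degree from $w$ down to an arbitrary $u$ with $u \preceq w$; this is where the convexity of cones, recorded before the lemmas, does the real work.
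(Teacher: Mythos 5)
Your proof is correct, but it takes a genuinely more elementary route than the paper's. The paper derives both inclusions purely from the abstract Galois-connection identities: starting from $\{\inf V\}\subseteq\Lcone V$ and $\{\inf V\}\subseteq\Ucone\Lcone V$, it applies the antitonicity \eqref{eqn:cones_gal_mon} once or twice and then collapses $\Lcone\Ucone\Lcone V$ to $\Lcone V$ via \eqref{eqn:cones_gal_clos} and $\Lcone\Ucone\{\inf V\}$ to $\Lcone\{\inf V\}$ via \eqref{eqn:single_cone_2}. You instead unwind the two defining conditions $\Lcone V(w)=1$ and $\Ucone(\Lcone V)(w)=1$ pointwise; your second inclusion is then literally the definition of $\Ucone(\Lcone V)(w)=1$, and your first uses the lower-set property of $\Lcone V$. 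Both work; the paper's version buys reusability of the closure-operator calculus, yours buys transparency. Two small points to tighten: (i) the phrase ``it is convex, hence $\Lcone V=\down\Lcone V$'' is a misattribution --- convexity is $W=\down W\cap\up W$ and does not by itself give $W=\down W$; what you actually need is \eqref{eqn:lower_sets_cones_rel} directly, which you also cite, so nothing is lost; (ii) passing from $\Lcone V=\down\Lcone V$ to the graded lower-set inequality $u\preceq w\leq\Lcone V(w)\to\Lcone V(u)$ requires one line of adjointness ($(u\preceq w)\otimes\Lcone V(w)\leq\down\Lcone V(u)=\Lcone V(u)$), which you should state; alternatively you could avoid the lower-set detour entirely by noting $\Lcone V(w)=1$ gives $V(u)\leq(w\preceq u)$ for all $u$, whence $(v\preceq w)\otimes V(u)\leq(v\preceq u)$ by transitivity, i.e.\ $(v\preceq w)\leq\Lcone V(v)$.
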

\begin{proof}
%We use Theorem \ref{thm:ord_sup_inf} and \eqref{eqn:adj}:
%\begin{align*}
%\Lcone U(v) &= \bigwedge_{u\in V}U(u)\to (v\preceq u) 
%	=\bigwedge_{u\in V}v\preceq(U(u)\to u) 
%	=  v\preceq\bigwedge_{u\in V}U(u)\to u \\
%	&= \Lcone \left\{\bigwedge_{u\in V}U(u)\to u\right\}(v) = \Lcone\{\inf U\}(v).
%\end{align*}
By definition, $\Lcone V\supseteq\{\inf V\}$. Applying both inequalities from \eqref{eqn:cones_gal_mon} we obtain $\Lcone\Ucone\Lcone V\supseteq\Lcone\Ucone\{\inf V\} $. By \eqref{eqn:cones_gal_clos}, $\Lcone\Ucone\Lcone V= \Lcone V$  and by \eqref{eqn:single_cone_2}, $\Lcone\Ucone\{\inf V\} = \Lcone\{\inf V\}$. Thus, $\Lcone V\supseteq\Lcone\{\inf V\}$.

By definition of lower cone again, $\{\inf V\}\subseteq \Ucone\Lcone V$. 
The first inequality of \eqref{eqn:cones_gal_mon} gives $\Lcone\Ucone\Lcone V\subseteq\Lcone\{\inf V\}$ and by \eqref{eqn:cones_gal_clos}, $\Lcone V\subseteq\Lcone\{\inf V\}$.

The proof for upper cones is similar.
\end{proof}

\begin{lemma}\label{lem:inf_single_2}
If $\inf V$ exists, then $V \subseteq \Ucone\{\inf V\}$. If $\sup V$ exists, then $V \subseteq \Lcone\{\sup V\}$.
\end{lemma}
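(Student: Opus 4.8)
The plan is to establish the statement for infima and then obtain the supremum case by the evident symmetry between lower and upper cones. Rather than computing $V \subseteq \Ucone\{\inf V\}$ elementwise, I would realize it as a short chain of identities already available: the general containment $V \subseteq \Ucone\Lcone V$ from \eqref{eqn:cones_twice}, the identity $\Lcone V = \Lcone\{\inf V\}$ from Lemma~\ref{lem:inf_single_1}, and the singleton collapse $\Ucone\Lcone\{v\} = \Ucone\{v\}$ from \eqref{eqn:single_cone_2}.

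Concretely, I would start from $V \subseteq \Ucone\Lcone V$, which holds for every $\lat$-set $V \in L^U$. The hypothesis that $\inf V$ exists is exactly the hypothesis of Lemma~\ref{lem:inf_single_1}, so I may apply it to replace $\Lcone V$ by $\Lcone\{\inf V\}$; applying $\Ucone$ to both sides of $\Lcone V = \Lcone\{\inf V\}$ gives $\Ucone\Lcone V = \Ucone\Lcone\{\inf V\}$. Finally I would collapse the double cone on the right via \eqref{eqn:single_cone_2}, namely $\Ucone\Lcone\{\inf V\} = \Ucone\{\inf V\}$. Stringing these together yields $V \subseteq \Ucone\Lcone V = \Ucone\Lcone\{\inf V\} = \Ucone\{\inf V\}$, which is the claim. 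The supremum case is obtained by interchanging $\Lcone$ and $\Ucone$ throughout, using $V \subseteq \Lcone\Ucone V$, the identity $\Ucone V = \Ucone\{\sup V\}$, and $\Lcone\Ucone\{v\} = \Lcone\{v\}$.

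I do not expect a genuine obstacle: the entire content is the observation that the desired inclusion follows formally from the closure and singleton behaviour of the cone operators recorded in the preceding lemmas, so the only ``difficulty'' is selecting the right three ingredients. Should a self-contained argument avoiding Lemma~\ref{lem:inf_single_1} be preferred, I would instead unfold the definition directly: since $\inf V$ is a lower bound of $V$ we have $\Lcone V(\inf V) = 1$, i.e. $V(u) \le (\inf V \preceq u)$ for every $u \in U$, and by \eqref{eqn:single_cone_1} the right-hand side is exactly $\Ucone\{\inf V\}(u)$, giving $V \subseteq \Ucone\{\inf V\}$. The only point needing care in this variant is justifying that $\inf V$ is a lower bound of $V$, which is the remark stated just before Lemma~\ref{lem:inf_single_1}.
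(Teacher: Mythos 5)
Your proposal is correct and coincides with the paper's own proof, which chains $V\subseteq \Ucone\Lcone V = \Ucone\Lcone \{\inf V\} = \Ucone \{\inf V\}$ using exactly the three ingredients you identify (\eqref{eqn:cones_twice}, Lemma~\ref{lem:inf_single_1}, and \eqref{eqn:single_cone_2}) and handles the supremum case by duality. The alternative self-contained argument you sketch is a fine bonus but not needed.
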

\begin{proof}
By \eqref{eqn:cones_twice},  Lemma \ref{lem:inf_single_1}, \eqref{eqn:single_cone_2},
$V\subseteq \Ucone\Lcone V = \Ucone\Lcone \{\inf V\} =  \Ucone \{\inf V\}$. The second part is dual.
\end{proof}

An $\lat$-ordered set $\mathbf U$ is called \emph{completely lattice $\lat$-ordered}, if for each $V\in L^U$, both $\inf V$ and $\sup V$ exist. 

An important example of a completely lattice $\lat$-ordered set is the following. For a set $X$, the tuple $\tu{\tu{L^X,\approx^X},S}$ is a completely lattice $\lat$-ordered set with infima and suprema given by
\begin{align}
\label{eqn:power_inf}
(\inf V)(u) &= \bigwedge_{W\in L^X}V(W)\to  W(u), &
(\sup V)(u) &= \bigvee_{W\in L^X}V(W)\otimes W(u).
\end{align}
This fact follows easily e.g. from the main theorem of fuzzy concept lattices 
(fuzzy order version) \cite{Bel:FRS,Bel:Clofl}.

Note that from definition and \eqref{eqn:cones_gal_clos} it follows
\begin{align}
\inf V &= \max\Lcone V, & \sup V &= \max\Ucone V.
\end{align}
Thus, to show $\mathbf U$ is a completely lattice $\lat$-ordered set it suffices to prove existence of suprema resp.~infima of all $\lat$-sets in $U$.

Consequently, the following holds for infima and suprema of $\lat$-intervals:
\begin{align}
v &= \min \interval{v,u}, & u = \max \interval{v,u}.
\end{align}

\begin{lemma}
The following holds for any $\lat$-sets $V_1,V_2$ in a completely lattice $\lat$-ordered set $\latU$.
\begin{align}
S(\Lcone V_1,\Lcone V_2) &= \inf V_1\preceq\inf V_2, &
S(\Ucone V_1,\Ucone V_2) &= \sup V_2\preceq\sup V_1, \label{eqn:inf_cones_rel}
\\
S(V_1,V_2) &\leq \inf V_2\preceq\inf V_1, & S(V_1,V_2) &\leq \sup V_1\preceq\sup V_2
\label{eqn:inf_sets_rel}
\end{align}
\end{lemma}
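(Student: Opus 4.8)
The plan is to establish the two equalities in \eqref{eqn:inf_cones_rel} first, and then obtain the two inequalities in \eqref{eqn:inf_sets_rel} as immediate consequences. Since $\latU$ is completely lattice $\lat$-ordered, the infima and suprema of $V_1$ and $V_2$ all exist, so Lemma~\ref{lem:inf_single_1} applies and lets me replace a cone of an arbitrary $\lat$-set by the cone of a single element.

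For the first equality, I would use Lemma~\ref{lem:inf_single_1} to rewrite $\Lcone V_1 = \Lcone\{\inf V_1\}$ and $\Lcone V_2 = \Lcone\{\inf V_2\}$, so that
\[S(\Lcone V_1,\Lcone V_2) = S(\Lcone\{\inf V_1\},\Lcone\{\inf V_2\}).\]
The right-hand side is exactly the quantity appearing in the first identity of \eqref{eqn:preceq_by_cones} with $u = \inf V_1$ and $v = \inf V_2$, hence equals $\inf V_1\preceq\inf V_2$. The second equality is proved symmetrically: Lemma~\ref{lem:inf_single_1} gives $\Ucone V_i = \Ucone\{\sup V_i\}$, and the second identity of \eqref{eqn:preceq_by_cones}, read with $v = \sup V_1$ and $u = \sup V_2$, yields $S(\Ucone\{\sup V_1\},\Ucone\{\sup V_2\}) = \sup V_2\preceq\sup V_1$.

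The two inequalities in \eqref{eqn:inf_sets_rel} then follow by composing with the monotonicity of the cone operators. For the infimum part, the first inequality of \eqref{eqn:cones_gal_mon} gives $S(V_1,V_2)\leq S(\Lcone V_2,\Lcone V_1)$, and applying the first equality just proved (with the roles of $V_1,V_2$ interchanged) turns the right-hand side into $\inf V_2\preceq\inf V_1$. The supremum part is obtained the same way, using the second inequality of \eqref{eqn:cones_gal_mon} together with the second equality.

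I expect no real obstacle here: every ingredient is already in place, and the only thing to watch is the order-reversal bookkeeping. The cone maps are antitone, so each application of \eqref{eqn:cones_gal_mon} and each use of the single-element identities in \eqref{eqn:preceq_by_cones} swaps the two arguments; this is precisely what produces the reversed direction $\sup V_2\preceq\sup V_1$ (rather than $\sup V_1\preceq\sup V_2$) in the second identity of \eqref{eqn:inf_cones_rel}, and correspondingly $\inf V_2\preceq\inf V_1$ in the first inequality of \eqref{eqn:inf_sets_rel}.
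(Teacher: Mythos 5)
Your proof is correct and follows essentially the same route as the paper's: reduce cones of $V_i$ to cones of singletons via Lemma~\ref{lem:inf_single_1}, apply \eqref{eqn:preceq_by_cones}, and derive \eqref{eqn:inf_sets_rel} from \eqref{eqn:inf_cones_rel} via \eqref{eqn:cones_gal_mon}. The only quibble is in your closing remark: the lower-cone identity in \eqref{eqn:preceq_by_cones} does \emph{not} swap its arguments (only the upper-cone one does), but your actual computations handle this correctly.
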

\begin{proof}
By Lemma \ref{lem:inf_single_1} and \eqref{eqn:preceq_by_cones}, 
$S(\Lcone V_1,\Lcone V_2) = S(\Lcone\{\inf V_1\}, \Lcone\{\inf V_2\}) = \inf V_1\preceq\inf V_2$, proving the first part of \eqref{eqn:inf_cones_rel}. The second part is dual. \eqref{eqn:inf_sets_rel} follows from \eqref{eqn:inf_cones_rel} by \eqref{eqn:cones_gal_mon}.
\end{proof}

\begin{lemma}\label{lem:inf_two_elem}
The following holds for each $u,v\in U$:
\begin{align*}
\inf\{\deg{v\preceq u}u,v\} &= v, & \sup\{\deg{u\preceq v}u,v\}=v.
\end{align*}
\end{lemma}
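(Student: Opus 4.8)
The plan is to reduce both identities to transitivity of $\preceq$ by showing that each two‑element $\lat$-set has the same cone as the relevant singleton. For the infimum, write $V=\{\deg{v\preceq u}u,v\}$, so that $V(u)=(v\preceq u)$, $V(v)=1$, and $V(w)=0$ otherwise. Because in a completely lattice $\lat$-ordered set the infimum is determined by the lower cone (recall $\inf V=\max\Lcone V$), it suffices to prove $\Lcone V=\Lcone\{v\}$ and then transfer the infimum.

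First I would expand $\Lcone V$ from the definition. Since the summands with $V(x)=0$ contribute $0\to(\cdots)=1$, only $u$ and $v$ matter, and using $1\to a=a$ we get, for each $w\in U$,
\[
\Lcone V(w)=\bigl((v\preceq u)\to (w\preceq u)\bigr)\wedge (w\preceq v),
\]
while $\Lcone\{v\}(w)=w\preceq v$ by \eqref{eqn:single_cone_1}. The inclusion $\Lcone V\subseteq\Lcone\{v\}$ is read off from the second conjunct. The nontrivial inclusion $\Lcone\{v\}\subseteq\Lcone V$ amounts to $w\preceq v\leq (v\preceq u)\to (w\preceq u)$, which by adjointness is $(w\preceq v)\otimes (v\preceq u)\leq (w\preceq u)$, i.e.\ exactly transitivity of $\preceq$. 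Hence $\Lcone V=\Lcone\{v\}$.

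To finish, I would invoke Lemma \ref{lem:inf_single_1}: as $\latU$ is completely lattice $\lat$-ordered, $\inf V$ exists and $\Lcone V=\Lcone\{\inf V\}$, so $\Lcone\{\inf V\}=\Lcone\{v\}$. By \eqref{eqn:preceq_by_cones} this yields $\inf V\preceq v=1$ and $v\preceq\inf V=1$, hence $\inf V\approx v=1$ via \eqref{eqn:antisym_eq}, and therefore $\inf V=v$ since $\approx$ is an $\lat$-equality. The supremum claim is entirely dual: setting $W=\{\deg{u\preceq v}u,v\}$ one computes $\Ucone W(w)=\bigl((u\preceq v)\to (u\preceq w)\bigr)\wedge (v\preceq w)$, and the nontrivial inclusion $\Ucone\{v\}\subseteq\Ucone W$ reduces to $(u\preceq v)\otimes (v\preceq w)\leq (u\preceq w)$, again transitivity, giving $\Ucone W=\Ucone\{v\}$ and $\sup W=v$. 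I expect the only subtle point to be recognizing that the conjunct contributed by the element $u$ is absorbed precisely by transitivity, so that the cone collapses to that of $\{v\}$; everything else is routine.
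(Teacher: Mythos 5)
Your proof is correct and matches the paper's approach: the paper dismisses this lemma with ``by direct computation,'' and your argument --- computing $\Lcone V$ explicitly, collapsing the conjunct contributed by $u$ via transitivity to get $\Lcone V=\Lcone\{v\}$, and then identifying the infimum --- is exactly the direct computation intended (one could shortcut the final step by noting $\inf V=\max\Lcone V=\max\Lcone\{v\}=v$, but your route through Lemma~\ref{lem:inf_single_1} and \eqref{eqn:preceq_by_cones} is equally valid). Nothing further is needed.
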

\begin{proof}
By direct computation.
\end{proof}

\subsection{Isotone mappings of $\lat$-ordered sets}
We prove some basic properties of isotone mappings of $\lat$-ordered sets we will need later. The following lemma says that isotone mappings transform lower (upper) bounds of an $\lat$-set to lower (upper) bounds of its image.

\begin{lemma}
Let $f\!:U\to U'$ be an isotone mapping of $\lat$-ordered sets, $V\in L^U$. Then
\begin{align*}
f(\Lcone V)&\subseteq \Lcone f(V), & f(\Ucone V)&\subseteq \Ucone f(V).
\end{align*}
\end{lemma}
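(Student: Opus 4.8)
The plan is to unwind the definitions of the image $f(V)$ via Zadeh's extension principle and of the lower/upper cones, and then to verify the subsethood directly using isotonicity of $f$. I will prove the first inclusion $f(\Lcone V)\subseteq \Lcone f(V)$; the second is entirely dual. By \eqref{eq:S} it suffices to establish, for each $v'\in U'$, the pointwise inequality $f(\Lcone V)(v') \leq \Lcone f(V)(v')$. Spelling out the left-hand side via \eqref{eqn:image} gives $f(\Lcone V)(v') = \bigvee_{v\in U,\, f(v)=v'} \Lcone V(v)$, so it is enough to show $\Lcone V(v) \leq \Lcone f(V)(v')$ whenever $f(v)=v'$.

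First I would fix such a $v$ with $f(v)=v'$ and expand both cones by their defining formulas. On the left, $\Lcone V(v) = \bigwedge_{u\in U} V(u)\to(v\preceq_U u)$. On the right, since $\Lcone f(V)(v') = \bigwedge_{u'\in U'} f(V)(u')\to(v'\preceq_{U'} u')$ is an infimum, it suffices to bound $\Lcone V(v)$ below each term $f(V)(u')\to(v'\preceq_{U'} u')$. Expanding $f(V)(u')=\bigvee_{u\in U,\, f(u)=u'} V(u)$ and using the adjunction together with the fact that $a\to\bigwedge$ distributes, the term $f(V)(u')\to(v'\preceq_{U'}u')$ equals $\bigwedge_{u:\,f(u)=u'} \big(V(u)\to(v'\preceq_{U'}u')\big)$. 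So the goal reduces to showing $\Lcone V(v) \leq V(u)\to(v'\preceq_{U'}u')$ for every $u$ with $f(u)=u'$.

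The key step is then a single application of isotonicity. For each such $u$ we have $\Lcone V(v)\leq V(u)\to(v\preceq_U u)$ by definition of the lower cone, and by isotonicity $(v\preceq_U u)\leq (f(v)\preceq_{U'}f(u)) = (v'\preceq_{U'}u')$. Combining these through monotonicity of $\to$ in its second argument yields $\Lcone V(v)\leq V(u)\to(v'\preceq_{U'}u')$, which is exactly what is needed. Assembling the infima over $u'$ and the supremum over the fibre $f^{-1}(v')$ completes the argument.

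I do not expect a genuine obstacle here, since the statement is the fuzzy analogue of the elementary fact that monotone maps send lower bounds to lower bounds; the whole proof is ``by direct computation'' in the spirit of the earlier lemmas. The only point requiring mild care is the bookkeeping with suprema and infima across the fibres of $f$: one must check that the adjunction property $a\otimes b\le c \iff a\le b\to c$ and the distributivity of $\to$ over $\bigwedge$ are applied in the correct direction so that passing from the pointwise fibrewise inequality to the full cone inequality is valid. Once the reductions above are in place, isotonicity does all the work in one line.
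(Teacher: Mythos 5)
Your argument is correct and follows essentially the same route as the paper's proof: reduce to the pointwise inequality over the fibres of $f$, use the identity $\left(\bigvee_i a_i\right)\to b=\bigwedge_i(a_i\to b)$ to rewrite $f(V)(u')\to(v'\preceq u')$ fibrewise, and apply isotonicity of $f$ together with monotonicity of $\to$ in its second argument. The only cosmetic slip is attributing the rewriting step to ``$a\to\bigwedge$ distributes'' when the identity actually used is the antitone distributivity of $\to$ over $\bigvee$ in its first argument; the step itself is valid.
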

\begin{proof}
By definition of lower cone and isotony of $f$,
\begin{align*}
\Lcone V(v) &\leq V(u)\to(v\preceq u) \leq V(u)\to(f(v)\preceq f(u)),
\end{align*}
for each $u,v\in U$. Now let $u'\in U'$ and take infimum for all $u$ such that $f(u)=u'$ (in the case there is no such $u$, the infimum, as the infimum of empty set in $L$, is equal to
%K: tady bych to napsal nejak jinak
%M: je to spravne, dodal jsem "in $L$" do predchoziho a $\in L$ do nasledujiciho radku.
  $1\in L$):
\begin{align*}
\Lcone V(v) &\leq \bigwedge_{f(u)=u'}V(u)\to(f(v)\preceq f(u))
= \left(\bigvee_{f(u)=u'}V(u)\right)\to (f(v)\preceq f(u)) \\
&= f(V)(u')\to (f(v)\preceq u').
\end{align*}
Now denote $v'=f(v)$. The above inequality tells that the following holds for each $v\in U$ such that $f(v)=v'$ and $u'\in U'$:
\begin{align*}
\Lcone V(v) &\leq f(V)(u')\to (v'\preceq u').
\end{align*}
Thus,
\begin{align*}
f(\Lcone V)(v') &= \bigvee_{f(v)=v'}\Lcone V(v) \leq \bigwedge_{u'\in U'} f(V)(u')\to (v'\preceq u')
= \Lcone f(V)(v').
\end{align*}
This proves the first inclusion, the second one is similar.
\end{proof}

Let $v\in U$. As it has been said, $\Lcone V(v)$ is the degree to which $v$ is a lower bound of $V$. We have $\Lcone V(v)\leq f(\Lcone V)(f(v))$ and by the above lemma, $f(\Lcone V)(f(v))\leq \Lcone f(V)(f(v))$. This way the lemma tells that the degree to which $f(v)$ is a lower bound of $f(V)$ is greater than or equal to the degree to which $v$ is a lower bound of $V$ (and similarly for upper bounds). In the particular case  $\Lcone V(v) = 1$ (or $\Ucone V(v) = 1$) we obtain the following result:
\begin{corollary}\label{cor:isotone_image_bound}
In the setting of the previous lemma, if $\Lcone V(v)=1$, then $\Lcone f(V)(f(v))=1$ and if $\Ucone V(v)=1$, then $\Ucone f(V)(f(v))=1$. In words, if $v$ is a lower (upper) bound of $V$, then $f(v)$ is a lower (upper) bound of $f(V)$.
\end{corollary}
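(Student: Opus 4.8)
The plan is to observe that this corollary is an immediate specialization of the previous lemma, obtained by chaining two inequalities that are already essentially present in the discussion preceding the corollary. I would not prove anything from scratch; instead I would combine the monotonicity of Zadeh's extension with the inclusion furnished by the lemma, and then simply exploit that a truth degree bounded above by $1$ and below by $1$ must equal $1$.

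First I would fix $v\in U$ with $\Lcone V(v)=1$ and unfold $f(\Lcone V)(f(v))$ via the extension principle \eqref{eqn:image}: since $v$ itself satisfies $f(v)=f(v)$, the supremum $\bigvee_{f(w)=f(v)}\Lcone V(w)$ dominates the single term $\Lcone V(v)$, giving $\Lcone V(v)\le f(\Lcone V)(f(v))$. Next I would invoke the previous lemma, whose conclusion $f(\Lcone V)\subseteq\Lcone f(V)$ means precisely $f(\Lcone V)(f(v))\le\Lcone f(V)(f(v))$. Stringing these together yields $1=\Lcone V(v)\le f(\Lcone V)(f(v))\le\Lcone f(V)(f(v))\le 1$, so every term equals $1$; in particular $\Lcone f(V)(f(v))=1$, which is exactly the assertion that $f(v)$ is a lower bound of $f(V)$.

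The case of upper bounds is completely symmetric: replacing $\Lcone$ by $\Ucone$ throughout and using the second inclusion $f(\Ucone V)\subseteq\Ucone f(V)$ of the lemma gives $\Ucone f(V)(f(v))=1$ whenever $\Ucone V(v)=1$. There is essentially no obstacle here — the only point requiring a moment's care is the first inequality, which is nothing more than the remark that the image degree $f(\Lcone V)(f(v))$ is a supremum taken over all preimages of $f(v)$ and therefore is at least the contribution coming from $v$ itself; everything else is a direct consequence of the lemma.
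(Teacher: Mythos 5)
Your proof is correct and follows exactly the paper's argument: the paragraph preceding the corollary establishes the same chain $\Lcone V(v)\leq f(\Lcone V)(f(v))\leq \Lcone f(V)(f(v))$ (the first inequality from the extension principle, the second from the lemma) and the corollary is read off by setting $\Lcone V(v)=1$. Nothing to add.
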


\begin{lemma}\label{lem:iso_cones_subset}
Let $f,g\!:U\to U$ be two mappings such that for each $u\in U$, $f(u)\leq u$ and $g(u)\geq u$. Then for each $V\in L^U$,
\begin{align}
\Lcone f(V)&\subseteq\Lcone V, & \Ucone f(V)&\supseteq\Ucone V, \\
\Lcone g(V)&\supseteq\Lcone V, & \Ucone g(V)&\subseteq\Ucone V.
\end{align}
\end{lemma}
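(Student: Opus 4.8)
The plan is to prove all four inclusions by a single direct computation on the membership degrees, treating the first case in full and noting that the remaining three are completely analogous. I would start from $\Lcone f(V)\subseteq\Lcone V$, which by definition of graded subsethood amounts to showing $\Lcone f(V)(v)\leq\Lcone V(v)$ for every $v\in U$.

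First I would rewrite the cone of the image $f(V)$ in terms of $V$ itself. Unfolding the definition of the lower cone together with Zadeh's extension \eqref{eqn:image}, and using the residuated-lattice identity $\left(\bigvee_i a_i\right)\to b=\bigwedge_i(a_i\to b)$, I expect
\[
\Lcone f(V)(v)=\bigwedge_{y\in U}f(V)(y)\to(v\preceq y)=\bigwedge_{x\in U}V(x)\to(v\preceq f(x)),
\]
so that the image is eliminated at the cost of replacing $x$ by $f(x)$ inside the order relation. An entirely parallel computation gives $\Ucone f(V)(v)=\bigwedge_{x\in U}V(x)\to(f(x)\preceq v)$, and the corresponding formulas for $g$.

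The heart of the argument is then a single application of transitivity. Since $f(x)\leq x$ means $(f(x)\preceq x)=1$, transitivity of $\preceq$ yields $(v\preceq f(x))=(v\preceq f(x))\otimes(f(x)\preceq x)\leq(v\preceq x)$ for every $x$. Because $a\to(\cdot)$ is isotone in its second argument, this gives $V(x)\to(v\preceq f(x))\leq V(x)\to(v\preceq x)$ for each $x$, and taking the infimum over $x$ delivers $\Lcone f(V)(v)\leq\Lcone V(v)$. The three remaining inclusions follow the same pattern, each invoking transitivity once: $\Ucone f(V)\supseteq\Ucone V$ from $(x\preceq v)\leq(f(x)\preceq v)$, and the two $g$-inclusions from $x\leq g(x)$, i.e. $(x\preceq g(x))=1$, which give $(v\preceq x)\leq(v\preceq g(x))$ and $(g(x)\preceq v)\leq(x\preceq v)$ respectively.

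There is no serious obstacle here: once the image has been pushed inside the order relation, everything reduces to monotonicity of the order in one argument, which is exactly what transitivity against the degree-$1$ comparison $f(x)\preceq x$ (or $x\preceq g(x)$) provides. The only point needing a little care is the bookkeeping in the first display — correctly turning the supremum defining $f(V)(y)$ into an infimum of implications and re-indexing the double infimum $\bigwedge_y\bigwedge_{f(x)=y}$ as a single infimum $\bigwedge_x$ over $U$ (values $y$ outside the image of $f$ contribute $0\to(v\preceq y)=1$ and may be dropped). Note that, unlike the preceding lemma, no isotony of $f$ or $g$ is assumed, so I would deliberately avoid routing the argument through Corollary~\ref{cor:isotone_image_bound}.
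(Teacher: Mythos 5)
Your proposal is correct and follows essentially the same route as the paper: unfold $\Lcone f(V)(v)$ via the identity $\left(\bigvee_i a_i\right)\to b=\bigwedge_i(a_i\to b)$ to get $\bigwedge_{x\in U}V(x)\to(v\preceq f(x))$, then apply transitivity against the degree-$1$ comparison $f(x)\preceq x$ and monotonicity of $\to$ in its second argument. The re-indexing bookkeeping you flag is exactly the step the paper carries out, so nothing is missing.
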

\begin{proof}
We will prove the first inclusion only, the others being analogous. Let $v\in U$. From transitivity of $\preceq$ we have $(v\preceq f(u'))\leq (v\preceq u')$ for each $u'\in U$. Now,
\begin{align*}
\Lcone f(V)(v) &= \bigwedge_{u\in U} f(V)(u)\to (v\preceq u) = 
\bigwedge_{u\in U} \left(\bigvee_{f(u')=u}V(u')\right)\to (v\preceq u) \\
&= \bigwedge_{u\in U} \bigwedge_{f(u')=u}V(u')\to (v\preceq u)
= \bigwedge_{u'\in U}V(u')\to (v\preceq f(u')) \\
&\leq \bigwedge_{u'\in U}V(u')\to (v\preceq u') = \Lcone V(v), 
\end{align*}
proving the inclusion.
\end{proof}

In the last two lemmas we suppose $\latU$ and $\latU'$ are completely lattice $\lat$-ordered sets.

\begin{lemma}\label {lem:extensive_sup_inf}
Let $f,g$ be the same as in the previous lemma, $V\in L^U$. Then
\begin{align}
\inf f(V)&\leq \inf V, & \sup f(V)\leq \sup V, \\
\inf g(V)&\geq \inf V, & \sup g(V)\geq \sup V.
\end{align}
\end{lemma}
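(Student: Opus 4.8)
The plan is to derive all four inequalities purely from the cone inclusions established in Lemma~\ref{lem:iso_cones_subset} together with the identities~\eqref{eqn:inf_cones_rel}, which translate graded subsethood of cones directly into the $\preceq$-ordering of infima and suprema. The key observation is that an inclusion $A\subseteq B$ is the same as $S(A,B)=1$, so each cone inclusion yields the value $1$ on the left-hand side of the appropriate instance of~\eqref{eqn:inf_cones_rel}, which forces the corresponding $\preceq$-inequality to equal $1$, i.e.\ to hold in the $1$-cut ordering $\leq$.

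First I would prove $\inf f(V)\leq\inf V$. By Lemma~\ref{lem:iso_cones_subset} we have $\Lcone f(V)\subseteq\Lcone V$, hence $S(\Lcone f(V),\Lcone V)=1$. Applying the first identity of~\eqref{eqn:inf_cones_rel} with $V_1=f(V)$ and $V_2=V$ gives $\inf f(V)\preceq\inf V=1$, which is exactly $\inf f(V)\leq\inf V$. The inequality $\inf g(V)\geq\inf V$ is obtained in the same way from the inclusion $\Lcone V\subseteq\Lcone g(V)$, now reading the first identity with $V_1=V$, $V_2=g(V)$.

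For the suprema I would use the second identity of~\eqref{eqn:inf_cones_rel}, taking care that it reverses the order of its arguments. From $\Ucone V\subseteq\Ucone f(V)$ we get $S(\Ucone V,\Ucone f(V))=1$, and the second identity with $V_1=V$, $V_2=f(V)$ yields $\sup f(V)\preceq\sup V=1$, i.e.\ $\sup f(V)\leq\sup V$; the inclusion $\Ucone g(V)\subseteq\Ucone V$ analogously gives $\sup V\preceq\sup g(V)=1$, that is $\sup g(V)\geq\sup V$.

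I expect no serious obstacle, since the argument is a direct translation of inclusions into inequalities via the two building blocks above; the only point demanding care is the order reversal in the supremum half of~\eqref{eqn:inf_cones_rel}, which must be correctly matched with the correspondingly reversed direction of the $\Ucone$-inclusions furnished by Lemma~\ref{lem:iso_cones_subset}. (One should also note that all these infima and suprema exist, as $\latU$ and $\latU'$ are assumed completely lattice $\lat$-ordered, so that~\eqref{eqn:inf_cones_rel} is applicable throughout.)
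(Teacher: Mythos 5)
Your proof is correct and is exactly the argument the paper intends: its own proof reads simply ``Follows from Lemma~\ref{lem:iso_cones_subset} and \eqref{eqn:inf_cones_rel},'' and you have filled in precisely those steps, including the correct handling of the argument reversal in the supremum half of \eqref{eqn:inf_cones_rel}.
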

\begin{proof}
Follows from Lemma \ref{lem:iso_cones_subset} and \eqref{eqn:inf_cones_rel}.
\end{proof}

\begin{lemma}\label{lem:isotone_sup_inf}
Let $f\!:U\to U'$ be an isotone mapping, $V\in L^U$. Then
\begin{align}
f(\inf V)&\leq \inf f(V), & f(\sup V)&\geq \sup f(V).
\end{align}
\end{lemma}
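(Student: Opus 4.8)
The plan is to reduce both inequalities to the order-theoretic fact that an infimum is the greatest lower bound (and a supremum the least upper bound), transporting bounds across $f$ via Corollary~\ref{cor:isotone_image_bound}. I would prove $f(\inf V)\leq\inf f(V)$ in full and obtain $f(\sup V)\geq\sup f(V)$ by the dual argument.

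For the first inequality, I would start from the fact that $\inf V$ is a lower bound of $V$, i.e.\ $\Lcone V(\inf V)=1$. Corollary~\ref{cor:isotone_image_bound} then yields that $f(\inf V)$ is a lower bound of $f(V)$, that is $\Lcone f(V)(f(\inf V))=1$. On the other hand, $\inf f(V)$ is not only a lower bound of $f(V)$ but also an upper bound of the lower cone $\Lcone f(V)$; this is precisely the defining condition $\Ucone(\Lcone f(V))(\inf f(V))=1$ recalled just before Lemma~\ref{lem:inf_single_1}. Unfolding this via the definition of $\Ucone$ gives $\Lcone f(V)(w)\leq(w\preceq\inf f(V))$ for every $w\in U'$. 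Instantiating $w=f(\inf V)$ and combining with $\Lcone f(V)(f(\inf V))=1$ forces $(f(\inf V)\preceq\inf f(V))=1$, which is exactly $f(\inf V)\leq\inf f(V)$.

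The inequality $f(\sup V)\geq\sup f(V)$ follows dually: $\sup V$ is an upper bound of $V$, so $f(\sup V)$ is an upper bound of $f(V)$ by Corollary~\ref{cor:isotone_image_bound}, while $\sup f(V)$ is a lower bound of $\Ucone f(V)$; instantiating the corresponding inequality at $w=f(\sup V)$ gives $(\sup f(V)\preceq f(\sup V))=1$. I do not anticipate a genuine obstacle: isotony is used only through the already established Corollary~\ref{cor:isotone_image_bound}, and everything else is the greatest-lower-bound / least-upper-bound characterization of $\inf$ and $\sup$. The only point requiring care is bookkeeping, namely using the correct bound property of $\inf f(V)$ (upper bound of $\Lcone f(V)$) and of $\sup f(V)$ (lower bound of $\Ucone f(V)$) in each half.
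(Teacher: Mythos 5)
Your argument is correct and is essentially the paper's own proof: the paper also derives both inequalities directly from Corollary~\ref{cor:isotone_image_bound} by noting that $\inf V$ is a lower bound of $V$, hence $f(\inf V)$ is a lower bound of $f(V)$ and therefore at most $\inf f(V)$ (dually for suprema). You merely spell out the final step via $\Ucone(\Lcone f(V))(\inf f(V))=1$, which the paper leaves implicit.
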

\begin{proof}
Follows directly from Corollary \ref{cor:isotone_image_bound} (e.g., $\inf V$ is a lower bound of $V$, whence $f(\inf V)$ is a lower bound of $f(V)$ and hence is less than or equal to $\inf f(V)$).
\end{proof}

\subsection{Isotone $\lat$-Galois connections}

An \emph{isotone $\lat$-Galois connection} between $\lat$-ordered sets $\latU$ and $\latV$ \cite{GePo:Ndfc,Kon:ifgch}  is a pair $\tu{f,g}$, where $f\!:U\to V$, $g\!:V\to U$ are mappings such that for each $u\in U$, $v\in V$ it holds
\begin{align}\label{eqn:gal}
f(u)\preceq v = u\preceq g(v).
\end{align}
An isotone Galois connection between $\latU$ and $\latU$ is called simply an isotone Galois connection on $\latU$.

By isotone $\lat$-Galois connection \emph{between sets $X$ and $Y$} we understand an isotone $\lat$-Galois connection between completely lattice $\lat$-ordered sets $\lat^X$ and $\lat^Y$  \eqref{eqn:power_inf}.

Note that in \cite{GePo:Ndfc} and \cite{Kon:ifgch}, only isotone $\lat$-Galois connection between sets are considered. Thus, our approach is more general, but all results from \cite{GePo:Ndfc,Kon:ifgch} can be transferred more or less mechanically to our setting.  This is also the case in 
%K:
%Subsec.~
Section~\ref{subsec:clos_int}.

\begin{theorem}[basic properties of isotone $\lat$-Galois connections] \label{thm:bas_gal}
Let $\tu{f,g}$ be an isotone $\lat$-Galois connection between $\lat$-ordered sets $\latU$ and $\latV$. Then

(a) $u\leq g(f(u))$ for each $u\in U$, $f(g(v))\leq v$ for each $v\in V$.

(b) $f$ and $g$ are isotone.

(c) $f(g(f(u)))=f(u)$, $g(f(g(v)))=g(v)$.

(d) Let $\latU$ and $\latV$ be completely lattice $\lat$-ordered sets. For $U'\in L^U$ and $V'\in L^V$ we have 
\begin{align*}
f(\inf U')&\leq \inf f(U'), & f(\sup U')&= \sup f(U'), \\
g(\inf V')&= \inf g(V'), & g(\sup V')&\geq \sup g(V').
\end{align*}
\end{theorem}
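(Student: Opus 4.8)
The plan is to establish the four parts in sequence, each feeding the next, with the only genuinely non-routine content being the two supremum/infimum equalities of part~(d). For part~(a), I would just instantiate the graded Galois identity \eqref{eqn:gal}. Setting $v=f(u)$ gives $f(u)\preceq f(u)=u\preceq g(f(u))$, and since the left-hand side equals $1$ by reflexivity we obtain $u\preceq g(f(u))=1$, that is $u\leq g(f(u))$ in the order $\cut 1{\preceq}$. Dually, putting $u=g(v)$ yields $f(g(v))\preceq v=g(v)\preceq g(v)=1$, hence $f(g(v))\leq v$.

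For part~(b), to prove $f$ isotone I would combine transitivity of $\preceq$ with part~(a). By \eqref{eqn:gal} we have $f(u_1)\preceq f(u_2)=u_1\preceq g(f(u_2))$, and part~(a) gives $u_2\preceq g(f(u_2))=1$; transitivity then yields $(u_1\preceq u_2)\otimes 1\leq u_1\preceq g(f(u_2))=f(u_1)\preceq f(u_2)$, which is exactly isotony. The argument for $g$ is symmetric. Part~(c) is then immediate: applying part~(a) with $v=f(u)$ gives $f(g(f(u)))\leq f(u)$, while applying the now-isotone $f$ to $u\leq g(f(u))$ gives $f(u)\leq f(g(f(u)))$; antisymmetry of $\cut 1{\preceq}$ forces $f(g(f(u)))=f(u)$, and the dual computation handles $g(f(g(v)))=g(v)$.

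Part~(d) carries the weight. The two ``wrong-direction'' inequalities $f(\inf U')\leq\inf f(U')$ and $g(\sup V')\geq\sup g(V')$ fall out of Lemma~\ref{lem:isotone_sup_inf} once isotony from part~(b) is available. For the equality $f(\sup U')=\sup f(U')$ I would show that both sides relate to every $v\in V$ in the same way, i.e.\ $(f(\sup U')\preceq v)=(\sup f(U')\preceq v)$ for all $v\in V$; taking $v$ to be each side in turn and using antisymmetry then gives the equality. To compute the left side I would rewrite $f(\sup U')\preceq v=\sup U'\preceq g(v)$ by \eqref{eqn:gal}, identify $\sup U'\preceq g(v)=\Ucone U'(g(v))=\bigwedge_{u\in U}U'(u)\to(u\preceq g(v))$ via Lemma~\ref{lem:inf_single_1} and \eqref{eqn:single_cone_1}, and apply \eqref{eqn:gal} once more to replace $u\preceq g(v)$ by $f(u)\preceq v$. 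For the right side I would likewise write $\sup f(U')\preceq v=\Ucone f(U')(v)=\bigwedge_{w\in V}f(U')(w)\to(w\preceq v)$, expand $f(U')(w)=\bigvee_{f(u)=w}U'(u)$ by Zadeh's principle, and collapse the double infimum using the residuation law $(\bigvee_i a_i)\to b=\bigwedge_i(a_i\to b)$ to reach $\bigwedge_{u\in U}U'(u)\to(f(u)\preceq v)$. The two expressions coincide, and $g(\inf V')=\inf g(V')$ is dual.

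The main obstacle is precisely this equality $f(\sup U')=\sup f(U')$: in contrast to the inequalities it requires the full graded identity \eqref{eqn:gal}, not merely its $1$-cut, and the crux is recognizing that the Galois-rewritten upper cone $\Ucone U'(g(v))=\bigwedge_{u}U'(u)\to(f(u)\preceq v)$ matches the expanded upper cone of the image $f(U')$ term by term. Once the cones are reformulated through \eqref{eqn:gal} and the join-residuation law, the identification is purely formal and the antisymmetry step closes the argument.
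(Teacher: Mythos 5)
Your parts (a)--(c) and the two inequalities of part (d) follow the paper's proof essentially verbatim (instantiate \eqref{eqn:gal} at $v=f(u)$ and $u=g(v)$, pad $(u_1\preceq u_2)$ with $\otimes\,(u_2\preceq g(f(u_2)))$ and use transitivity, then invoke Lemma~\ref{lem:isotone_sup_inf}). Where you genuinely diverge is the key equality $f(\sup U')=\sup f(U')$ (and its dual). The paper gets the missing inequality $f(\sup U')\leq\sup f(U')$ abstractly: since $g\circ f$ is extensive by (a), the fourth inequality of Lemma~\ref{lem:extensive_sup_inf} gives $\sup U'\leq \sup g(f(U'))$, the already-proved $g(\sup V')\geq\sup g(V')$ pushes this to $\sup U'\leq g(\sup f(U'))$, and the $1$-cut of \eqref{eqn:gal} converts that into $f(\sup U')\leq\sup f(U')$. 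You instead compute both sides' upper cones explicitly: $\sup U'\preceq g(v)=\Ucone U'(g(v))=\bigwedge_u U'(u)\to(f(u)\preceq v)$ via Lemma~\ref{lem:inf_single_1}, \eqref{eqn:single_cone_1} and the graded identity \eqref{eqn:gal}, and match this term by term with $\Ucone f(U')(v)$ after expanding $f(U')(w)=\bigvee_{f(u)=w}U'(u)$ and applying $(\bigvee_i a_i)\to b=\bigwedge_i(a_i\to b)$; antisymmetry at the two specializations $v=f(\sup U')$ and $v=\sup f(U')$ then closes the argument. Both routes are correct. The paper's is shorter and reuses its machinery on extensive maps; yours is more self-contained, avoids Lemma~\ref{lem:extensive_sup_inf} entirely, and actually proves the stronger graded statement $f(\sup U')\preceq v=\sup f(U')\preceq v$ for all $v$, making visible exactly where the full graded adjunction (rather than just its $1$-cut) is used.
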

\begin{proof}
%K: tady mi neni jasne to s tou definici
%M: zmeneno na \eqref{eqn:gal}
(a)~By 
%definition
\eqref{eqn:gal}, from $f(u)\leq f(u)$ it follows $u\leq g(f(u))$ and from $g(v)\leq g(v)$ it follows $f(g(v))\leq v$.

(b)~By (a), $u_2\leq g(f(u_2))$. Thus, by transitivity, $(u_1\preceq u_2) = (u_1\preceq u_2)\otimes 1 = (u_1\preceq u_2)\otimes (u_2\preceq g(f(u_2)))  \leq (u_1\preceq g(f(u_2)))
= (f(u_1)\preceq f(u_2))$. Similarly for $g$.

(c)~By (a), $f(g(f(u)))\leq f(u)$. The opposite inequality is proved by (b) and (a): $1 = u\preceq g(f(u)) \leq f(u) \preceq f(g(f(u)))$. Similarly the second equality.

(d)~The inequalities 
$f(\inf U')\leq \inf f(U')$, $f(\sup U')\geq \sup f(U')$, $g(\inf V')\leq \inf g(V')$, $g(\sup V')\geq \sup g(V')$
follow from (b) and Lem\-ma \ref{lem:isotone_sup_inf}. 
By (a), the fourth inequality of Lemma \ref{lem:extensive_sup_inf}, and the inequality $g(\sup V')\geq \sup g(V')$ we have already proved,
\begin{align*}
\sup U' \leq \sup g(f(U')) \leq g(\sup f(U')).
\end{align*}
Now by definition, $f(\sup U') \leq \sup f(U')$. The remaining inequality, namely $g(\inf V')\geq \inf g(V')$, is proved similarly.
\end{proof}

Let $\tu{f,g}$ be
an isotone $\lat$-Galois connection between $\latU$ and $\latV$.
A pair $\tu{u,v}$, where $u\in U$ and $v\in V$, is called a \emph{fixpoint of $\tu{f,g}$} if $f(u)=v$ and $g(v)=u$. 

Suppose $\tu{u_1,v_1}$, $\tu{u_2,v_2}$ are two fixpoints of $\tu{f,g}$. We have by \eqref{eqn:gal},
\begin{align*}
u_1\preceq u_2 &= u_1\preceq g(v_2) = f(u_1)\preceq v_2 = v_1\preceq v_2
\end{align*}
and by \eqref{eqn:antisym_eq},
\begin{align*}
u_1\approx u_2 &=  v_1\approx v_2.
\end{align*}

We denote the set of all fixpoints of $\tu{f,g}$ by $\Fixfg$. For $\lat$-relations $\approx_{\Fixfg}$ and $\preceq_{\Fixfg}$ defined on $\Fixfg$ by
\begin{align}
\tu{u_1,v_1} \approx_{\Fixfg} \tu{u_2,v_2} &= u_1\approx u_2
\quad \text ( {= v_1\approx v_2}\text ), \\
\tu{u_1,v_1} \preceq_{\Fixfg} \tu{u_2,v_2} &= u_1\preceq u_2
\quad \text ( {= v_1\preceq v_2}\text ),
\end{align}
we obtain an $\lat$-ordered set $\Fixfg=\tu{\tu{\Fixfg,\approx_{\Fixfg}},\preceq_{\Fixfg}}$. In the rest of the paper, we will usually write $\approx$ instead of $\approx_{\Fixfg}$ and $\preceq$ instead of $\preceq_{\Fixfg}$.

We denote the set of all isotone Galois connections between $\lat$-ordered sets $\latU$ and $\latV$ by $\igal(\latU,\latV)$ and consider the following binary $\lat$-relations $\approx_{\igal(\latU,\latV)}$, $\preceq_{\igal(\latU,\latV)}$ on $\igal(\latU,\latV)$:
\begin{align}
\tu{f_1,g_1}\approx_{\igal(\latU,\latV)}\tu{f_2,g_2} 
&= \bigwedge_{u\in U}(f_2(u)\approx f_1(u)) \wedge \bigwedge_{v\in V}(g_1(v)\approx g_2(v)), \\
\tu{f_1,g_1}\preceq_{\igal(\latU,\latV)}\tu{f_2,g_2} 
&= \bigwedge_{u\in U}(f_2(u)\preceq f_1(u)) \wedge \bigwedge_{v\in V}(g_1(v)\preceq g_2(v)).\label{eqn:igal_preceq}
\end{align}
\begin{lemma}
$\tu{\tu{\igal(\latU,\latV),\approx_{\igal(\latU,\latV)}},\preceq_{\igal(\latU,\latV)}}$ is an $\lat$-ordered set.
\end{lemma}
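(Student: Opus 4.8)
The plan is to lift the four defining axioms of an $\lat$-order pointwise from $\latU$ and $\latV$, after one structural observation that disposes of most of the work: the relation $\approx_{\igal(\latU,\latV)}$ is precisely the symmetric part of $\preceq_{\igal(\latU,\latV)}$. Distributing $\wedge$ across the two big meets and pairing the $f$-terms with the $f$-terms and the $g$-terms with the $g$-terms, one computes
\begin{align*}
&(\tu{f_1,g_1}\preceq_{\igal(\latU,\latV)}\tu{f_2,g_2}) \wedge (\tu{f_2,g_2}\preceq_{\igal(\latU,\latV)}\tu{f_1,g_1}) \\
&\quad = \bigwedge_{u\in U}\big((f_2(u)\preceq f_1(u))\wedge(f_1(u)\preceq f_2(u))\big)
\wedge \bigwedge_{v\in V}\big((g_1(v)\preceq g_2(v))\wedge(g_2(v)\preceq g_1(v))\big),
\end{align*}
and by the antisymmetry identity \eqref{eqn:antisym_eq} the bracketed terms equal $f_1(u)\approx f_2(u)$ and $g_1(v)\approx g_2(v)$, so the right-hand side is exactly $\tu{f_1,g_1}\approx_{\igal(\latU,\latV)}\tu{f_2,g_2}$. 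This already yields antisymmetry (with equality, not merely $\leq$), and it lets us treat the equality as the symmetric part of the order.

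Next I would check that $\preceq_{\igal(\latU,\latV)}$ is reflexive and transitive. Reflexivity is immediate, since each pointwise term $f(u)\preceq f(u)$ and $g(v)\preceq g(v)$ equals $1$ by reflexivity of $\preceq$ on $\latU$ and $\latV$. For transitivity I would use the standard inequality that the $\otimes$-product of two infima lies below the infimum of the termwise $\otimes$-products, pushing the $\otimes$ inside the big meets, and then apply transitivity of $\preceq$ on $\latU$ and $\latV$ pointwise. Here one must keep the contravariant orientation of the $f$-component consistent: the chain $f_3(u)\preceq f_2(u)$ and $f_2(u)\preceq f_1(u)$ composes to $f_3(u)\preceq f_1(u)$, which is exactly the required $f$-term, while the $g$-component composes in the direct orientation.

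It then remains to see that $\approx_{\igal(\latU,\latV)}$ is an $\lat$-equality and that the order $\preceq_{\igal(\latU,\latV)}$ is compatible with it. Both are consequences of the general fact that the symmetric part of any reflexive transitive $\lat$-relation is an $\lat$-equivalence compatible with that relation: reflexivity and symmetry of the symmetric part are trivial, while its transitivity and the desired compatibility follow from the transitivity just shown together with $a\wedge b\leq a$ and monotonicity of $\otimes$. That $\approx_{\igal(\latU,\latV)}$ is moreover an $\lat$-equality, value $1$ forcing identity of the arguments, follows because its value being $1$ forces every term $f_1(u)\approx f_2(u)$ and $g_1(v)\approx g_2(v)$ to equal $1$, hence $f_1=f_2$ and $g_1=g_2$ by separation of $\approx$ on $\latU$ and $\latV$.

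I expect no genuine obstacle: the statement is a routine pointwise transfer of the $\lat$-order axioms. The only points demanding mild care are maintaining the reversed orientation of the $f$-component uniformly throughout, and using the inequality relating $\otimes$ with infima (since $\otimes$ need not distribute over $\bigwedge$ with equality) wherever $\otimes$ meets a big meet, that is, in the verification of transitivity and of compatibility.
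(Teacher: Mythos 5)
Your argument is correct and complete: the identification of $\approx_{\igal(\latU,\latV)}$ as the symmetric part of $\preceq_{\igal(\latU,\latV)}$ via \eqref{eqn:antisym_eq}, the termwise verification of reflexivity and transitivity (with the contravariant $f$-component handled correctly), and the derivation of the $\lat$-equality and compatibility properties from the symmetric-part observation are exactly the routine pointwise checks intended. The paper's own proof is just the word ``Straightforward,'' so your proposal supplies precisely the omitted details along the expected lines.
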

\begin{proof}
Straightforward.
\end{proof}

For an $\lat$-ordered set $\latU$, an isotone $\lat$-Galois connection $\tu{f,g}$ on $\latU$ is called \emph{extensive} if
\begin{align}\label{eqn:conn_cond}
f(u)\leq u\quad\text{and}\quad g(u)\geq u
\end{align}
for each $u\in U$. The set of all extensive isotone $\lat$-Galois connections on $\latU$ is denoted $\eigal(\latU)$.

Notice that if one of the conditions \eqref{eqn:conn_cond} holds true, then
the second one follows by \eqref{eqn:gal}.

\subsection{$\lat$-closure and $\lat$-interior operators}
\label{subsec:clos_int}
Here we recall very briefly basic definitions and results on $\lat$-closure and $\lat$-interior operators. More details can be found in \cite{Bel:Lfo,Bel:Clofl,GePo:Ndfc}.

For an $\lat$-ordered set $\latU$, a mapping $C\!:U\to U$ is called an \emph{$\lat$-closure operator}, if the following holds for each $u,u_1,u_2\in U$:
\begin{align}
C(u)&\geq u, \\
C(C(u)) &= C(u), \\
u_1\preceq u_2 &\leq C(u_1)\preceq C(u_2). 
\end{align}
A mapping $I\!:U\to U$ is called an \emph{$\lat$-interior operator}, if for each $u,u_1,u_2\in U$,
\begin{align}
I(u)&\leq u, \\
I(I(u)) &= I(u), \\
u_1\preceq u_2 &\leq I(u_1)\preceq I(u_2). 
\end{align}
By $\lat$-closure (resp.~$\lat$-interior) operator \emph{on a set $X$} we mean an $\lat$-closure (resp.~$\lat$-interior) operator on the completely lattice $\lat$-ordered set 
%K: $L^X$ -> $\lat^X$
$\lat^X$ \eqref{eqn:power_inf}.

An element $u\in U$ is a \emph{fixpoint of $C$} (resp.~\emph{fixpoint of $I$}), if $C(u)=u$ (resp.~$I(u)=u$). The set of all fixpoints of $C$ (resp.~$I$) will be denoted $\Fix_C$ (resp.~$\Fix_I$). The sets $\Fix_C$ and $\Fix_I$ inherit a structure of an $\lat$-ordered set from $\latU$. Considering $\Fix_C$ and $\Fix_I$ with this structure we have the following result:
\begin{theorem}
Let $\latU$ be a completely lattice $\lat$-ordered set. Then $\Fix_C$ is closed w.r.t.~arbitrary infima (i.e. for any $\lat$-set $V\in L^U$, $V\subseteq \Fix_C$, we have $\inf V\in \Fix_C$) and $\Fix_I$ is closed w.r.t.~arbitrary suprema (i.e. for any $\lat$-set $V\in L^U$, $V\subseteq \Fix_I$, we have $\sup V\in \Fix_I$). Consequently, $\Fix_C$ and $\Fix_I$ are completely lattice $\lat$-ordered sets.
\end{theorem}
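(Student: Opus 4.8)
The plan is to prove that $\Fix_C$ is closed under arbitrary infima; the assertion for $\Fix_I$ and suprema is perfectly dual, and the closing ``completely lattice'' claim will then follow from a general remark already recorded in the text. So I would fix $V\in L^U$ with $V\subseteq\Fix_C$ and put $w=\inf V$, which exists since $\latU$ is completely lattice $\lat$-ordered; the goal is $C(w)=w$. As $C$ is an $\lat$-closure operator, extensivity gives $(w\preceq C(w))=1$, so the whole proof reduces to the reverse inequality $(C(w)\preceq w)=1$: antisymmetry \eqref{eqn:antisym_eq} then yields $w\approx C(w)=1$, hence $w=C(w)$ because $\approx$ is an $\lat$-equality, i.e. $\inf V\in\Fix_C$.

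The crux is to see that $C(w)$ is still a lower bound of $V$. First I would observe that $V\subseteq\Fix_C$ forces $C$ to act as the identity on the support of $V$, so that the image $C(V)$ in the sense of Zadeh's extension \eqref{eqn:image} coincides with $V$ itself ($C(V)=V$): for every $u$ with $V(u)>0$ we have $C(u)=u$, and unwinding \eqref{eqn:image} gives back $V$. Since $w=\inf V$ is a lower bound of $V$, Corollary~\ref{cor:isotone_image_bound} applied to the isotone map $C\colon U\to U$ shows $C(w)$ is a lower bound of $C(V)=V$, i.e. $\Lcone V(C(w))=1$. On the other hand, by the definition of infimum $w$ is an upper bound of the lower cone, $\Ucone\Lcone V(w)=1$, which unfolds to $\Lcone V(w')\le(w'\preceq w)$ for all $w'$; taking $w'=C(w)$ gives $(C(w)\preceq w)=1$, as required. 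Replacing $\Lcone,\inf,C,\le$ by $\Ucone,\sup,I,\ge$ throughout gives the dual statement for $\Fix_I$.

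For the consequence, I would note that infima in $\Fix_C$ under the inherited $\lat$-order are just the ambient infima: for $V\in L^{\Fix_C}$, its zero-extension to $U$ has its $U$-infimum inside $\Fix_C$ by the above, and this element witnesses the infimum relative to $\Fix_C$ since the relevant cones are restrictions of the ambient ones. Thus every $\lat$-set in $\Fix_C$ has an infimum, and by the remark following \eqref{eqn:power_inf}---existence of all infima (or of all suprema) already suffices for the completely lattice $\lat$-order property---$\Fix_C$ is completely lattice $\lat$-ordered; dually for $\Fix_I$. The main obstacle is the identity $C(V)=V$ in the second paragraph: without it Corollary~\ref{cor:isotone_image_bound} would only produce a lower bound of $C(V)$, and it is precisely the hypothesis $V\subseteq\Fix_C$ that upgrades this to a lower bound of $V$---the graded counterpart of the classical fact that the closure of an infimum of closed elements equals that infimum.
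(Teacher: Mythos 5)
Your argument is correct. Note that the paper itself states this theorem without proof, merely recalling it from the cited literature on $\lat$-closure operators, so there is no in-text argument to compare against; judged on its own, your proof is a valid, self-contained derivation from the paper's own toolkit. The two points where care is genuinely needed are both handled properly: (i) the identity $C(V)=V$ for the Zadeh image, which holds because every $x$ with $V(x)>0$ lies in $\Fix_C$, so the supremum in \eqref{eqn:image} collapses to the single term $x=y$ (and both sides vanish when $C(y)\neq y$); this is exactly what lets Corollary~\ref{cor:isotone_image_bound} turn ``$C(w)$ is a lower bound of $C(V)$'' into ``$C(w)$ is a lower bound of $V$'', after which $\Ucone\Lcone V(w)=1$ and antisymmetry \eqref{eqn:antisym_eq} finish the job; and (ii) the passage from ``closed under infima'' to ``completely lattice $\lat$-ordered'', where one must check that the ambient infimum of the zero-extension also satisfies the defining conditions of the infimum computed with cones relativized to $\Fix_C$ --- the lower cone restricts exactly, and the upper cone of the lower cone can only increase when the quantification shrinks to $\Fix_C$, so both conditions remain equal to $1$. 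With that, the remark following \eqref{eqn:power_inf} (existence of all infima suffices) applies, and the dual argument handles $\Fix_I$.
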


A subset $V\subseteq U$ which is closed w.r.t.~arbitrary infima (resp.~suprema) is called an \emph{$\lat$-closure} (resp.~\emph{$\lat$-interior}) \emph{system in $\latU$}. The above theorem says that $\Fix_C$ (resp.~$\Fix_I$) is an $\lat$-closure (resp.~$\lat$-interior) system in $\latU$. In the case $\latU=\lat^X$ for some set $X$ we also talk about $\lat$-closure (resp.~$\lat$-interior) system \emph{in $X$}.

Let $\tu{f,g}$ be an isotone $\lat$-Galois connection on $\latU$. From Theorem \ref{thm:bas_gal} it easily follows that the composition $C$, given by $C(u) = g(f(u))$ is an $\lat$-closure operator on $U$ and the composition $I$,  $I(v) = f(g(v))$ is an $\lat$-interior operator on $V$.

We have the following result for the $\lat$-ordered sets of fixpoints of these operators and of the $\lat$-Galois connection $\tu{f,g}$ itself:

\begin{theorem}
Let $\latU$ be a completely lattice $\lat$-ordered set. Then the $\lat$-ordered sets $\Fixfg$, $\Fix_C$, $\Fix_I$ are isomorphic. Consequently, $\Fixfg$ is a completely lattice $\lat$-ordered set. The isomorphism $\Fixfg\to \Fix_C$ is given by $\tu{u,v}\to u$ and the isomorphism $\Fixfg\to \Fix_I$ is given by $\tu{u,v}\to v$.
\end{theorem}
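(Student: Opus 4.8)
The plan is to exhibit explicit maps between the three $\lat$-ordered sets $\Fixfg$, $\Fix_C$, $\Fix_I$ and verify they are mutually inverse isomorphisms. Recall $C(u)=g(f(u))$ is an $\lat$-closure operator and $I(v)=f(g(v))$ is an $\lat$-interior operator. I would define $\varphi\!:\Fixfg\to\Fix_C$ by $\tu{u,v}\mapsto u$ and $\psi\!:\Fixfg\to\Fix_I$ by $\tu{u,v}\mapsto v$, and show each is a bijection preserving the $\lat$-order exactly. Since the $\lat$-order on $\Fixfg$ was defined by $\tu{u_1,v_1}\preceq\tu{u_2,v_2}=(u_1\preceq u_2)$ and was already shown to equal $(v_1\preceq v_2)$, the order-preservation part of $\varphi$ and $\psi$ will be essentially by definition; the real work is establishing that these are well-defined bijections onto the fixpoint sets.

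First I would check that $\varphi$ lands in $\Fix_C$: if $\tu{u,v}$ is a fixpoint, then $f(u)=v$ and $g(v)=u$, so $C(u)=g(f(u))=g(v)=u$, hence $u\in\Fix_C$. To see $\varphi$ is surjective, take any $u\in\Fix_C$, set $v=f(u)$, and check $\tu{u,v}$ is a fixpoint of $\tu{f,g}$: we have $f(u)=v$ by construction and $g(v)=g(f(u))=C(u)=u$, so $\tu{u,v}\in\Fixfg$ and $\varphi\tu{u,v}=u$. For injectivity, note that a fixpoint is determined by its first component, since $\tu{u,v}\in\Fixfg$ forces $v=f(u)$; thus if $\varphi\tu{u_1,v_1}=\varphi\tu{u_2,v_2}$ then $u_1=u_2$ and hence $v_1=f(u_1)=f(u_2)=v_2$. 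The argument for $\psi$ and $\Fix_I$ is entirely dual, using $v=\sup$-side data: given $v\in\Fix_I$ set $u=g(v)$ and verify $f(u)=f(g(v))=I(v)=v$.

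For the order-isomorphism property I would invoke the computation already recorded in the excerpt: between two fixpoints $\tu{u_1,v_1}$, $\tu{u_2,v_2}$ one has $u_1\preceq u_2=v_1\preceq v_2$ via \eqref{eqn:gal}. Hence $\tu{u_1,v_1}\preceq_{\Fixfg}\tu{u_2,v_2}=(u_1\preceq u_2)=(\varphi\tu{u_1,v_1}\preceq_{\Fix_C}\varphi\tu{u_2,v_2})$, and likewise for $\psi$, so both maps preserve the $\lat$-order with equality, not merely inequality; combined with bijectivity this gives isomorphisms. The same identity shows the $\lat$-equalities agree, so $\approx$ is preserved as well. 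Composing $\psi\circ\varphi^{-1}$ then gives the isomorphism $\Fix_C\to\Fix_I$, completing the triangle. Finally, since the previous theorem guarantees $\Fix_C$ (and $\Fix_I$) is a completely lattice $\lat$-ordered set, isomorphism transfers this property to $\Fixfg$.

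The only genuinely delicate point I anticipate is confirming well-definedness of the inverse maps, i.e. that the element reconstructed from a single component really is a fixpoint of $\tu{f,g}$; this rests precisely on the idempotency identities $f(g(f(u)))=f(u)$ and $g(f(g(v)))=g(v)$ from Theorem~\ref{thm:bas_gal}(c), which ensure that starting from $u\in\Fix_C$ the pair $\tu{u,f(u)}$ closes up correctly. Everything else is a direct consequence of the adjointness \eqref{eqn:gal} and the definitions of $C$ and $I$, so I expect no serious obstacle beyond careful bookkeeping of which component determines a fixpoint.
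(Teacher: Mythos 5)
Your proof is correct: the paper actually states this theorem without proof (deferring to results transferred from the cited literature on isotone fuzzy Galois connections), and your argument --- well-definedness and bijectivity of $\tu{u,v}\mapsto u$ via $C(u)=g(f(u))=u$ and the idempotency identities of Theorem~\ref{thm:bas_gal}(c), plus exact order-preservation from $u_1\preceq u_2 = v_1\preceq v_2$ on fixpoints --- is precisely the standard argument the authors implicitly rely on. No gaps; the completeness of $\Fixfg$ then transfers through the isomorphism from the preceding theorem on $\Fix_C$ and $\Fix_I$, as you say.
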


\section{Power structures of $\lat$-ordered sets}\label{Sec:power}
Power structure \cite{Bri:Ps} is an algebraic structure constructed by ``lifting'' operations and relations on a (ordinary) set to its power set, i.e. the set of all its (ordinary) subsets. The theory goes back to Frobenius and recently \cite{Geo:Fps} has been generalized to a fuzzy setting.

In this section, we recall basic definitions and results from \cite{Geo:Fps} to the extent we need in this paper. We also show some results from \cite{Bel:FRS,BosMad:Ocfpr} on fuzzy power structures. Then we prove some properties of power structures of fuzzy ordered sets  we will need for the main result of this paper.

Note that in  \cite{Geo:Fps}, fuzzy power structures are studied under the framework of continuous t-norms; generalizing results we use in this paper to complete residuated lattices is straightforward. 

Let $R$ be a binary $\lat$-relation on a set $X$. We set for any $\lat$-sets $A,B\in L^X$
\begin{align}
R^\to(A,B) &= S(A,R\circ B) = \bigwedge_{x\in X}\left(A(x)\to \bigvee_{y\in X}R(x,y)\otimes B(y)\right), \\
R^\from(A,B) &= 
(R^{-1})^\to(B,A)
= S(B,R^{-1}\circ A) = S(B, A\circ R) \nonumber \\
&=  \bigwedge_{y\in X}\left(B(y)\to \bigvee_{x\in X}R(x,y)\otimes A(x)\right).
\end{align}
Since $S(A,R\circ B)$ is the degree to which $A$ is a subset of $R\circ B$, $R^\to(A,B)$ can be viewed as the degree to which each element of $A$ is related to an element of $B$.
We set
\begin{align}
R^+(A,B) &= R^\to(A,B)\wedge R^{\from}(A,B),
\end{align}
obtaining a binary $\lat$-relation, called \emph{power $\lat$-relation}, $R^+$ on the set $L^X$.
In the following, we prove some basic properties of the power $\lat$-relation $R^{+}$ for $R$ being a binary $\lat$-relation on a set $X$ and later on an $\lat$-ordered set $\tu{\tu{U,\approx},\preceq}$.

The following result is straightforward and has been proved in \cite[Theorem 4.41]{Bel:FRS}.
\begin{lemma}\label{lem:ext_basic}
For any binary $\lat$-relation   $R\in L^X$ it holds
\begin{enumerate}
\item if $R$ is reflexive, then so is $R^{+}$,
\item if $R$ is symmetric, then so is $R^{+}$,
\item if $R$ is transitive, then so is $R^{+}$.
\end{enumerate}
\end{lemma}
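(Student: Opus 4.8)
The plan is to verify each of the three preservation properties directly from the definition $R^+(A,B) = R^\to(A,B)\wedge R^\from(A,B)$, working entirely at the level of the $\lat$-set operations. Since $R^\from(A,B) = (R^{-1})^\to(B,A)$, the three statements for $R^\to$ and their mirror images for $R^\from$ combine automatically, so it suffices to establish the relevant inequality for $R^\to$ and then invoke the symmetry of the definition in $R$ versus $R^{-1}$. Throughout I would lean on the adjointness property (iii) and on the standard inequalities for graded subsethood $S$ and for composition $\circ$ that the paper cites as well-known facts about residuated lattices.

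For reflexivity, I would start from the observation that $R$ reflexive means $R(x,x)=1$, and therefore $R\circ A \supseteq A$ pointwise, since $(R\circ A)(x) = \bigvee_y R(x,y)\otimes A(y) \geq R(x,x)\otimes A(x) = A(x)$. Hence $R^\to(A,A) = S(A, R\circ A) = 1$, and dually $R^\from(A,A)=1$, giving $R^+(A,A)=1$. For symmetry, the key is that $R(x,y)=R(y,x)$ gives $R^{-1}=R$, so that $R^\from(A,B) = (R^{-1})^\to(B,A) = R^\to(B,A)$; reading off the definition, $R^+(A,B) = R^\to(A,B)\wedge R^\to(B,A)$, which is visibly symmetric in $A$ and $B$.

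The transitivity clause is the main obstacle, and I expect it to require the bulk of the work. The goal is $R^+(A,B)\otimes R^+(B,C) \leq R^+(A,C)$, and since $\wedge$ distributes appropriately it reduces to showing $R^\to(A,B)\otimes R^\to(B,C)\leq R^\to(A,C)$ (the $R^\from$ half following by the $R^{-1}$ symmetry). Unwinding, $R^\to(A,B) = S(A, R\circ B)$ and $R^\to(B,C) = S(B, R\circ C)$. The natural route is a monotonicity-plus-composition argument: from $S(B, R\circ C)$ I would derive $S(R\circ B, R\circ(R\circ C))$ using the fact that composing on the left by $R$ is isotone with respect to $S$, then chain $S(A, R\circ B)\otimes S(R\circ B, R\circ R\circ C)\leq S(A, R\circ R\circ C)$ by transitivity of $S$ under $\otimes$. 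Finally transitivity of $R$, in the form $R\circ R\subseteq R$ (which is exactly the relational restatement of $R(x,y)\otimes R(y,z)\leq R(x,z)$), together with monotonicity of $\circ$ and of $S$ in its second argument, yields $S(A, R\circ R\circ C)\leq S(A, R\circ C) = R^\to(A,C)$. The delicate point to get right is the bookkeeping of which $S$-monotonicity and which $\otimes$-subadditivity inequality is applied at each step, and ensuring the left-composition monotonicity $S(B,D)\leq S(R\circ B, R\circ D)$ is stated correctly; this is where a careless application of adjointness could slip.

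Since the lemma is cited as already proved in \cite[Theorem 4.41]{Bel:FRS}, the cleanest presentation may simply be to note that the argument is the routine computation sketched above and defer the full calculation to that reference, but the self-contained version following the three-step plan is short enough to include directly.
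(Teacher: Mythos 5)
Your proposal is correct, but there is nothing in the paper to compare it against: the paper does not prove Lemma~\ref{lem:ext_basic} at all, stating only that it ``is straightforward and has been proved in \cite[Theorem 4.41]{Bel:FRS}.'' Your self-contained argument fills that gap and all three parts check out. Reflexivity: $R(x,x)=1$ gives $A\subseteq R\circ A$ and $A\subseteq A\circ R$, hence $R^\to(A,A)=R^\from(A,A)=1$. Symmetry: $R=R^{-1}$ collapses $R^\from(A,B)$ to $R^\to(B,A)$, making $R^+$ visibly symmetric. For transitivity, your chain is the right one: the reduction $R^+(A,B)\otimes R^+(B,C)\leq\bigl(R^\to(A,B)\otimes R^\to(B,C)\bigr)\wedge\bigl(R^\from(A,B)\otimes R^\from(B,C)\bigr)$ uses monotonicity of $\otimes$ (not distributivity, as you loosely say, but this is cosmetic); the left-composition monotonicity $S(B,D)\leq S(R\circ B,R\circ D)$ that you flag as the delicate step does hold, by $R(x,y)\otimes B(y)\otimes(B(y)\to D(y))\leq R(x,y)\otimes D(y)$ followed by suprema and adjointness; and $R\circ R\subseteq R$ together with transitivity of $S$ and monotonicity of $S$ in its second argument closes the argument for $R^\to$, with the $R^\from$ half following by applying the same to $R^{-1}$, which is transitive whenever $R$ is. Your suggestion to simply defer to the cited reference is in fact exactly what the authors do.
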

%\begin{proof}
%1.~If $R$ is reflexive, then obviously $A\subseteq R\circ A$ for each $A\in L^X$. Thus, we have $R^{+}(A,A) = S_R(A,A)\wedge S_{R^{-1}}(A,A) = S(A,R\circ A)\wedge S(A,R^{-1}\circ A) = 1$.
%
%2.~If $R$ is symmetric, then $R=R^{-1}$. Thus, $R^{+}(A,B) = S_R(A,B)\wedge S_{R^{-1}}(B,A) = S_{R^{-1}}(A,B)\wedge S_R(B,A) = S_R(B,A) \wedge S_{R^{-1}}(A,B) = R^{+}(B,A)$.
%
%3.~Transitivity of $R$ means that $R\circ R\subseteq R$. By Theorem \ref{thm:ext_composition},  $R^{+}\circ R^{+} \subseteq (R\circ R)^{+}$. From $R\circ R\subseteq R$ it evidently follows $(R\circ R)^{+}\subseteq R^{+}$.
%\end{proof}

The following has been proved in \cite[Theorem 2]{BosMad:Ocfpr}.
\begin{theorem}\label{thm:ext_composition}
For any two $\lat$-relations $R,Q\in L^X$ it holds
\begin{align}
R^{+}\circ Q^{+} \subseteq (R\circ Q)^{+}.
\end{align}
\end{theorem}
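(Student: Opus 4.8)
The claim is that for binary $\lat$-relations $R,Q\in L^{X\times X}$ we have the pointwise inequality $R^{+}\circ Q^{+}\subseteq (R\circ Q)^{+}$, i.e.\ for all $A,C\in L^X$,
\[
(R^{+}\circ Q^{+})(A,C)=\bigvee_{B\in L^X}R^{+}(A,B)\otimes Q^{+}(B,C)\ \le\ (R\circ Q)^{+}(A,C).
\]
Since suprema distribute over $\le$, it suffices to fix an arbitrary intermediate $B\in L^X$ and prove
$R^{+}(A,B)\otimes Q^{+}(B,C)\le (R\circ Q)^{+}(A,C)$. The plan is to unfold $(R\circ Q)^{+}=(R\circ Q)^{\to}\wedge (R\circ Q)^{\from}$ and prove the $\to$-part and the $\from$-part separately; by the $\wedge$-definition this yields the conjunction. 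By the symmetry $R^{\from}(A,B)=(R^{-1})^{\to}(B,A)$ and the identity $(R\circ Q)^{-1}=Q^{-1}\circ R^{-1}$, the $\from$-part should follow from the $\to$-part applied to the inverse relations, so the real work is a single inequality for $R^{\to}$.

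**The key step.**
First I would reduce the $\to$-part to a single subsethood estimate. Recall $R^{\to}(A,B)=S(A,R\circ B)$, and the defining adjunction $a\otimes b\le c$ iff $a\le b\to c$, together with the monotonicity of $S$ in its second argument inherited from graded subsethood. The heart of the argument is the chain
\[
R^{\to}(A,B)\otimes Q^{\to}(B,C)=S(A,R\circ B)\otimes S(B,Q\circ C)\ \le\ S(A,R\circ Q\circ C)=(R\circ Q)^{\to}(A,C),
\]
using associativity of $\circ$ so that $R\circ(Q\circ C)=(R\circ Q)\circ C$. To justify the middle inequality I would use two standard facts about graded subsethood and composition: $S(B,Q\circ C)\le S(R\circ B,\,R\circ(Q\circ C))$ (composition on the left preserves and does not decrease subsethood, since $\otimes$ is monotone), and the multiplicativity/transitivity of $S$, namely $S(A,R\circ B)\otimes S(R\circ B,\,R\circ Q\circ C)\le S(A,\,R\circ Q\circ C)$, which is just transitivity of the $\lat$-order $S$ on $L^X$. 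Chaining these gives exactly the displayed bound.

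**The obstacle and the finish.**
The main technical obstacle is verifying the monotonicity estimate $S(B,C')\le S(R\circ B,R\circ C')$ at the level of truth degrees (with $C'=Q\circ C$): one must check, pointwise in $x$, that $\bigwedge_y (B(y)\to C'(y))$ is dominated by $(R\circ B)(x)\to (R\circ C')(x)=\bigvee_y R(x,y)\otimes B(y)\ \to\ \bigvee_y R(x,y)\otimes C'(y)$, which reduces to the residuated-lattice inequality $a\otimes(b\to c)\le (a\otimes b)\to \cdots$ distributed across the supremum over $y$; this is routine but is the only place where the residuated structure is genuinely exercised, and these are precisely the ``well-known properties'' the paper defers to \cite{Bel:FRS}. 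Once the $\to$-part is established, I would obtain the $\from$-part by replacing $R,Q$ with $Q^{-1},R^{-1}$: since $(R\circ Q)^{\from}(A,C)=(Q^{-1}\circ R^{-1})^{\to}(C,A)$ and $R^{\from}(A,B)\otimes Q^{\from}(B,C)=(R^{-1})^{\to}(B,A)\otimes (Q^{-1})^{\to}(C,B)$, the same chain (with the factors commuted, which is legitimate as $\otimes$ is commutative) yields $R^{\from}(A,B)\otimes Q^{\from}(B,C)\le (R\circ Q)^{\from}(A,C)$. Taking $\wedge$ of the two parts and recalling that $a\otimes b\le c_1$ and $a\otimes b\le c_2$ give $a\otimes b\le c_1\wedge c_2$ completes the bound for the fixed $B$, and the supremum over $B$ finishes the proof.
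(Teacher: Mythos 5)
Your proof is correct. Note first that the paper does not actually prove this statement: it is quoted from Theorem~2 of the cited work of Bo\v{s}njak and Madar\'asz on the composition of fuzzy power relations, so there is no in-paper argument to compare against. Your self-contained argument is sound and is essentially the standard one: fixing the intermediate $B$ and bounding $R^{+}(A,B)\otimes Q^{+}(B,C)$ by each of $(R\circ Q)^{\to}(A,C)$ and $(R\circ Q)^{\from}(A,C)$ separately is exactly the right decomposition, and both of your supporting facts hold in any complete residuated lattice: the estimate $S(B,C')\leq S(R\circ B,R\circ C')$ follows because $(R\circ B)(x)\otimes S(B,C')=\bigvee_{y}R(x,y)\otimes B(y)\otimes S(B,C')\leq\bigvee_{y}R(x,y)\otimes C'(y)=(R\circ C')(x)$ (using distributivity of $\otimes$ over suprema and $B(y)\otimes(B(y)\to C'(y))\leq C'(y)$, then adjointness), and $\otimes$-transitivity of $S$ is proved the same way. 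Your reduction of the $\from$-half to the $\to$-half via $(R\circ Q)^{-1}=Q^{-1}\circ R^{-1}$ and commutativity of $\otimes$ is also valid and matches the paper's own convention $R^{\from}(A,B)=(R^{-1})^{\to}(B,A)$. The only cosmetic point is that the statement's ``$R,Q\in L^{X}$'' should read $L^{X\times X}$, as you implicitly assume; this is a typo in the paper, not a gap in your argument.
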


In the next two theorems we show some basic properties of power relations of $\lat$-equivalences. We start with a lemma. Note that the $\lat$-relation $\approx^X$ on $\lat$-sets in $X$ \eqref{eq:approxx} does not depend on $\sim$.
\begin{lemma}\label{lem:equiv_compat}
Let $\sim$ be an $\lat$-equivalence on a set $X$, $A,B\in L^X$ be compatible with $\sim$. Then $A\sim^+B=A\approx^XB$.
\end{lemma}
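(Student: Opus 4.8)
The plan is to reduce the power relation $\sim^+$ to ordinary graded subsethood by exploiting that, under compatibility, both $A$ and $B$ are fixed by composition with $\sim$. The heart of the argument is the identity $\sim\circ A=A$, which follows from compatibility of $A$ together with reflexivity and symmetry of $\sim$; once this is in hand, the rest is a substitution into the definitions of $\sim^\to$ and $\sim^\from$.

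First I would establish the fixed-point identity. For the inclusion $\sim\circ A\subseteq A$, recall that compatibility of $A$ reads $A(y)\otimes(y\sim x)\leq A(x)$; rewriting $(y\sim x)=\sim(x,y)$ by symmetry and taking the supremum over $y\in X$ gives
\begin{align*}
(\sim\circ A)(x)=\bigvee_{y\in X}\sim(x,y)\otimes A(y)\leq A(x).
\end{align*}
The reverse inclusion $A\subseteq\sim\circ A$ is immediate from reflexivity, since the $y=x$ term of the supremum is $\sim(x,x)\otimes A(x)=A(x)$. Hence $\sim\circ A=A$, and likewise $\sim\circ B=B$. Using symmetry of $\sim$ and commutativity of $\otimes$ one also has $A\circ\sim=\sim\circ A=A$.

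With these identities I would simply unfold the definitions. Since $\sim$ is symmetric, $\sim^{-1}=\sim$, and therefore
\begin{align*}
\sim^\to(A,B)&=S(A,\sim\circ B)=S(A,B), &
\sim^\from(A,B)&=S(B,A\circ\sim)=S(B,A).
\end{align*}
Taking the meet yields $A\sim^+B=\sim^\to(A,B)\wedge\sim^\from(A,B)=S(A,B)\wedge S(B,A)=A\approx^XB$, which is the claim.

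The only nontrivial step is the fixed-point identity $\sim\circ A=A$; everything afterward is a direct substitution. I expect no genuine obstacle here, since the compatibility condition was designed precisely to give the $\subseteq$ direction, while reflexivity supplies the $\supseteq$ direction for free, and symmetry is what lets the two halves $\sim^\to$ and $\sim^\from$ collapse to $S(A,B)$ and $S(B,A)$ respectively.
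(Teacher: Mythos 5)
Your proposal is correct and follows essentially the same route as the paper: establish the fixed-point identity ${\sim}\circ A=A$ (and likewise for $B$) from compatibility plus reflexivity, then substitute into the definitions of $\sim^\to$ and $\sim^\from$ to collapse $A\sim^+B$ to $S(A,B)\wedge S(B,A)=A\approx^XB$. The only cosmetic difference is that you pass through $A\circ{\sim}$ where the paper uses $S(B,{\sim}\circ A)$ directly, which is the same thing by symmetry of $\sim$.
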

\begin{proof}
By compatibility, $\bigvee_{x'\in X} (x\sim x')\otimes A(x')\leq A(x)$. As $(x\sim x)\otimes A(x)=A(x)$ the opposite inequality also holds and we have ${\sim}\circ A=A$. Similarly, ${\sim}\circ B=B$. Thus, 
\begin{align*}
A\sim^{+} B &= (A \sim^\to B)\wedge (A\sim^\from B)=
S(A,{\sim}\circ B)\wedge S(B,{\sim}\circ A) = S(A,B)\wedge S(B,A) \\
&= A\approx^X B.
\end{align*}
\end{proof}

\begin{theorem}\label{thm:ext_equality}
Let $\sim$ be an $\lat$-equality on a set $X$, $M\subseteq L^X$ a subset, containing only $\lat$-sets, compatible with $\sim$. Then the restriction of $\sim^{+}$ to $M$ is an $\lat$-equality on $M$ and is equal to the restriction of the $\lat$-relation $\approx^X$ \eqref{eq:approxx}.
\end{theorem}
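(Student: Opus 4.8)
The plan is to reduce the claim about the power relation $\sim^+$ to the already-established equivalence $\approx^X$ by exploiting the compatibility hypothesis on the members of $M$. The crucial observation is Lemma~\ref{lem:equiv_compat}, which was proved precisely for this purpose: for any two $\lat$-sets $A,B\in L^X$ that are compatible with $\sim$, we have the identity $A\sim^+ B = A\approx^X B$. Since every element of $M$ is assumed compatible with $\sim$, this identity holds for all $A,B\in M$. Therefore the restriction of $\sim^+$ to $M$ coincides pointwise with the restriction of $\approx^X$ to $M$, which is exactly the second assertion of the theorem.

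With the equality of the two restricted relations in hand, the first assertion---that this common relation is an $\lat$-equality on $M$---follows from what we already know about $\approx^X$. I would recall that $\approx^X$, defined in \eqref{eq:approxx}, is stated in the preliminaries to be an $\lat$-equality on all of $L^X$. An $\lat$-equality is reflexive, symmetric, transitive, and separating (meaning $A\approx^X B = 1$ forces $A=B$). Each of these four properties is inherited by the restriction to any subset $M\subseteq L^X$: reflexivity, symmetry, and transitivity are universally quantified conditions that remain valid when the universe is shrunk, and the separation property likewise survives restriction since distinct elements of $M$ are in particular distinct elements of $L^X$. Hence the restriction of $\approx^X$ to $M$ is an $\lat$-equality on $M$, and by the identity above so is the restriction of $\sim^+$.

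I would organize the write-up in two short moves: first invoke Lemma~\ref{lem:equiv_compat} elementwise over $M$ to get $\sim^+\!\restriction_M = \approx^X\!\restriction_M$, then observe that restricting an $\lat$-equality to a subset yields an $\lat$-equality. One point worth stating explicitly is that $\sim^+$ is reflexive and symmetric as a relation on all of $L^X$ by Lemma~\ref{lem:ext_basic} (since an $\lat$-equality $\sim$ is reflexive and symmetric), but transitivity and especially separation of $\sim^+$ are not automatic on all of $L^X$; it is exactly the restriction to compatible sets that forces $\sim^+$ to agree with the well-behaved $\approx^X$, so the compatibility assumption cannot be dropped.

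I do not anticipate a genuine obstacle here, since the heavy lifting was done in Lemma~\ref{lem:equiv_compat}. The only thing to be careful about is not overclaiming: the theorem does not assert that $\sim^+$ is an $\lat$-equality on all of $L^X$---only that its restriction to a set of mutually compatible $\lat$-sets is---so the proof should make clear that the compatibility hypothesis is used at the single step where Lemma~\ref{lem:equiv_compat} is applied, and that everything afterward is the purely formal remark that substructures of $\lat$-equalities are $\lat$-equalities.
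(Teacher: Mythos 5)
Your proof is correct and follows essentially the same route as the paper's: both apply Lemma~\ref{lem:equiv_compat} pairwise on $M$ to identify $\sim^+$ with $\approx^X$ there, and then use that $\approx^X$ is an $\lat$-equality (in particular that $A\approx^X B=1$ iff $A=B$) to conclude. Your write-up just spells out the restriction argument that the paper leaves implicit.
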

\begin{proof}
Follows from Lemma \ref{lem:equiv_compat} as $A\approx^X B=1$ if{}f $A=B$.
\end{proof}

\begin{theorem}\label{thm:ext_compatibility} 
Let $R$ be compatible with an $\lat$-equivalence $\sim$ on $X$. Then $R^{+}$ is compatible with the power $\lat$-equivalence $\sim^{+}$. 
\end{theorem}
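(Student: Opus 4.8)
The plan is to recast compatibility as an inclusion between composed $\lat$-relations and then let the composition theorem (Theorem~\ref{thm:ext_composition}) do the work. The key observation is that, for a reflexive and symmetric $\lat$-relation $\sim$, a binary $\lat$-relation $R$ is compatible with $\sim$ if and only if ${\sim}\circ R\circ{\sim}=R$. Indeed, the pointwise compatibility inequality $R(x,y)\otimes(x\sim x')\otimes(y\sim y')\leq R(x',y')$, rewritten with symmetry of $\sim$ as $(x'\sim x)\otimes R(x,y)\otimes(y\sim y')\leq R(x',y')$ and supremized over $x,y$, is exactly ${\sim}\circ R\circ{\sim}\subseteq R$; the reverse inclusion $R\subseteq{\sim}\circ R\circ{\sim}$ always holds by reflexivity of $\sim$. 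Since our $\sim$ is an $\lat$-equivalence, this yields the equality ${\sim}\circ R\circ{\sim}=R$.

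Next I would apply the very same reformulation to $R^{+}$ and $\sim^{+}$. By Lemma~\ref{lem:ext_basic}, $\sim^{+}$ is again reflexive and symmetric (in fact an $\lat$-equivalence), so the observation of the previous paragraph applies verbatim with $R,\sim$ replaced by $R^{+},\sim^{+}$: the relation $R^{+}$ is compatible with $\sim^{+}$ if and only if ${\sim^{+}}\circ R^{+}\circ{\sim^{+}}\subseteq R^{+}$. Thus it suffices to establish this single inclusion, for which only reflexivity and symmetry of $\sim^{+}$ are needed.

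Finally, I would chain two applications of Theorem~\ref{thm:ext_composition} together with monotonicity and associativity of $\circ$ (both immediate, as $\circ$ is built from $\otimes$ and suprema):
\[
{\sim^{+}}\circ R^{+}\circ{\sim^{+}}
\;\subseteq\; ({\sim}\circ R)^{+}\circ{\sim^{+}}
\;\subseteq\; ({\sim}\circ R\circ{\sim})^{+}
\;=\; R^{+},
\]
where the last equality is the identity ${\sim}\circ R\circ{\sim}=R$ from the first paragraph. By the reformulation of the second paragraph, this proves that $R^{+}$ is compatible with $\sim^{+}$.

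The computations are all routine; the only point requiring care is the translation between the pointwise definition of compatibility and its composition form, where symmetry of $\sim$ (and of $\sim^{+}$) must be used to align the arguments and reflexivity to obtain the trivial reverse inclusion. I do not expect a genuine obstacle once the statement is phrased through $\circ$, since Theorem~\ref{thm:ext_composition} carries the essential content and the equivalence between compatibility and the composition identity is the same mechanical fact used twice, first for $\langle R,\sim\rangle$ and then for $\langle R^{+},\sim^{+}\rangle$.
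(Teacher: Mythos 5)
Your proposal is correct and follows essentially the same route as the paper: both recast compatibility as the inclusion ${\sim}\circ R\circ{\sim}\subseteq R$ and then apply Theorem~\ref{thm:ext_composition} to obtain ${\sim^{+}}\circ R^{+}\circ{\sim^{+}}\subseteq({\sim}\circ R\circ{\sim})^{+}\subseteq R^{+}$. The only (harmless) difference is that you upgrade the inclusion to the equality ${\sim}\circ R\circ{\sim}=R$ via reflexivity, which lets you conclude $({\sim}\circ R\circ{\sim})^{+}=R^{+}$ directly instead of invoking monotonicity of the power operation as the paper implicitly does.
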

\begin{proof}
By Lemma \ref{lem:ext_basic}, $\sim^+$ is indeed an $\lat$-equivalence.
Compatibility of $R$ with $\sim$ means ${\sim}\circ R\circ{\sim}\subseteq R$.
By Theorem \ref{thm:ext_composition}, ${\sim^{+}}\circ R^{+}\circ{\sim^{+}}\subseteq ({\sim}\circ R\circ{\sim})^{+}\subseteq R^{+}$.
\end{proof}

The following is our main result on power relations of $\lat$-orders.
\begin{theorem}\label{thm:ext_ordered_set} 
Let $\latU = \tu{\tu{U,\approx},\preceq}$ be an $\lat$-ordered set, $M\subseteq L^U$ a subset, containing only convex $\lat$-sets in $\latU$. Then $\tu{\tu{M, \approx^{+}}, \preceq^{+}}$ is an $\lat$-ordered set.
\end{theorem}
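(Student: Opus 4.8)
The plan is to check the four conditions defining an $\lat$-ordered set for the pair $\tu{\approx^{+},\preceq^{+}}$ restricted to $M$: that $\approx^{+}$ is an $\lat$-equality on $M$, and that $\preceq^{+}$ is reflexive, transitive, compatible with $\approx^{+}$, and antisymmetric. Three of these follow almost immediately from results already established. Since every convex $\lat$-set in $\latU$ is compatible with $\approx$, and $\approx$ is an $\lat$-equality, Theorem~\ref{thm:ext_equality} shows that the restriction of $\approx^{+}$ to $M$ is an $\lat$-equality, coinciding there with $\approx^{U}$. Reflexivity and transitivity of $\preceq^{+}$ follow from reflexivity and transitivity of $\preceq$ by Lemma~\ref{lem:ext_basic}, and compatibility of $\preceq^{+}$ with $\approx^{+}$ follows from compatibility of $\preceq$ with $\approx$ by Theorem~\ref{thm:ext_compatibility}; each of these holds on all of $L^{U}$, hence a fortiori on $M$.

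The substantive step is antisymmetry, i.e. $(A\preceq^{+}B)\wedge(B\preceq^{+}A)\leq A\approx^{+}B$ for $A,B\in M$. First I would rewrite the two halves of the power relation in terms of the lower and upper set operators. Comparing the definitions of $\preceq^{\to}$ and $\preceq^{\from}$ with the formulas defining $\down{}$ and $\up{}$ yields ${\preceq}\circ B=\down B$ and $A\circ{\preceq}=\up A$, so that
\begin{align*}
A\preceq^{+}B &= \preceq^{\to}(A,B)\wedge\preceq^{\from}(A,B) = S(A,\down B)\wedge S(B,\up A).
\end{align*}
Hence $(A\preceq^{+}B)\wedge(B\preceq^{+}A)$ is the meet of the four degrees $S(A,\down B)$, $S(A,\up B)$, $S(B,\down A)$, and $S(B,\up A)$.

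The hard part, and the only place where convexity is used, is to recognize that these four degrees reassemble into $A\approx^{U}B$. Because $B$ is convex, $B=\down B\cap\up B$, and since the residuum distributes over meets we have $S(A,X\cap Y)=S(A,X)\wedge S(A,Y)$; therefore $S(A,B)=S(A,\down B)\wedge S(A,\up B)$. Symmetrically, convexity of $A$ gives $S(B,A)=S(B,\down A)\wedge S(B,\up A)$. Combining, the meet of the four degrees above equals $S(A,B)\wedge S(B,A)=A\approx^{U}B$, which in turn equals $A\approx^{+}B$ on $M$ by Theorem~\ref{thm:ext_equality}. Thus antisymmetry holds in fact with equality, in agreement with \eqref{eqn:antisym_eq}. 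Without convexity the identity $S(A,B)=S(A,\down B)\wedge S(A,\up B)$ fails, so the convexity hypothesis on $M$ is exactly what makes the argument go through; this is the step I expect to be the main obstacle.
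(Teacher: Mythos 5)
Your proof is correct and takes essentially the same route as the paper's: the equality, reflexivity, transitivity and compatibility conditions are discharged exactly as you do via Theorem~\ref{thm:ext_equality}, Lemma~\ref{lem:ext_basic} and Theorem~\ref{thm:ext_compatibility}, and antisymmetry is obtained by the same computation, rewriting the four power-relation degrees as $S(V_1,\down V_2)$, $S(V_1,\up V_2)$, $S(V_2,\down V_1)$, $S(V_2,\up V_1)$ and regrouping them through convexity ($V=\down V\cap\up V$) into $S(V_1,V_2)\wedge S(V_2,V_1)=V_1\approx^U V_2=V_1\approx^{+}V_2$. Your observation that antisymmetry in fact holds with equality also matches the paper's derivation.
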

\begin{proof}
Since convex $\lat$-sets are compatible with $\approx$, then $\approx^{+}$ is an $\lat$-equality by Theorem \ref{thm:ext_equality}. By Theorem \ref{thm:ext_compatibility}, $\preceq^{+}$ is compatible with $\approx^{+}$ and by Lemma \ref{lem:ext_basic}, $\preceq^{+}$ is reflexive and transitive.

Let $V_1,V_2\in L^U$ be convex. We have
\begin{align*}
&(V_1\preceq^+V_2)\wedge(V_2\preceq^+V_1) = 
(V_1 \preceq^\to V_2)\wedge (V_1\preceq^\from V_2)\wedge
(V_2\preceq^\to V_1)\wedge (V_2\preceq^\from V_1)
\\
&\quad = 
(V_1 \preceq^\to V_2)\wedge
(V_2\preceq^\to V_1)\wedge (V_2\preceq^\from V_1)\wedge (V_1\preceq^\from V_2)
\\
&\quad = S(V_1,\down V_2)\wedge S(V_1,\up V_2) \wedge
S(V_2,\down V_1)\wedge S(V_2,\up V_1)
\\
&\quad = S(V_1,\down V_2\cap\up V_2) \wedge S(V_2,\down V_1\cap\up V_1)
= S(V_1,V_2)\wedge S(V_2,V_1) = V_1\approx^X V_2
\\ 
&\quad = V_1\approx^+ V_2
\end{align*}
(the last equality follows by Lemma \ref{lem:equiv_compat}), proving antisymmetry.
\end{proof}

The following two lemmas show a way of efficient computing values of power relations $\approx^+$ and $\preceq^+$ on intervals.
\begin{lemma}\label{lem:ext_preceq} 
Let $\latU = \tu{\tu{U,\approx},\preceq}$ be an $\lat$-ordered set, $V_1,V_2\in L^U$ two $\lat$-sets having minimum and maximum, $\min V_1=u_1$, $\max V_1 = v_1$, $\min V_2=u_2$, $\max V_2 = v_2$. Then $V_1 \preceq^{+} V_2 = (u_1\preceq u_2)\wedge(v_1\preceq v_2)$.
\end{lemma}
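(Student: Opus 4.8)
The plan is to split the power relation $\preceq^+$ into its two halves and evaluate each against the principal lower, resp.\ upper, set generated by the relevant extremal element. By definition $V_1\preceq^+ V_2 = (V_1\preceq^\to V_2)\wedge(V_1\preceq^\from V_2)$, and since ${\preceq}\circ V_2 = \down V_2$ and $V_1\circ{\preceq} = \up V_1$ this equals
\[
S(V_1,\down V_2)\wedge S(V_2,\up V_1).
\]
Thus it suffices to prove the two identities $S(V_1,\down V_2) = v_1\preceq v_2$ and $S(V_2,\up V_1) = u_1\preceq u_2$ separately and then take their meet.

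The key reduction, and the only place where real work is needed, is to replace the lower set $\down V_2$ by the principal lower set $\Lcone\{v_2\}$ of its maximum. Since $v_2=\max V_2$ we have $V_2(v_2)=1$, which yields $\down V_2(u)\geq (u\preceq v_2)$; and since $v_2=\sup V_2$ is an upper bound of $V_2$ we have $V_2(w)\leq (w\preceq v_2)$ for all $w\in U$, so that transitivity of $\preceq$ gives $\down V_2(u)=\bigvee_{w\in U}(u\preceq w)\otimes V_2(w)\leq (u\preceq v_2)$. Hence $\down V_2 = \Lcone\{v_2\}$ by \eqref{eqn:single_cone_1}. The dual argument, using $u_1=\min V_1$, gives $\up V_1 = \Ucone\{u_1\}$.

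The remaining computation is then a direct application of the cone identities. Indeed,
\[
S(V_1,\down V_2) = \bigwedge_{u\in U}V_1(u)\to (u\preceq v_2) = \Ucone V_1(v_2),
\]
and because $\sup V_1 = v_1$ exists, Lemma~\ref{lem:inf_single_1} gives $\Ucone V_1 = \Ucone\{v_1\}$, whence $\Ucone V_1(v_2) = v_1\preceq v_2$ by \eqref{eqn:single_cone_1}. Symmetrically, $S(V_2,\up V_1) = \Lcone V_2(u_1) = u_1\preceq u_2$, using $\inf V_2 = u_2$ and the same lemma. Meeting the two results delivers $V_1\preceq^+ V_2 = (u_1\preceq u_2)\wedge(v_1\preceq v_2)$.

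I expect no genuine obstacle beyond the collapse $\down V_2 = \Lcone\{v_2\}$: the point is that membership degree $1$ at $v_2$ forces the supremum defining $\down V_2$ to be at least $(u\preceq v_2)$, while the upper-bound property forces it to be at most $(u\preceq v_2)$, so the whole supremum reduces to a single term. Once this identity (and its dual for $\up V_1$) is established, the cone identity \eqref{eqn:single_cone_1} together with Lemma~\ref{lem:inf_single_1} finishes the proof mechanically.
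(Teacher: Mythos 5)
Your proof is correct and follows essentially the same route as the paper's: the identity $\down V_2=\Lcone\{v_2\}$ (via the upper-bound property of $\sup V_2$, transitivity, and $V_2(v_2)=1$) is exactly the paper's equation $({\preceq}\circ V_2)(w)=w\preceq v_2$, and the subsequent reduction through $\Ucone V_1(v_2)=\Ucone\{v_1\}(v_2)$ using Lemma~\ref{lem:inf_single_1} and \eqref{eqn:single_cone_1} is identical. The only cosmetic difference is that you justify $V_2(w)\leq(w\preceq v_2)$ directly from $v_2$ being an upper bound rather than via the chain $V_2\subseteq\Lcone\Ucone V_2=\Lcone\{v_2\}$.
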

\begin{proof}
We will first prove that for each $w\in U$,
\begin{align}
({\preceq}\circ V_2)(w) &= w\preceq v_2.\label{eqn:ext_preceq_minmax}
\end{align}
Since $v_2$ is the maximum of $V_2$, then by \eqref{eqn:cones_twice}, Lemma \ref{lem:inf_single_2}, and \eqref{eqn:single_cone_2},  $V_2\subseteq \Lcone\Ucone V_2 = \Lcone\Ucone\{v_2\}= \Lcone\{v_2\}$ whence $V_2(w')\leq \Lcone\{v_2\}(w') = w'\preceq v_2$ \eqref{eqn:single_cone_1}. Thus, for each $w\in U$, 
\begin{align*}
(w\preceq w')\otimes V_2(w')\leq (w\preceq w')\otimes(w'\preceq v_2)\leq(w\preceq v_2)
\end{align*}
by transitivity. Taking supremum through all $w'$ on the left-hand side and taking into account that $(w\preceq v_2)\otimes V_2(v_2) = w\preceq v_2$ we obtain \eqref{eqn:ext_preceq_minmax}.

Thus, 
\begin{align*}
(V_1\preceq^\to V_2) = \bigwedge_{w\in U}V_1(w)\to (w\preceq v_2)
= \Ucone V_1(v_2) = \Ucone\{v_1\}(v_2) = v_1\preceq v_2
\end{align*}
by \eqref{eqn:single_cone_1}.

One can prove similarly $(V_1\preceq^\from V_2) = u_1\preceq u_2$ and obtain the desired equality.
\end{proof}

\begin{lemma}\label{lem:ext_approx} 
Let $\latU = \tu{\tu{U,\approx},\preceq}$ be an $\lat$-ordered set, $V_1=\interval{u_1,v_1}$, $V_2=\interval{u_2,v_2}$ intervals in $U$. Then $V_1 \approx^{+} V_2 = (u_1\approx u_2)\wedge(v_1\approx v_2)$.
\end{lemma}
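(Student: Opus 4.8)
The plan is to reduce the computation of $\approx^+$ to the already-solved computation of $\preceq^+$ in Lemma \ref{lem:ext_preceq}, exploiting that $\approx^+$ is nothing but the antisymmetric part of $\preceq^+$. First I would note that every $\lat$-interval is convex (recorded in Section~\ref{subsec:L_ord_sets}: ``Every $\lat$-interval $\interval{v,u}$ in $\latU$ is also convex''), so $V_1,V_2$ belong to a set $M$ of convex $\lat$-sets, and therefore, by Theorem \ref{thm:ext_ordered_set}, the tuple $\tu{\tu{M,\approx^+},\preceq^+}$ is an $\lat$-ordered set. Applying the antisymmetry identity \eqref{eqn:antisym_eq} to the elements $V_1,V_2$ of this $\lat$-ordered set gives
\[
V_1\approx^+ V_2 = (V_1\preceq^+ V_2)\wedge(V_2\preceq^+ V_1),
\]
so that the whole task collapses to evaluating the two power-order terms on the right.

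Next I would invoke Lemma \ref{lem:ext_preceq}. Here the intervals come with endpoints as extremal elements: $\min\interval{u_1,v_1}=u_1$, $\max\interval{u_1,v_1}=v_1$, and likewise $\min\interval{u_2,v_2}=u_2$, $\max\interval{u_2,v_2}=v_2$, which is exactly the hypothesis of that lemma. It then yields $V_1\preceq^+ V_2 = (u_1\preceq u_2)\wedge(v_1\preceq v_2)$ and, by symmetry of the argument, $V_2\preceq^+ V_1 = (u_2\preceq u_1)\wedge(v_2\preceq v_1)$. Substituting these into the identity above and regrouping the four meets as
\[
\bigl[(u_1\preceq u_2)\wedge(u_2\preceq u_1)\bigr]\wedge\bigl[(v_1\preceq v_2)\wedge(v_2\preceq v_1)\bigr],
\]
a final application of \eqref{eqn:antisym_eq}, this time in the base $\lat$-ordered set $\latU$, turns each bracket into $u_1\approx u_2$ and $v_1\approx v_2$ respectively, giving the claimed equality $V_1\approx^+ V_2 = (u_1\approx u_2)\wedge(v_1\approx v_2)$.

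The only genuinely load-bearing point, and thus where I would spend care, is the legitimacy of invoking Lemma \ref{lem:ext_preceq}, i.e.\ that the intervals actually possess minima and maxima equal to their endpoints; this is precisely the stated fact $v=\min\interval{v,u}$, $u=\max\interval{v,u}$, so no separate argument is needed. I expect no real obstacle otherwise, since everything else is bookkeeping with the two antisymmetry identities. As a sanity check one could instead route through convexity and Lemma \ref{lem:equiv_compat} (intervals are compatible with $\approx$, whence $V_1\approx^+ V_2 = V_1\approx^X V_2 = S(V_1,V_2)\wedge S(V_2,V_1)$), but evaluating the graded subsethoods $S(V_1,V_2)$ of intervals directly is more laborious than the $\preceq^+$ route, so I would keep the approach above as the main proof.
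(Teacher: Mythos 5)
Your proposal is correct and is essentially the paper's own proof: the paper likewise takes $M=\{V_1,V_2\}$, invokes Theorem \ref{thm:ext_ordered_set} to get an $\lat$-ordered set, applies the antisymmetry identity \eqref{eqn:antisym_eq} to reduce $\approx^+$ to the two $\preceq^+$ terms, and evaluates these via Lemma \ref{lem:ext_preceq}. Your explicit remark that the intervals' endpoints are their minima and maxima is exactly the (implicit) justification the paper relies on, so nothing is missing.
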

\begin{proof}
According to Theorem \ref{thm:ext_ordered_set}, $\tu{\tu{M,\approx^+},\preceq^+}$, where $M=\{V_1,V_2\}$, is an $\lat$-ordered set. Thus by \eqref{eqn:antisym_eq} and Lemma \ref{lem:ext_preceq},
\begin{align*}
V_1 \approx^{+} V_2 &= (V_1\preceq^+ V_2)\wedge(V_2\preceq^+ V_1)\\
&= (u_1\preceq u_2)\wedge(v_1\preceq v_2)\wedge(u_2\preceq u_1)\wedge(v_2\preceq v_1) \\
&=  (u_1\approx u_2)\wedge(v_1\approx v_2).
\end{align*}
%Since $\lat$-intervals are compatible with $\approx$, then
%\begin{align*}
%S_\approx(V_1,V_2) &= S(V_1,V_2) = S(V_1,\Ucone\{u_2\}\cap\Lcone\{v_2\})
%= S(V_1,\Ucone\{u_2\})\wedge S(V_1,\Lcone\{v_2\})
% \\
%&=  \left(\bigwedge_{w\in U} V_1(w)\to (u_2\preceq w)\right)
%\wedge 
% \left(\bigwedge_{w\in U} V_1(w)\to (w\preceq v_2) \right) \\
% &= 
% \Lcone V_1(u_2)\wedge \Ucone V_1(v_2) = (u_2\preceq u_1)\wedge (v_1\preceq v_2)
%\end{align*}
%and, similarly, $S_\approx(V_1,V_2) = (u_1\preceq u_2)\wedge (v_2\preceq v_1)$.
%
%Thus, $V_1 \approx^{+} V_2 = (u_2\preceq u_1)\wedge (v_1\preceq v_2) 
%\wedge (u_1\preceq u_2)\wedge (v_2\preceq v_1)= (u_1\approx u_2)\wedge(v_1\approx v_2)$.
\end{proof}

\section{Complete $\lat$-relations}\label{Sec:compl}
In classical theory of complete lattices (see for example \cite{GanWi:FCA}), a binary relation $R$ on a complete lattice $\latU$ is called complete, if for each system $\{\tu{u_j,v_j}\}_{j\in J}$ of pairs of elements of $U$ from $u_j\, R\, v_j$ for each $j\in J$ it follows $\left(\bigwedge_{j\in J} u_j\right)\, R\, \left(\bigwedge_{j\in J} v_j\right)$ and $\left(\bigvee_{j\in J} u_j\right)\, R\, \left(\bigvee_{j\in J} v_j\right)$.

It can be easily checked that the following condition is equivalent to the above condition of completeness of $R$: if $V_1,V_2\subseteq U$ are such that for each $v_1\in V_1$ there is $v_2\in V_2$ such that $v_1\, R\, v_2$ and for each $v_2\in V_2$ there is $v_1\in V_1$ such that $v_1\, R\, v_2$, then $\left(\bigwedge V_1\right)\, R\, \left(\bigwedge V_2\right)$ and $\left(\bigvee V_1\right)\, R\, \left(\bigvee V_2\right)$.

%K DEFINICI UPLNOSTI: U TOLERANCI NIKDE PREDPOKLAD KOMPATIBILITY NEPOTREBUJEME, PRITOM TO KOMPATIBILNI VYJDE. ZATIM TO NEUMIM JEDNODUSE DOKAZAT. A NESLO BY TO I OBECNE PRO VSECHNY UPLNE RELACE???

This leads us to the following definition. A binary $\lat$-relation on a completely lattice $\lat$-ordered set $\latU=\tu{\tu{U,\approx},\preceq}$ is called \emph{complete}, if it is compatible with $\approx$ and for any two $\lat$-sets $V_1,V_2\in L^U$ it holds
\begin{align}
R^+(V_1,V_2) &\leq R(\inf V_1,\inf V_2), \\
R^+(V_1,V_2) &\leq R(\sup V_1,\sup V_2).
\end{align}

Following are basic properties of complete relations on a completely lattice $\lat$-ordered set $\latU=\tu{\tu{U,\approx},\preceq}$.

\begin{lemma}\label{lem:inv_compl}
If $R$ is complete, then so is $R^{-1}$.
\end{lemma}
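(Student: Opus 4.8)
The plan is to reduce the whole statement to a single symmetry identity relating the power relations of $R$ and $R^{-1}$, and then read off completeness of $R^{-1}$ from completeness of $R$. First I would establish that for all $V_1,V_2\in L^U$,
\begin{align*}
(R^{-1})^{+}(V_1,V_2) = R^{+}(V_2,V_1).
\end{align*}
This comes directly from the defining clause $R^\from(A,B)=(R^{-1})^\to(B,A)$: applying the same formula with $R^{-1}$ in place of $R$ gives $(R^{-1})^\to(V_1,V_2)=R^\from(V_2,V_1)$ and, since $(R^{-1})^{-1}=R$, also $(R^{-1})^\from(V_1,V_2)=R^\to(V_2,V_1)$. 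Conjoining these two equalities and using commutativity of $\wedge$ yields the identity.

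Next I would check that $R^{-1}$ is compatible with $\approx$. This needs no new ideas: the defining inequality for $R^{-1}$, namely $R^{-1}(x,y)\otimes(x\approx x')\otimes(y\approx y')\leq R^{-1}(x',y')$, rewrites via $R^{-1}(x,y)=R(y,x)$ and commutativity of $\otimes$ as $R(y,x)\otimes(y\approx y')\otimes(x\approx x')\leq R(y',x')$, which is precisely an instance of compatibility of $R$ (with its first coordinate sent along $y\approx y'$ and its second along $x\approx x'$).

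Finally, I would combine the identity with completeness of $R$. Using $R^{-1}(\inf V_1,\inf V_2)=R(\inf V_2,\inf V_1)$ and the fact that $R^{+}(V_2,V_1)\leq R(\inf V_2,\inf V_1)$, I get
\begin{align*}
(R^{-1})^{+}(V_1,V_2) = R^{+}(V_2,V_1) \leq R(\inf V_2,\inf V_1) = R^{-1}(\inf V_1,\inf V_2),
\end{align*}
and the dual computation with suprema gives $(R^{-1})^{+}(V_1,V_2)\leq R^{-1}(\sup V_1,\sup V_2)$. Together with the compatibility just verified, this shows $R^{-1}$ is complete. There is essentially no hard step here; the only point demanding care is tracking the definition of $R^\from$ through the inverse so that the key identity $(R^{-1})^{+}(V_1,V_2)=R^{+}(V_2,V_1)$ drops out cleanly, after which completeness of $R$ does all the work.
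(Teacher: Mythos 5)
Your proof is correct and follows essentially the same route as the paper: both rest on the identity $(R^{-1})^{+}(V_1,V_2)=R^{+}(V_2,V_1)$, obtained by unwinding the definitions of $R^\to$ and $R^\from$ through the inverse, and then apply completeness of $R$. Your explicit verification that $R^{-1}$ is compatible with $\approx$ is a small addition the paper leaves implicit, and it is correct.
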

\begin{proof}
We have
\begin{align*}
(R^{-1})^+(V_1,V_2) &= (R^{-1})^\to(V_1,V_2)\wedge (R^{-1})^\from(V_1,V_2) \\
&= R^\from(V_2,V_1)\wedge R^\to(V_2,V_1) 
= R^+(V_2,V_1)\leq R(\inf V_2,\inf V_1) \\ &= R^{-1}(\inf V_1,\inf V_2),
\end{align*}
and similarly for suprema.
\end{proof}
\begin{theorem}\label{thm:compl_closure_system}
The system of all complete binary $\lat$-relations on $\latU$ is an $\lat$-closure system
in the set $U\times U$, hence a completely lattice $\lat$-ordered set.
\end{theorem}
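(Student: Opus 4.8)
The plan is to show that the set $\mathcal{C}$ of all complete binary $\lat$-relations on $\latU$, regarded as a crisp subset of the completely lattice $\lat$-ordered set $\lat^{U\times U}$ from \eqref{eqn:power_inf}, is closed under arbitrary infima; the ``hence'' part then follows at once from the result that every $\lat$-closure system is a completely lattice $\lat$-ordered set. So I take an $\lat$-set of relations $\mathcal{R}\in L^{(L^{U\times U})}$ with $\mathcal{R}\subseteq\mathcal{C}$ (that is, $\mathcal{R}(R)=0$ whenever $R$ is not complete), set $\bar R=\inf\mathcal{R}$, and recall from \eqref{eqn:power_inf} that
\begin{align*}
\bar R(x,y)&=\bigwedge_{R\in L^{U\times U}}\mathcal{R}(R)\to R(x,y).
\end{align*}
In particular $\mathcal{R}(R)\otimes\bar R(x,y)\leq R(x,y)$ for every $R$ and all $x,y$, i.e. $\mathcal{R}(R)\otimes\bar R\subseteq R$ as $\lat$-relations. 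I must verify that $\bar R$ is compatible with $\approx$ and satisfies both completeness inequalities.

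Compatibility is the easy part. Since each $R$ in the support is complete, hence compatible with $\approx$, the inclusion $\mathcal{R}(R)\otimes\bar R\subseteq R$ gives
\begin{align*}
\mathcal{R}(R)\otimes\bar R(x,y)\otimes(x\approx x')\otimes(y\approx y')&\leq R(x,y)\otimes(x\approx x')\otimes(y\approx y')\leq R(x',y')
\end{align*}
for all $x,x',y,y'$; by adjointness this reads $\bar R(x,y)\otimes(x\approx x')\otimes(y\approx y')\leq\mathcal{R}(R)\to R(x',y')$, and taking the infimum over $R$ yields exactly compatibility of $\bar R$.

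The completeness inequalities are where I expect the main obstacle. Fix $\lat$-sets $V_1,V_2\in L^U$; I treat the infimum case, the supremum case being identical. Since $\bar R(\inf V_1,\inf V_2)=\bigwedge_R\mathcal{R}(R)\to R(\inf V_1,\inf V_2)$, by adjointness it suffices to prove, for each complete $R$ and with $a=\mathcal{R}(R)$,
\begin{align*}
a\otimes\bar R^+(V_1,V_2)&\leq R(\inf V_1,\inf V_2).
\end{align*}
My intended chain is $a\otimes\bar R^+(V_1,V_2)\leq(a\otimes\bar R)^+(V_1,V_2)\leq R^+(V_1,V_2)\leq R(\inf V_1,\inf V_2)$: the last step is completeness of $R$, and the middle step is monotonicity of the power operation $(\cdot)^+$ in its relation argument (which follows from monotonicity of $S$ and of composition) applied to $a\otimes\bar R\subseteq R$. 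The crucial first step is the genuinely new ingredient and rests on the residuated identity $a\otimes(b\to c)\leq b\to(a\otimes c)$, a consequence of adjointness. Applying it termwise inside $R^\to$ and $R^\from$, together with $a\otimes\bigwedge_x c_x\leq\bigwedge_x(a\otimes c_x)$ and the monotonicity of $\otimes$ over $\wedge$, gives $a\otimes R^\to(A,B)\leq(a\otimes R)^\to(A,B)$ and $a\otimes R^\from(A,B)\leq(a\otimes R)^\from(A,B)$, hence $a\otimes R^+(A,B)\leq(a\otimes R)^+(A,B)$. Assembling these steps (and their suprema counterparts) shows $\bar R\in\mathcal{C}$, which establishes closure under infima and completes the proof.
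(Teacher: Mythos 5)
Your proof is correct. It reaches the same destination as the paper's proof but organizes the computation differently. The paper does not verify closure under general infima in one step; instead it uses the standard generation of infima in $\lat^{U\times U}$ by the two operations of (arbitrary) intersection and $a$-shift, proving separately that $\bigcap_j R_j$ and $a\to R$ are complete whenever the $R_j$ and $R$ are. Its key inequalities are $\bigl(\bigcap_j R_j\bigr)\circ V\subseteq\bigcap_j(R_j\circ V)$ and $(a\to R)\circ V\subseteq a\to(R\circ V)$, which yield $\bigl(\bigcap_j R_j\bigr)^+\leq\bigwedge_j R_j^+$ and $(a\to R)^+\leq a\to R^+$. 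You instead take the infimum $\bar R=\bigwedge_R\mathcal{R}(R)\to R$ as given by \eqref{eqn:power_inf} and residuate: your inequality $a\otimes Q^+(V_1,V_2)\leq(a\otimes Q)^+(V_1,V_2)$, combined with $a\otimes\bar R\subseteq R$ and monotonicity of $(\cdot)^+$ in the relation argument, is essentially the adjoint transpose of the paper's shift inequality, and your final ``take the infimum over $R$'' step absorbs the paper's intersection case. What your version buys is a single uniform argument with no case split (and it makes explicit the monotonicity of $(\cdot)^+$, which the paper leaves implicit inside its $S$-computations); what the paper's version buys is that each of the two steps is a short elementary display, and the compatibility clause comes for free from the known fact that the compatible relations already form an $\lat$-closure system --- a fact you reprove by hand, correctly but redundantly. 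One small point worth stating explicitly in your write-up: for relations $R$ outside the support of $\mathcal{R}$ (which need not be complete) the inequality $a\otimes\bar R^+(V_1,V_2)\leq R(\inf V_1,\inf V_2)$ holds trivially because $a=\mathcal{R}(R)=0$; as written, your phrase ``for each complete $R$'' quietly skips over them.
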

\begin{proof}
We will show 1. that if $R_j$, $j\in J$, are complete then so is $\bigcap_{j\in J} R_j$ and 2. that for each $a\in L$ and $R$ complete the shift $a\to R$ is also complete. Since the system of all binary $\lat$-relations that are compatible with $\approx$ is an $\lat$-closure system, there is no need to prove compatibility of the relations.

1.~We have 
\begin{align*}
\left(\bigcap_j R_j\right)\circ V(v) &= \bigvee_{w\in U}\left(\bigwedge_j R_j(v,w)\right)\otimes V(w)
\leq \bigwedge_j\bigvee_{w\in U}R_j(v,w)\otimes V(w) \\
&= \bigwedge_j(R_j\circ V)(v).
\end{align*}
Thus, $(\bigcap_j R_j)\circ V\subseteq \bigcap_j(R_j\circ V)$. Now, 
\begin{align*}
%K: zvetsuju zavorky, at je to v souladu s ostatnim
\left(\bigcap_j R_j\right)^\to(V_1,V_2) &= S\left(V_1, \left(\bigcap_j R_j\right)\circ V_2\right)\leq 
S\left(V_1,\bigcap_j\left(R_j\circ V_2\right)\right) = \bigwedge_jS(V_1,R_j\circ V_2)\\
&= \bigwedge_j(R_j)^\to(V_1, V_2)
\end{align*}
and, finally, 
\begin{align*}
&\left(\bigcap_jR_j\right)^+(V_1,V_2) = \left(\bigcap_jR_j\right)^\to(V_1,V_2) \wedge 
 \left(\bigcap_jR_j\right)^\from(V_1,V_2) \\
&\quad 
%K: davam pryc
%\leq \bigwedge_j\bigwedge_k (R_j)^\to(V_1,V_2)\wedge (R_k)^\from(V_1,V_2)
\leq \bigwedge_j (R_j)^\to(V_1,V_2)\wedge 
(R_j)^\from(V_1,V_2) =
%K: R_j(V_1,V_2) -> (R_j)^+(V_1,V_2)
 \bigwedge_j(R_j)^+(V_1,V_2) \\
&\quad \leq \bigwedge_j R_j(\inf V_1,\inf V_2) = \left(\bigcap_j R_j\right)(\inf V_1,\inf V_2).
\end{align*}
Similarly for suprema.
 
 2.~We have 
 \begin{align*}
((a\to R)\circ V)(v) &= \bigvee_{w\in U}(a\to R(v,w))\otimes V(w)
 \leq a\to \bigvee_{w\in U}R(v,w)\otimes V(w) \\
 &= a\to (R\circ V)(v).
\end{align*}
Thus, $(a\to R)\circ V\subseteq a\to (R\circ V)$. Now, 
\begin{align*}
(a\to R)^\to(V_1,V_2)=S(V_1,(a\to R)\circ V_2) \leq 
 S(V_1,a\to (R\circ V_2)) =
 a\to S(V_1,R\circ V_2)
\end{align*}
and, finally, 
\begin{align*}
&(a\to R)^+(V_1,V_2) = 
 (a\to R)^\to(V_1,V_2)\wedge (a\to R)^{\from}(V_1,V_2) \\
 &\quad \leq
(a\to R^\to(V_1,V_2))\wedge(a\to (R^{\from}(V_1,V_2)) =
a\to (R^\to(V_1,V_2)\wedge R^{\from}(V_1,V_2)) \\
&\quad \leq
a\to R(\inf V_1,\inf V_2) = (a\to R)(\inf V_1,\inf V_2).
\end{align*}
Similarly for suprema.
\end{proof}

\begin{lemma}\label{lem:preceq_approx_compl}
The following holds for each $V_1,V_2\in L^U$:
\begin{align*}
V_1 \preceq^\to V_2 &\leq  \sup V_1\preceq\sup V_2, &
 V_1\preceq^\from V_2&\leq \inf V_1\preceq \inf V_2.
\end{align*}
\end{lemma}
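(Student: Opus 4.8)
The plan is to reduce both inequalities to the already-established estimates \eqref{eqn:inf_sets_rel}, after rewriting the two power-relation quantities in terms of graded subsethood and lower/upper sets. First I would unfold the definitions: since $R^\to(A,B)=S(A,R\circ B)$ and $R^\from(A,B)=S(B,A\circ R)$, and since by definition $\down V={\preceq}\circ V$ and $\up V=V\circ{\preceq}$, the two left-hand sides become
\begin{align*}
V_1\preceq^\to V_2 &= S(V_1,\down V_2), & V_1\preceq^\from V_2 &= S(V_2,\up V_1).
\end{align*}
This turns the claim into two graded-subsethood inequalities involving a lower set on the one hand and an upper set on the other.

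The next step is to observe that passing to the lower set does not change the supremum, and passing to the upper set does not change the infimum. Indeed, by \eqref{eqn:lower_sets_cones_rel} we have $\Ucone\down V_2=\Ucone V_2$ and $\Lcone\up V_1=\Lcone V_1$; combining these with the identities $\sup W=\max\Ucone W$ and $\inf W=\max\Lcone W$ yields $\sup\down V_2=\sup V_2$ and $\inf\up V_1=\inf V_1$.

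Finally I would invoke \eqref{eqn:inf_sets_rel}. For the first inequality, its supremum part applied to the pair $V_1,\down V_2$ gives
$S(V_1,\down V_2)\leq \sup V_1\preceq\sup\down V_2=\sup V_1\preceq\sup V_2$;
for the second, its infimum part applied to the pair $V_2,\up V_1$ (note that this part reverses the order of the arguments) gives
$S(V_2,\up V_1)\leq \inf\up V_1\preceq\inf V_2=\inf V_1\preceq\inf V_2$.
There is no serious obstacle here: the entire argument is a short chain of rewritings. The only point requiring care is matching the $\to$/$\from$ split with the correct half of \eqref{eqn:inf_sets_rel}, since the infimum estimate flips the order of its arguments while the supremum estimate preserves it — and this flip is exactly what makes the upper set appear with $\preceq^\from$ and the lower set with $\preceq^\to$.
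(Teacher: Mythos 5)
Your proof is correct and essentially the paper's own argument: both unfold $V_1\preceq^\to V_2=S(V_1,\down V_2)$ and $V_1\preceq^\from V_2=S(V_2,\up V_1)$ and then use \eqref{eqn:lower_sets_cones_rel} to discard the lower/upper set. The only (cosmetic) difference is that you invoke the packaged estimate \eqref{eqn:inf_sets_rel} together with $\sup\down V_2=\sup V_2$ and $\inf\up V_1=\inf V_1$, whereas the paper applies its two ingredients \eqref{eqn:cones_gal_mon} and \eqref{eqn:inf_cones_rel} directly at the level of cones via $\Ucone\down V_2=\Ucone V_2$.
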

\begin{proof}
We have by \eqref{eqn:cones_gal_mon}, \eqref{eqn:lower_sets_cones_rel}, \eqref{eqn:inf_cones_rel},
\begin{align*}
V_1\preceq^\to V_2 &= S(V_1, {\preceq}\circ V_2) = S(V_1,\down V_2) \leq
S(\Ucone\down V_2,\Ucone V_1) = S(\Ucone V_2,\Ucone V_1) \\
&=
 \sup V_1\preceq\sup V_2.
\end{align*}
Hence the first inequality.
The second one is obtained similarly.
\end{proof}

\begin{theorem}\label{thm:preceq_approx_compl}
The $\lat$-relations $\preceq$ and $\approx$ on $\latU$ are complete.
\end{theorem}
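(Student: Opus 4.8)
The plan is to prove that the two canonical relations $\preceq$ and $\approx$ on a completely lattice $\lat$-ordered set $\latU$ satisfy the completeness conditions $R^+(V_1,V_2)\leq R(\inf V_1,\inf V_2)$ and $R^+(V_1,V_2)\leq R(\sup V_1,\sup V_2)$, together with compatibility with $\approx$. Compatibility is essentially automatic: $\preceq$ is compatible with $\approx$ by the very definition of an $\lat$-order, and $\approx$ is compatible with itself since it is transitive. So the real work is in the two inequalities, and the main tool is clearly Lemma~\ref{lem:preceq_approx_compl}, which has just been established precisely for this purpose.

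For the relation $\preceq$, recall that $R^+(V_1,V_2)=R^\to(V_1,V_2)\wedge R^\from(V_1,V_2)$. First I would handle suprema: Lemma~\ref{lem:preceq_approx_compl} gives directly $\preceq^\to(V_1,V_2)\leq \sup V_1\preceq\sup V_2$, and since $\preceq^+(V_1,V_2)\leq \preceq^\to(V_1,V_2)$ by definition of the meet, we immediately get $\preceq^+(V_1,V_2)\leq \sup V_1\preceq\sup V_2$, which is exactly the supremum condition. Dually, the second inequality of Lemma~\ref{lem:preceq_approx_compl} yields $\preceq^\from(V_1,V_2)\leq \inf V_1\preceq\inf V_2$, and again $\preceq^+(V_1,V_2)\leq\preceq^\from(V_1,V_2)$ gives the infimum condition $\preceq^+(V_1,V_2)\leq \inf V_1\preceq\inf V_2$. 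Thus both completeness inequalities for $\preceq$ fall out by picking the appropriate component of the meet defining $\preceq^+$.

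For the relation $\approx$, I would exploit the identity $\approx^+(V_1,V_2)=\preceq^+(V_1,V_2)\wedge(\preceq^{-1})^+(V_1,V_2)$, or more simply observe that $\approx$ is an $\lat$-equivalence so I can invoke Lemma~\ref{lem:inv_compl}. Concretely, since $\approx\,=\,\preceq\,\wedge\,\preceq^{-1}$ (by \eqref{eqn:antisym_eq}), and since Theorem~\ref{thm:compl_closure_system} shows the complete relations are closed under arbitrary intersections, it suffices to know that both $\preceq$ and $\preceq^{-1}$ are complete. We have just shown $\preceq$ is complete, and $\preceq^{-1}$ is complete by Lemma~\ref{lem:inv_compl}. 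Hence their intersection $\approx$ is complete as well. Alternatively one can argue by hand, combining the first inequality of Lemma~\ref{lem:preceq_approx_compl} applied to the pair $(V_1,V_2)$ with the same inequality applied to the reversed pair, using antisymmetry \eqref{eqn:antisym_eq} to assemble $\inf V_1\approx\inf V_2$ and $\sup V_1\approx\sup V_2$ from the two directions.

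I do not expect any genuine obstacle here, since Lemma~\ref{lem:preceq_approx_compl} was engineered to carry all the analytic content. The only point requiring a little care is making sure the meet structure of $R^+$ is used in the correct direction --- that one selects $\preceq^\to$ for suprema and $\preceq^\from$ for infima, matching the way $\down V$ and $\up V$ relate to upper and lower cones in the proof of that lemma. For $\approx$, the cleanest route is the structural one: $\approx$ is the intersection of the two complete relations $\preceq$ and $\preceq^{-1}$, so completeness is inherited via Theorem~\ref{thm:compl_closure_system}. This keeps the proof short and avoids re-deriving the component inequalities.
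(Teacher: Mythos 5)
Your proposal is correct and follows essentially the same route as the paper: the two completeness inequalities for $\preceq$ are read off from Lemma~\ref{lem:preceq_approx_compl} via the two components of $\preceq^+$, and completeness of $\approx$ is deduced from $\approx\,=\,\preceq\cap\preceq^{-1}$ together with Lemma~\ref{lem:inv_compl} and the closure-system property of Theorem~\ref{thm:compl_closure_system}. The extra remark on compatibility is a harmless (and correct) addition; the paper disposes of it once and for all inside the proof of Theorem~\ref{thm:compl_closure_system}.
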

\begin{proof}
By Lemma \ref{lem:preceq_approx_compl}, for each $V_1,V_2\in L^U$, $V_1\preceq^+ V_2 \leq (V_1\preceq^\to V_2) \leq  \sup V_1\preceq\sup V_2$ and $V_1\preceq^+ V_2 \leq  V_1\preceq^\from V_1\leq \inf V_1\preceq \inf V_2$, proving completeness of $\preceq$.

Since $\approx = {\preceq}\cap{\succeq}$, completeness of $\approx$ follows from Lemma \ref{lem:inv_compl} and Theorem \ref{thm:compl_closure_system}.
\end{proof}

\section{Complete tolerances}\label{Sec:tol}
\subsection{Basic properties}
Recall that $\lat$-tolerance on a set $X$ is a reflexive and symmetric binary $\lat$-relation on $X$.
For an $\lat$-tolerance $\sim$ on a set $X$, an $\lat$-set $B\in L^X$ is called a \emph{block of  $\sim$} \cite{Bel:FRS} if for each $x_1,x_2\in X$ it holds
$B(x_1)\otimes B(x_2)\leq (x_1\sim x_2)$.
A block $B$ is called \emph{maximal} if for each block $B'$ from $B\subseteq B'$ it follows $B=B'$. The set of all maximal blocks of $\sim$ 
always exists by Zorn's lemma, 
is called \emph{the factor set of $X$ by $\sim$} and denoted by $X/{\sim}$.

Further we set for each $x\in X$,
$\class x (y)= x\sim y$,
obtaining an $\lat$-set $\class x$ called \emph{the class of $\sim$ determined by $x$}.

Let $\sim$ be a complete tolerance on a completely lattice $\lat$-ordered set $\latU=\tu{\tu{U,\approx},\preceq}$. From reflexivity of $\sim$ we have $V\subseteq {\sim}\circ V$ for each $V\in L^U$ and from symmetry ${\sim}^{-1} = {\sim}$.

For each $u\in U$ we set
\begin{align}
u_\sim &= \inf\class u, & u^\sim &= \sup\class u.
\end{align}

We denote the system of all complete $\lat$-tolerances on a completely lattice $\lat$-ordered set $\latU$ by $\ctol\latU$ and consider it together with the $\lat$-equality $\approx^{U\times U}$ and $\lat$-order $S$.
\begin{theorem}\label{thm:ctol_clos}
$\ctol\latU$ is an $\lat$-closure system
in the set $U\times U$, hence a completely lattice $\lat$-ordered set.
\end{theorem}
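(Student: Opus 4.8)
The plan is to show that $\ctol\latU$ is closed under arbitrary intersections and under shifts $a\to{-}$, which by the characterization of $\lat$-closure systems (analogous to Theorem~\ref{thm:compl_closure_system}) will establish that it is an $\lat$-closure system in $U\times U$, and hence a completely lattice $\lat$-ordered set. The structure of the argument mirrors the proof of Theorem~\ref{thm:compl_closure_system} very closely: the only extra burden is to verify that intersections and shifts preserve \emph{reflexivity} and \emph{symmetry}, so that we stay within the class of tolerances rather than merely arbitrary complete relations.

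First I would handle closure under arbitrary intersection. Given complete $\lat$-tolerances $\sim_j$, $j\in J$, their intersection $\bigcap_{j\in J}\sim_j$ is again reflexive and symmetric by direct inspection (infima of $1$'s are $1$, and symmetry is preserved pointwise). That the intersection is a complete $\lat$-relation is exactly part~1 of the proof of Theorem~\ref{thm:compl_closure_system}, which applies verbatim since the completeness condition there is stated for arbitrary complete binary $\lat$-relations and makes no use of antisymmetry of the relation being intersected. So nothing new needs to be computed here.

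Second I would handle the shift $a\to\sim$ for $a\in L$ and $\sim$ a complete $\lat$-tolerance. The completeness of $a\to\sim$ is again precisely part~2 of the proof of Theorem~\ref{thm:compl_closure_system}. The one genuinely new point is reflexivity: we need $(a\to\sim)(u,u)=1$, but $\sim(u,u)=1$ by reflexivity of $\sim$, and $a\to 1=1$, so reflexivity is preserved. Symmetry of $a\to\sim$ follows immediately from symmetry of $\sim$ since the shift acts pointwise. Thus $a\to\sim$ is a complete $\lat$-tolerance.

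The main obstacle, such as it is, is purely bookkeeping: one must confirm that the $\lat$-closure-system criterion being invoked (closedness under intersections and under shifts $a\to{-}$) genuinely characterizes $\lat$-closure systems in $U\times U$ once compatibility is granted, and that the ambient $\lat$-equality $\approx^{U\times U}$ and $\lat$-order $S$ are the correct structures making $\ctol\latU$ inherit the completely lattice $\lat$-ordered structure via the general closure-system theorem. Since all complete $\lat$-tolerances are by definition compatible with $\approx$ and are reflexive-symmetric, and since both operations preserve all three properties, the intersection $\bigcap$ and the residua $a\to{-}$ keep us inside $\ctol\latU$; the general result on fixpoints of $\lat$-closure operators then delivers that $\ctol\latU$ is a completely lattice $\lat$-ordered set. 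No delicate estimate beyond those already carried out in Theorem~\ref{thm:compl_closure_system} is required.
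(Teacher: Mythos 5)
Your proposal is correct and follows essentially the same route as the paper: the paper's proof simply observes that tolerances are closed under intersections and shifts $a\to{-}$ (which you verify explicitly via $a\to 1=1$ and pointwise symmetry) and then invokes Theorem~\ref{thm:compl_closure_system} for the completeness part. The only difference is that the paper dismisses the reflexivity/symmetry check with ``evidently,'' whereas you spell it out.
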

\begin{proof}
Evidently, if $\sim$ is an $\lat$-tolerance then so is $a\to{\sim}$ for each $a\in L$ and if $\sim_j$, $j\in J$, are $\lat$-tolerances then $\bigcap_{j\in J}$ is also an $\lat$-tolerance. Thus, the theorem follows from Theorem \ref{thm:compl_closure_system}.
\end{proof}

\subsection{From complete tolerances to isotone Galois connections}

\begin{lemma}\label{lem:downsim}
For each $u\in U$, $u\sim u_\sim = u\sim u^\sim = 1$.
\end{lemma}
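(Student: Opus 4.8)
The plan is to instantiate the completeness condition of $\sim$ with the singleton $\{u\}$ and the class $\class u$, and to reduce the claim to showing that the power relation $\sim^+(\{u\},\class u)$ equals $1$.

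First I would record that $\inf\{u\}=u=\sup\{u\}$. Indeed, by \eqref{eqn:single_cone_1} we have $\Lcone\{u\}(w)=w\preceq u$, and by \eqref{eqn:single_cone_2} $\Ucone\Lcone\{u\}=\Ucone\{u\}$, so the element $w$ witnessing the infimum must satisfy $w\preceq u=u\preceq w=1$, i.e.\ $w=u$ by antisymmetry; the supremum is dual. Consequently, completeness of $\sim$ applied to $V_1=\{u\}$ and $V_2=\class u$ yields $\sim^+(\{u\},\class u)\leq u\sim u_\sim$ and $\sim^+(\{u\},\class u)\leq u\sim u^\sim$. Hence it suffices to prove $\sim^+(\{u\},\class u)=1$, that is, both $\sim^\to(\{u\},\class u)=1$ and $\sim^{\from}(\{u\},\class u)=1$.

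For the first factor, since the only nonzero value of $\{u\}$ is $\{u\}(u)=1$, the subsethood collapses to a single point: $\sim^\to(\{u\},\class u)=S(\{u\},{\sim}\circ\class u)=({\sim}\circ\class u)(u)$. By reflexivity, retaining the term $y=u$ in the supremum defining the composition gives $({\sim}\circ\class u)(u)\geq(u\sim u)\otimes\class u(u)=1\otimes 1=1$. For the second factor, a direct computation of the composition shows $\{u\}\circ{\sim}=\class u$ (again only $\{u\}(u)=1$ contributes, and $(u\sim y)=\class u(y)$ by definition of the class), so $\sim^{\from}(\{u\},\class u)=S(\class u,\{u\}\circ{\sim})=S(\class u,\class u)=1$. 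Thus $\sim^+(\{u\},\class u)=1$, and combining with the two displayed inequalities gives $u\sim u_\sim=u\sim u^\sim=1$.

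The argument is essentially a matter of unwinding the definitions of $\sim^\to$ and $\sim^{\from}$ together with the fact that $\inf\{u\}=\sup\{u\}=u$; there is no serious obstacle. The only point requiring care is invoking reflexivity (and, implicitly, symmetry through $\sim^{-1}=\sim$) at the right places so that the power relation on $\{u\}$ and $\class u$ genuinely collapses to $1$, rather than merely to some nontrivial degree.
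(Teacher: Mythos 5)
Your proof is correct and follows essentially the same route as the paper's: instantiate completeness with $V_1=\{u\}$, $V_2=\class u$, and check that $\sim^+(\{u\},\class u)=1$ via reflexivity (for $\sim^\to$) and the identity $\{u\}\circ{\sim}=\class u$ (for $\sim^\from$). The only cosmetic difference is that you verify $\inf\{u\}=\sup\{u\}=u$ explicitly and compute $\sim^\to$ pointwise, where the paper simply invokes $V_1\subseteq V_2\subseteq{\sim}\circ V_2$.
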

\begin{proof}
Set $V_1=\{u\}$, $V_2 = \class u$. Since $V_1\subseteq V_2$,
%K: then -> we have
we have
  $V_1\sim^\to V_2=1$. Further, $({\sim}\circ V_1)(v) = v\sim u = \class u(v)$. Thus, $V_1\sim^\from V_2 = S(\class u,\class u) = 1$. Now,
\begin{align*}
V_1\sim^+ V_2 &= (V_1\sim^\to V_2)\wedge (V_1\sim^\from V_2)=1
\end{align*}
and by completeness of $\sim$, $1=\inf V_1\sim \inf V_2= u\sim u_\sim$ and $1=\sup V_1\sim \sup V_2= u\sim u^\sim$.
\end{proof}

\begin{lemma}\label{lem:double_sim}
It holds for any $u\in U$
\begin{align}
{u_\sim}^\sim &\geq u, & {u^\sim}_\sim &\leq u.
\end{align}
\end{lemma}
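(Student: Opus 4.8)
The plan is to derive both inequalities directly from Lemma~\ref{lem:downsim}, which already supplies $u\sim u_\sim = u\sim u^\sim = 1$, combined with the containment of an $\lat$-set in the principal cone of its infimum or supremum given by Lemma~\ref{lem:inf_single_2}. The whole argument is a short chain of already-proven facts, so I expect no substantial obstacle; the only point requiring care is the use of symmetry of $\sim$ to convert the degrees furnished by Lemma~\ref{lem:downsim} into evaluations of the appropriate class at $u$.

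First I would prove ${u_\sim}^\sim\geq u$. By Lemma~\ref{lem:downsim} we have $u\sim u_\sim = 1$, and since $\sim$ is symmetric this yields $u_\sim\sim u = 1$, i.e.\ $\class{u_\sim}(u) = u_\sim\sim u = 1$ by the definition $\class{u_\sim}(y) = u_\sim\sim y$ of the class. Now ${u_\sim}^\sim = \sup\class{u_\sim}$ exists because $\latU$ is completely lattice $\lat$-ordered, so the supremum part of Lemma~\ref{lem:inf_single_2} gives $\class{u_\sim}\subseteq\Lcone\{{u_\sim}^\sim\}$. Evaluating this inclusion at $u$ and using \eqref{eqn:single_cone_1} in the form $\Lcone\{{u_\sim}^\sim\}(u) = u\preceq{u_\sim}^\sim$, I obtain $1 = \class{u_\sim}(u)\leq (u\preceq{u_\sim}^\sim)$, whence $u\preceq{u_\sim}^\sim = 1$, which is precisely $u\leq{u_\sim}^\sim$.

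The second inequality ${u^\sim}_\sim\leq u$ is entirely dual. Starting again from Lemma~\ref{lem:downsim} and symmetry of $\sim$ we get $\class{u^\sim}(u) = u^\sim\sim u = 1$; applying the infimum part of Lemma~\ref{lem:inf_single_2} to $\inf\class{u^\sim} = {u^\sim}_\sim$ yields $\class{u^\sim}\subseteq\Ucone\{{u^\sim}_\sim\}$, and evaluating at $u$ with the second identity of \eqref{eqn:single_cone_1}, namely $\Ucone\{{u^\sim}_\sim\}(u) = {u^\sim}_\sim\preceq u$, forces ${u^\sim}_\sim\preceq u = 1$, i.e.\ ${u^\sim}_\sim\leq u$. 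This completes the proof.
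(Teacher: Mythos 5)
Your proof is correct and follows essentially the same route as the paper: Lemma~\ref{lem:downsim} plus symmetry gives $\class{u_\sim}(u)=1$, and then the fact that ${u_\sim}^\sim=\sup\class{u_\sim}$ is an upper bound of the class yields the inequality. You merely make explicit (via Lemma~\ref{lem:inf_single_2} and \eqref{eqn:single_cone_1}) the upper-bound step that the paper leaves implicit.
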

\begin{proof}
By Lemma \ref{lem:downsim}, $\class u(u_\sim)=1$. This means that also $\class {u_\sim}(u)=1$. Since ${u_\sim}^\sim = \sup \class{u_\sim}$, we have the first inequality.

The second inequality is analogous.
\end{proof}

\begin{lemma}\label{lem:two_downsim}
For each $u,v\in U$ it holds
\begin{align}
(u\preceq v) &\leq (u_\sim\preceq v_\sim), & (u\preceq v) &\leq (u^\sim\preceq v^\sim).
\label{eqn:two_downsim}
\end{align}
\end{lemma}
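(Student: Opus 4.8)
The plan is to deduce both inequalities from completeness of the $\lat$-order $\preceq$ (Theorem~\ref{thm:preceq_approx_compl}), combined with a lower bound on the power relation $\preceq^{+}$ evaluated on the classes $\class u$ and $\class v$. Indeed, applying Lemma~\ref{lem:preceq_approx_compl} to $V_1=\class u$, $V_2=\class v$ yields $\class u\preceq^\from\class v\leq\inf\class u\preceq\inf\class v = u_\sim\preceq v_\sim$ and $\class u\preceq^\to\class v\leq\sup\class u\preceq\sup\class v = u^\sim\preceq v^\sim$. Hence it suffices to prove the two bounds $(u\preceq v)\leq\class u\preceq^\from\class v$ and $(u\preceq v)\leq\class u\preceq^\to\class v$, after which the inequalities $(u\preceq v)\leq u_\sim\preceq v_\sim$ and $(u\preceq v)\leq u^\sim\preceq v^\sim$ follow by composing the two chains. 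The two bounds are dual, so I would prove the second in full and obtain the first by the analogous computation with infima in place of suprema.

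For the second bound, unfolding $\class u\preceq^\to\class v = S(\class u,\down\class v)$ reduces the claim to showing, for every $w\in U$, the inequality $(u\preceq v)\otimes(u\sim w)\leq\down\class v(w)=\bigvee_{w'\in U}(w\preceq w')\otimes(v\sim w')$. The key step is to exhibit a good witness $w'$. Writing $a=u\preceq v$, I would set $w^{*}=\sup\{\deg{a}{w},v\}$ and take $w'=w^{*}$. By Lemma~\ref{lem:inf_two_elem} we have $\sup\{\deg{a}{u},v\}=v$, so I apply completeness of $\sim$ to the pair $V_1=\{\deg{a}{u},v\}$ and $V_2=\{\deg{a}{w},v\}$: a direct computation shows $\sim^{+}(V_1,V_2)\geq u\sim w$, whence $u\sim w\leq\sup V_1\sim\sup V_2=v\sim w^{*}$. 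Since $w^{*}$ is an upper bound of $V_2$ we also have $w\preceq w^{*}\geq a$. Combining these, $(w\preceq w^{*})\otimes(v\sim w^{*})\geq a\otimes(u\sim w)$, which is precisely the required lower bound for $\down\class v(w)$.

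The main obstacle is the estimate $\sim^{+}(V_1,V_2)\geq u\sim w$. It reduces to bounding the four membership degrees inside $S(V_1,{\sim}\circ V_2)$ and $S(V_2,{\sim}\circ V_1)$: the terms coming from the shared element $v$ are trivially $1$, using $v\sim v=1$, while the terms coming from $u$ (resp.~$w$) are handled by pairing $u$ with $w$ through $u\sim w$ and then applying the residuation identity $(u\sim w)\leq a\to\bigl(a\otimes(u\sim w)\bigr)$. The first inequality of the lemma is obtained verbatim by the dual construction: using $\inf\{\deg{a}{v},u\}=u$ (again by Lemma~\ref{lem:inf_two_elem}), the sets $\{\deg{a}{v},u\}$ and $\{\deg{a}{w},u\}$, their infima, and $\up\class u$ in place of $\down\class v$.
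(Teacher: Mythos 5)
Your proof is correct, but it takes a genuinely different route from the paper's. The paper argues directly: with $a=u\preceq v$ it compares the single pair $V_1=\{\deg{a}{u},v\}$, $V_2=\{\deg{a}{u^\sim},v^\sim\}$, uses Lemma~\ref{lem:downsim} to get $V_1\sim^+V_2=1$, hence $\sup V_1\sim\sup V_2=1$ by completeness of $\sim$, identifies $\sup V_1=v$ and then $\sup V_2=v^\sim$, and finally reads off $a=V_2(u^\sim)\leq u^\sim\preceq v^\sim$ from $V_2\subseteq\Lcone\{\sup V_2\}$ (Lemma~\ref{lem:inf_single_2}). You instead factor the claim through the power order on classes, proving $(u\preceq v)\leq\class{u}\preceq^\to\class{v}$ and then invoking Lemma~\ref{lem:preceq_approx_compl}; completeness of $\sim$ enters via a witness $w^*=\sup\{\deg{a}{w},v\}$ constructed separately for each $w\in U$. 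Both arguments run on the same engine --- completeness of $\sim$ applied to two-element $\lat$-sets together with Lemma~\ref{lem:inf_two_elem}, and your estimate $\sim^{+}(V_1,V_2)\geq u\sim w$ checks out --- but yours needs one application of completeness per element of $U$ where the paper needs a single well-chosen one, and it additionally routes through Lemma~\ref{lem:preceq_approx_compl}. What your version buys is the reusable intermediate fact that $u\mapsto\class{u}$ is isotone from $\latU$ into the power order $\preceq^{+}$ (something the paper only recovers later, implicitly, from Theorem~\ref{thm:classes} and Lemma~\ref{lem:ext_preceq}), and it avoids any appeal to Lemma~\ref{lem:downsim}; the paper's version buys brevity.
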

\begin{proof}
Let $a=u\preceq v$, $V_1=\{\deg au,v\}$, $V_2=\{\deg a{u^\sim},v^\sim\}$. By Lemma \ref{lem:downsim}, $u\sim u^\sim = v\sim v^\sim = 1$. Thus, $V_1\subseteq {\sim}\circ V_2$, $V_2\subseteq {\sim}\circ V_1$ and we have $V_1\sim^+ V_2=1$. By completeness of $\sim$, 
$\sup V_1\sim \sup V_2 = 1.\label{eqn:lem:two_downsim1}$

By Lemma \ref{lem:inf_two_elem}, $\sup V_1=v$. Thus,
 $v\sim\sup V_2=1$, which means $\sup V_2\leq v^\sim$. On the other hand, since $V_2(v^\sim)=1$, we have $\sup V_2\geq v^\sim$, whence $\sup V_2= v^\sim$. By Lemma \ref{lem:inf_single_2}, $V_2\subseteq \Lcone\{v^\sim\}$. Thus, $a=V_2(u^\sim)\leq \Lcone\{v^\sim\}(u^\sim) = u^\sim\preceq v^\sim$ and the second inequality in \eqref{eqn:two_downsim} is proved.

The first inequality is proved similarly.
\end{proof}

\begin{theorem}\label{thm:from_tol_to_gal}
The pair $\tu{{}_\sim,{}^\sim}$ is an extensive isotone Galois connection on $\latU$.
\end{theorem}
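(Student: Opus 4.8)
The plan is to verify the defining equation of an isotone $\lat$-Galois connection, namely $u_\sim\preceq v = u\preceq v^\sim$ for all $u,v\in U$, together with the extensivity conditions. The maps in play are ${}_\sim\colon u\mapsto u_\sim=\inf\class u$ and ${}^\sim\colon u\mapsto u^\sim=\sup\class u$, so the Galois equation to establish reads
\begin{align*}
u_\sim\preceq v = u\preceq v^\sim.
\end{align*}
I would first dispose of extensivity, which is the easy half: by Lemma~\ref{lem:downsim} we have $u\sim u_\sim=1$, so $u_\sim\in\class u$ in the sense that $\class u(u_\sim)=1$, and since $u_\sim=\inf\class u$ is a lower bound of $\class u$ by Lemma~\ref{lem:inf_single_2} we get $u_\sim\leq u$; dually $u^\sim\geq u$. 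By the remark following \eqref{eqn:conn_cond}, it in fact suffices to prove one of the two extensivity inequalities once the adjunction is known, but both follow directly from the lemmas already proved.

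The heart of the matter is the Galois equation. I would try to prove the two inequalities
\begin{align*}
u_\sim\preceq v \leq u\preceq v^\sim
\qquad\text{and}\qquad
u\preceq v^\sim\leq u_\sim\preceq v
\end{align*}
separately, each time building a pair of two-element $\lat$-sets whose power-relation value under $\sim^+$ I can compute, and then invoking completeness of $\sim$ to push the degree through to an infimum or supremum. The model to imitate is the proof of Lemma~\ref{lem:two_downsim}: there, setting $a=u\preceq v$ and comparing $V_1=\{\deg au,v\}$ with $V_2=\{\deg a{u^\sim},v^\sim\}$, one shows $V_1\sim^+V_2=1$ and reads off the conclusion from $\sup V_2=v^\sim$ together with Lemma~\ref{lem:inf_single_2}. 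Here I would instead let $a=u_\sim\preceq v$ (resp.\ $a=u\preceq v^\sim$) and construct weighted two-element $\lat$-sets so that the $\sim^+$-value between them is at least $a$; the key technical inputs are Lemma~\ref{lem:downsim} (the classes link $u$ to $u_\sim$ and $u^\sim$ with degree $1$), Lemma~\ref{lem:inf_two_elem} (which evaluates infima and suprema of such weighted pairs, e.g.\ $\sup\{\deg au,v\}=v$ when $a=u\preceq v$), and completeness of $\sim$, which converts $V_1\sim^+V_2\geq a$ into $\inf V_1\sim\inf V_2\geq a$ and $\sup V_1\sim\sup V_2\geq a$.

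The main obstacle will be choosing the right weighted $\lat$-sets so that the computed infimum and supremum are precisely the four elements $u_\sim,u,v,v^\sim$ appearing in the two sides of the adjunction, and so that the power-relation degree $V_1\sim^+V_2$ cleanly separates into the $\preceq^\to$ and $\preceq^\from$ components that feed the infimum and supremum branches. Concretely, the degree $u_\sim\preceq v$ governs how $u_\sim$ sits below $v$, and I must engineer sets whose supremum is $v$ and whose infimum encodes $u_\sim$, using Lemma~\ref{lem:inf_two_elem} to pin these down. Once the correct sets are in hand, completeness of $\sim$ and reflexivity/symmetry of $\sim$ do the rest mechanically, and antisymmetry \eqref{eqn:antisym_eq} can be used to recombine the $\preceq$-inequalities into the desired equality. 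I expect the construction of these sets to require the same care as in Lemma~\ref{lem:two_downsim}, but no genuinely new idea beyond that template.
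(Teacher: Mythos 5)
Your plan is workable, but it takes a genuinely different and more laborious route than the paper. The paper's proof is a two-line formal argument: by Lemma~\ref{lem:two_downsim} (graded isotonicity of ${}_\sim$ and ${}^\sim$) and Lemma~\ref{lem:double_sim} (${u_\sim}^\sim\geq u$, ${v^\sim}_\sim\leq v$), one has $(u_\sim\preceq v)\leq({u_\sim}^\sim\preceq v^\sim)\leq(u\preceq v^\sim)$ by transitivity, and symmetrically for the converse --- the standard unit--counit derivation of an adjunction, requiring no further appeal to completeness. You instead propose to re-run the power-relation computation from scratch; this works, and the sets you leave unspecified can be taken to be $V_1=\{\deg{a}{u_\sim},v\}$ and $V_2=\{\deg{a}{u},v^\sim\}$ with $a=u_\sim\preceq v$: then $V_1\sim^+V_2=1$ by Lemma~\ref{lem:downsim}, $\sup V_1=v$ by Lemma~\ref{lem:inf_two_elem}, completeness gives $v\sim\sup V_2=1$ hence $\sup V_2=v^\sim$, and Lemma~\ref{lem:inf_single_2} yields $a=V_2(u)\leq(u\preceq v^\sim)$. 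Note that the value $V_1\sim^+V_2$ here is $1$, not merely $\geq a$ as you state; the degree $a$ enters as a membership weight, exactly as in Lemma~\ref{lem:two_downsim}. What your route buys is a self-contained derivation that bypasses Lemma~\ref{lem:double_sim}; what it costs is redoing work the paper has already packaged into Lemmas~\ref{lem:two_downsim} and~\ref{lem:double_sim}. Two small corrections: since the construction is the entire technical content of your route, a finished proof must exhibit the sets explicitly rather than gesture at the template; and your extensivity argument is slightly garbled --- $u_\sim\leq u$ follows because $\class u(u)=1$ by reflexivity of $\sim$ and $u_\sim=\inf\class u$ is a lower bound of $\class u$; the fact $\class u(u_\sim)=1$ from Lemma~\ref{lem:downsim} is not what is needed there.
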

\begin{proof}
We will show $\tu{{}_\sim,{}^\sim}$ is an isotone Galois connection. Let $u,v\in U$. We have by Lemma \ref{lem:two_downsim}, Lemma \ref{lem:double_sim}, and transitivity of $\preceq$,
\begin{align*}
(u_\sim\preceq v)&\leq ({u_\sim}^\sim\preceq v^\sim)\leq (u\preceq v^\sim).
\end{align*}
The converse inequality is proven analogously.

Extensivity of $\tu{{}_\sim,{}^\sim}$ follows trivially from reflexivity of $\preceq$.
\end{proof}

\subsection{Structure of maximal blocks}

\begin{lemma}\label{lem:int_blocks}
If $\tu{u,v}$ is a fixpoint of $\tu{{}_\sim,{}^\sim}$, then $\interval{v,u}$ is a block of $\sim$.
\end{lemma}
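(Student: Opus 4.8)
The plan is to verify the defining inequality of a block directly: for all $w_1,w_2\in U$ I must show $\interval{v,u}(w_1)\otimes\interval{v,u}(w_2)\le(w_1\sim w_2)$. First I would unpack the hypothesis. Since $\tu{u,v}$ is a fixpoint of $\tu{{}_\sim,{}^\sim}$ we have $u_\sim=v$ and $v^\sim=u$; extensivity of this Galois connection (Theorem~\ref{thm:from_tol_to_gal}) gives $v=u_\sim\le u$, so $\interval{v,u}=\Ucone\{v\}\cap\Lcone\{u\}$ is a genuine interval and, by \eqref{eqn:single_cone_1}, $\interval{v,u}(w)=(v\preceq w)\wedge(w\preceq u)$. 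Moreover Lemma~\ref{lem:downsim} yields $u\sim v=u\sim u_\sim=1$ (and hence $v\sim u=1$ by symmetry); this is the ``seed'' relation everything else is built from.

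The key preliminary step (Step~A) is the one-sided estimate $\interval{v,u}(w)\le(v\sim w)$ for every $w\in U$. I would prove it by applying completeness of $\sim$ to the two weighted two-element $\lat$-sets $V_1=\{\deg{v\preceq w}w,v\}$ and $V_2=\{\deg{w\preceq u}u,w\}$. By Lemma~\ref{lem:inf_two_elem} these have $\inf V_1=v$ and $\inf V_2=w$. A direct computation of ${\sim}\circ V_1$ and ${\sim}\circ V_2$, using $u\sim v=1$ and reflexivity of $\sim$, shows that $V_1\sim^\to V_2\ge(w\preceq u)$ and $V_1\sim^\from V_2\ge(v\preceq w)$, whence $V_1\sim^+ V_2\ge(v\preceq w)\wedge(w\preceq u)=\interval{v,u}(w)$. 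Completeness then gives $(v\sim w)=(\inf V_1\sim\inf V_2)\ge V_1\sim^+ V_2\ge\interval{v,u}(w)$, as desired.

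With Step~A in hand I would finish (Step~B) by a second application of completeness, this time on the suprema side. Writing $c_i=\interval{v,u}(w_i)$ and $p_i=v\preceq w_i$, put $V_1=\{\deg{p_1}v,w_1\}$ and $V_2=\{\deg{p_2}v,w_2\}$; by the supremum part of Lemma~\ref{lem:inf_two_elem} we have $\sup V_1=w_1$ and $\sup V_2=w_2$. Expanding $V_1\sim^\to V_2$ and $V_1\sim^\from V_2$, the ``diagonal'' terms contribute factors such as $(w_1\sim v)\otimes p_2$, which by Step~A (in the form $w_1\sim v\ge c_1$ and $p_2\ge c_2$) are $\ge c_1\otimes c_2$; the remaining terms are bounded below by $c_2$ resp.\ $c_1$, hence also by $c_1\otimes c_2$. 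Thus $V_1\sim^+ V_2\ge c_1\otimes c_2$, and completeness yields $(w_1\sim w_2)=(\sup V_1\sim\sup V_2)\ge V_1\sim^+ V_2\ge\interval{v,u}(w_1)\otimes\interval{v,u}(w_2)$, which is exactly the block condition.

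The main obstacle is Step~A together with the choice of weights. One has to realise that the full product inequality cannot be obtained in a single application of completeness, so that the auxiliary one-sided relations $v\sim w\ge\interval{v,u}(w)$ must be established first. Getting the weights on the two-element sets exactly right---so that Lemma~\ref{lem:inf_two_elem} pins the infima (resp.\ suprema) to $v,w$ (resp.\ $w_1,w_2$) while the residuated computation of $\sim^\to$ and $\sim^\from$ still returns the genuine $\otimes$-product rather than a mere meet---is the delicate, bookkeeping-heavy part; everything else is routine.
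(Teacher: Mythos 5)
Your proof is correct and follows essentially the same strategy as the paper: a first application of completeness with weighted two-element $\lat$-sets to get the one-sided estimate $\interval{v,u}(w)\le$ (similarity to an endpoint), followed by a second application to obtain the full $\otimes$-product bound. The only difference is that you anchor at the bottom element $v$ and use infima where the paper anchors at the top element $u$ and uses suprema (and vice versa in the second step), which is just the dual bookkeeping.
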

\begin{proof}
1.~We will prove that for each $w$,
\begin{align}\label{eqn:int_blocks}
\interval{v,u}(w) \leq u\sim w
\end{align}
(i.e. ``if $w$ belongs to $\interval{v,u}$, then it is similar to $u$'').

Set $a=w\preceq u$, $b=v\preceq w$, $V_1=\{u,\deg aw\}$, $V_2=\{\deg bv,w\}$. 
By Lemma \ref{lem:inf_two_elem}, $\sup V_1=u$ and $\sup V_2 = w$. 

Now, 
\begin{align*}
({\sim}\circ V_1)(v) &= ((v\sim u)\otimes V_1(u))\vee ((v\sim w)\otimes V_1(w)) 
= (1\otimes 1)\vee ((v\sim w)\otimes a)  = 1, \\
({\sim}\circ V_1)(w) &= ((w\sim u)\otimes V_1(u))\vee ((w\sim w)\otimes V_1(w))
= ((w\sim u)\otimes 1)\vee (1\otimes a) \geq a, \\
({\sim}\circ V_2)(u) &= ((u\sim v)\otimes V_2(v))\vee ((u\sim w)\otimes V_2(w))
= (1\otimes b)\vee((u\sim w)\otimes 1)\geq b, \\
({\sim}\circ V_2)(w) &= ((w\sim v)\otimes V_2(v))\vee ((w\sim w)\otimes V_2(w))
= ((w\sim v)\otimes b) \vee (1\otimes 1) = 1.
\end{align*}
Thus, 
\begin{align*}
S(V_1,{\sim}\circ V_2) &= 
(V_1(u)\to ({\sim}\circ V_2)(u))\wedge (V_1(w)\to ({\sim}\circ V_2)(w)) \geq b,\\
S(V_2,{\sim}\circ V_1) &= 
(V_2(v)\to ({\sim}\circ V_1)(v))\wedge (V_2(w)\to ({\sim}\circ V_1)(w)) \geq a
\end{align*}
and
\begin{align*}
\interval{v,u}(w) &=a\wedge b\leq S(V_1,{\sim}\circ V_2)\wedge S(V_2,{\sim}\circ V_1)= V_1\sim^+ V_2 \\ 
&\leq \sup V_1\sim \sup V_2 = u\sim w,
\end{align*}
proving 
\eqref{eqn:int_blocks}.

2.~Let $w_1,w_2\in U$, $a_1=\interval{v,u}(w_1)$, $a_2=\interval{v,u}(w_2)$, $b_1 = w_1\preceq u$, $b_2 = w_2\preceq u$. By \eqref{eqn:int_blocks}, $a_1\leq b_1$, $a_2\leq b_2$.
Set $V_1=\{\deg {b_1}u,w_1\}$, $V_2 = \{\deg {b_2}u,w_2\}$. By similar direct calculations as above we obtain
\begin{align*}
a_1\otimes a_2 \leq a_1\otimes b_2 = (b_1\to b_2)\wedge(1\to b_2\otimes a_1)
\leq V_1\sim^\to V_2.
\end{align*}
Similarly, $a_1\otimes a_2\leq V_1\sim^\from V_2$ and
\begin{align*}
a_1\otimes a_2\leq V_1\sim^+ V_2 \leq \inf V_1\sim \inf V_2 = w_1\sim w_2,
\end{align*}
proving $\interval{v,u}$ is a block.
\end{proof}

\begin{lemma}\label{lem:block_inf}
If $B$ is a block of $\sim$, then so is $B\cup \{\inf B\}$.
\end{lemma}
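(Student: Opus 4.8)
The plan is to reduce the block condition for $B\cup\{\inf B\}$ to a single new inequality and then derive that inequality from completeness of $\sim$. Write $m=\inf B$, so that $(B\cup\{m\})(m)=1$ and $(B\cup\{m\})(w)=B(w)$ for $w\neq m$. To verify $B\cup\{m\}$ is a block I must check $(B\cup\{m\})(w_1)\otimes(B\cup\{m\})(w_2)\leq(w_1\sim w_2)$ for all $w_1,w_2\in U$. When neither $w_1$ nor $w_2$ equals $m$ this is exactly the block condition for $B$; when $w_1=w_2=m$ it is reflexivity of $\sim$; and the remaining case $w_1=m$, $w_2=w\neq m$ (its mirror image following from symmetry of $\sim$) amounts to $B(w)\leq(m\sim w)$. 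Thus everything reduces to proving
\[
B(w)\leq(\inf B\sim w)\qquad\text{for all }w\in U.
\]

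For this key inequality I would apply completeness to the pair $V_1=B$, $V_2=\{w\}$, for which $\inf V_1=\inf B=m$ and $\inf V_2=w$. The claim is $V_1\sim^+V_2\geq B(w)$. For the forward component, $V_1\sim^\to V_2=S(B,{\sim}\circ\{w\})=\bigwedge_{x\in U}B(x)\to(x\sim w)$, and the block property $B(w)\otimes B(x)\leq(w\sim x)=(x\sim w)$ gives $B(w)\leq B(x)\to(x\sim w)$ for every $x$, whence $B(w)\leq V_1\sim^\to V_2$. For the backward component, unwinding $\sim^{-1}={\sim}$ by symmetry, $V_1\sim^\from V_2=S(\{w\},{\sim}\circ B)=({\sim}\circ B)(w)=\bigvee_{y\in U}(w\sim y)\otimes B(y)\geq(w\sim w)\otimes B(w)=B(w)$ by reflexivity. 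Taking the meet of the two components yields $V_1\sim^+V_2\geq B(w)$.

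Completeness of $\sim$ now gives $V_1\sim^+V_2\leq(\inf V_1\sim\inf V_2)=(m\sim w)$, and combining with the previous bound delivers $B(w)\leq(m\sim w)$, which is precisely the missing case. I expect the only delicate points to be the correct handling of the backward direction — passing through $\sim^{-1}={\sim}$ before invoking reflexivity — and confirming that the case analysis for $B\cup\{m\}$ is exhaustive; the rest is immediate from the definitions of $\sim^\to$, $\sim^\from$ and $S$.
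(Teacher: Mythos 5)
Your proof is correct and follows essentially the same route as the paper: reduce the claim to the single inequality $B(w)\leq(\inf B\sim w)$, then obtain it by bounding both components of $B\sim^{+}\{w\}$ from below by $B(w)$ (using the block property for $\sim^\to$ and reflexivity for $\sim^\from$) and applying completeness of $\sim$ to the infima. The only difference is that you spell out the case analysis for $B\cup\{\inf B\}$ explicitly, which the paper leaves implicit.
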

\begin{proof}
Let $u=\inf B$. It suffices to prove $B(v)\leq u\sim v$ for each $v\in U$. 

Let $V = \{v\}$. We have
\begin{align*}
B\sim^\to V &= \bigwedge_{w\in U}B(w)\to(v\sim w) \geq B(v), \\
B\sim^\from V &= B(v).
\end{align*}
Thus,
\begin{align*}
B(v) &\leq B\sim^+ V \leq \inf B\sim \inf V = u\sim v
\end{align*}
and the lemma is proved.
\end{proof}

\begin{lemma}
For each block $B$ of $\sim$ there is a fixpoint $\tu{u,v}$ of $\tu{{}_\sim,{}^\sim}$ such that $B\subseteq\interval{v,u}$.
\end{lemma}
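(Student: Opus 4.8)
The plan is to obtain the required fixpoint directly from the supremum of $B$, using the extensive isotone Galois connection $\tu{{}_\sim,{}^\sim}$ supplied by Theorem~\ref{thm:from_tol_to_gal}. Set $s=\sup B$ and define
\[
v = s_\sim, \qquad u = v^\sim = (s_\sim)^\sim .
\]
First I would verify that $\tu{u,v}$ is a fixpoint. By construction $v^\sim=u$, and $u_\sim=\bigl((s_\sim)^\sim\bigr)_\sim=s_\sim=v$ by the idempotency identity $f(g(f(\cdot)))=f(\cdot)$ of Theorem~\ref{thm:bas_gal}(c). Thus the candidate is the fixpoint $\tu{(s_\sim)^\sim,\,s_\sim}$, and since $v=s_\sim\le s\le(s_\sim)^\sim=u$ (extensivity and Theorem~\ref{thm:bas_gal}(a)), the interval $\interval{v,u}=\Ucone\{v\}\cap\Lcone\{u\}$ is well defined. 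It remains to show $B\subseteq\interval{v,u}$.

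The crucial step, which I expect to be the main obstacle, is to show that every element of $B$ is similar to its supremum, i.e.\ $B\subseteq\class s$; this is exactly where completeness of $\sim$ is used. Fix $w\in U$ and put $V=\{w\}$. The block condition $B(w')\otimes B(w)\le w'\sim w$ rewrites by adjointness as $B(w)\le\bigwedge_{w'\in U}B(w')\to(w'\sim w)=B\sim^\to V$, while reflexivity gives $B\sim^\from V=({\sim}\circ B)(w)\ge B(w)$; hence $B(w)\le B\sim^+V$. Completeness of $\sim$ now yields $B\sim^+V\le\sup B\sim\sup V=s\sim w$ (as $\sup V=w$), so $B(w)\le s\sim w=\class s(w)$. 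This is precisely the supremum counterpart of Lemma~\ref{lem:block_inf}, read off the same computation.

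With $B\subseteq\class s$ established, the two inclusions defining $\interval{v,u}$ follow readily. For the lower bound, $v=s_\sim=\inf\class s$, so Lemma~\ref{lem:inf_single_2} applied to $\class s$ gives $\class s\subseteq\Ucone\{\inf\class s\}=\Ucone\{v\}$, whence $B\subseteq\class s\subseteq\Ucone\{v\}$. For the upper bound, $\sup B=s$ gives $B\subseteq\Lcone\{s\}$ by the supremum part of Lemma~\ref{lem:inf_single_2}, and since $s\le u$ we have $\Lcone\{s\}\subseteq\Lcone\{u\}$ (by \eqref{eqn:preceq_by_cones}, or transitivity of $\preceq$), so $B\subseteq\Lcone\{u\}$. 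Combining, $B\subseteq\Ucone\{v\}\cap\Lcone\{u\}=\interval{v,u}$ with $\tu{u,v}$ a fixpoint of $\tu{{}_\sim,{}^\sim}$, as required.
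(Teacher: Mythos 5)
Your proof is correct, and it is essentially the mirror image of the paper's argument. The paper anchors the construction at $w=\inf B$, takes $u=w^\sim$, $v=u_\sim$, and invokes Lemma~\ref{lem:block_inf} to adjoin $\inf B$ to the block, then derives $B\subseteq\Ucone\{v\}$ and $B\subseteq\Lcone\{u\}$ from $B\subseteq\class w$. You instead anchor at $s=\sup B$, take $v=s_\sim$, $u=v^\sim$, and therefore need the supremum counterpart of Lemma~\ref{lem:block_inf} (namely $B(w)\leq \sup B\sim w$ for all $w$), which the paper never states; you supply it correctly by the same singleton device $V=\{w\}$, using the supremum half of the completeness condition instead of the infimum half. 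The remaining steps --- the fixpoint verification via Theorem~\ref{thm:bas_gal}(c), and the cone inclusions $B\subseteq\class s\subseteq\Ucone\{v\}$ and $B\subseteq\Lcone\{s\}\subseteq\Lcone\{u\}$ via Lemma~\ref{lem:inf_single_2} and transitivity --- correspond step for step to the paper's under this duality. Neither route has an advantage over the other; they may produce different fixpoints (yours is $\tu{((\sup B)_\sim)^\sim,(\sup B)_\sim}$, the paper's is $\tu{(\inf B)^\sim,((\inf B)^\sim)_\sim}$), but the lemma only asserts existence, so either works.
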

\begin{proof}
Set $w=\inf B$, $u=w^\sim$, $v=u_\sim$. Since $\tu{{}_\sim,{}^\sim}$ is an isotone Galois connection (Theorem \ref{thm:from_tol_to_gal}), $\tu{u,v}$ is a fixpoint. By Lemma \ref{lem:block_inf}, the $\lat$-set $B'=B\cup\{w\}$ is again a block. By Lemma \ref{lem:inf_single_2}, we have $B'\subseteq \Ucone\{w\}\subseteq\Ucone\{v\}$. Also, for each $w'$ it holds $B'(w')\leq w'\sim w$. Thus, by definition of class, $B'\subseteq\class w$, whence $\sup B'\leq\sup\class w=u$ \eqref{eqn:inf_sets_rel}. This yields $B'\subseteq\Lcone\{u\}$ and we can conclude $B\subseteq B'\subseteq \Ucone\{v\}\cap\Lcone\{u\}=\interval{v,u}$.
\end{proof}

\begin{theorem}\label{thm:max_blocks}
Maximal blocks of $\sim$ are exactly intervals $\interval{v,u}$, where $\tu{u,v}$ are fixpoints of $\tu{{}_\sim,{}^\sim}$.
\end{theorem}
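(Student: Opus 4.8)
The plan is to establish the two inclusions separately. The previous lemma already gives one direction: every fixpoint $\tu{u,v}$ of $\tu{{}_\sim,{}^\sim}$ yields a block $\interval{v,u}$ (Lemma \ref{lem:int_blocks}), and the immediately preceding lemma shows every block is contained in some such interval-block. So the bulk of the work is to verify that these interval-blocks are \emph{maximal} and that, conversely, \emph{every} maximal block arises this way.

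First I would prove maximality of $\interval{v,u}$ for a fixpoint $\tu{u,v}$. Suppose $B'$ is a block with $\interval{v,u}\subseteq B'$. By the previous lemma, there is a fixpoint $\tu{u',v'}$ with $B'\subseteq\interval{v',u'}$, so $\interval{v,u}\subseteq\interval{v',u'}$. The key observation is that the endpoints of an interval are recoverable as its infimum and supremum: since $v=\min\interval{v,u}$ and $u=\max\interval{v,u}$, inclusion of intervals forces $v'\preceq v$ and $u\preceq u'$ (using \eqref{eqn:inf_sets_rel} together with the fact that infima and suprema of the intervals are their endpoints). But $\tu{u,v}$ being a fixpoint means $v=u_\sim$ and $u=v^\sim$, and similarly for the primed pair; I would combine these with isotony of $\tu{{}_\sim,{}^\sim}$ (Theorem \ref{thm:from_tol_to_gal}) to push $v'\preceq v$ and $u\preceq u'$ back through the Galois connection and conclude $u=u'$, $v=v'$, hence $\interval{v,u}=\interval{v',u'}=B'$, giving maximality.

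Next I would prove the converse: if $B$ is a maximal block, then $B=\interval{v,u}$ for a fixpoint. By the previous lemma $B\subseteq\interval{v,u}$ for some fixpoint, and by Lemma \ref{lem:int_blocks} $\interval{v,u}$ is itself a block; maximality of $B$ then forces $B=\interval{v,u}$. This direction is essentially immediate once both earlier lemmas are in hand.

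The main obstacle will be the maximality argument in the first step — specifically, turning the set-inclusion $\interval{v,u}\subseteq\interval{v',u'}$ into the endpoint comparisons $v'\preceq v$, $u\preceq u'$ and then collapsing them to equalities via the fixpoint conditions. The subtlety is that a fuzzy interval is not determined by arbitrary bounds but by its genuine infimum and supremum, so I must argue carefully that $\inf$ and $\sup$ of $\interval{v,u}$ really are $v$ and $u$ and that monotonicity of infimum/supremum under graded subsethood \eqref{eqn:inf_sets_rel} applies. Once the bounds are compared, the equalities $v=v_\sim{}$-type identities from the fixpoint property and isotony should close the gap cleanly.
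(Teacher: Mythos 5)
Your proposal is correct and follows the same route as the paper, whose proof of this theorem is literally just ``Follows from the above lemmas'': you combine Lemma \ref{lem:int_blocks} with the preceding containment lemma exactly as intended, and your maximality argument (comparing endpoints via \eqref{eqn:inf_sets_rel} and collapsing them through the fixpoint conditions and isotony of $\tu{{}_\sim,{}^\sim}$) correctly supplies the detail the paper leaves implicit. No gaps.
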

\begin{proof}
Follows from the above lemmas.
\end{proof}

\subsection{Structure of classes}
\begin{theorem}\label{thm:classes}
For each $u\in U$, the class $\class u$ is equal to the interval $\interval{u_\sim,u^\sim}$.
\end{theorem}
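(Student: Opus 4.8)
The plan is to establish the two inclusions separately. For $\class u\subseteq\interval{u_\sim,u^\sim}$ I start from the definitions $u_\sim=\inf\class u$ and $u^\sim=\sup\class u$. Since $\class u(u)=u\sim u=1$, the element $u$ lies in $\class u$ to degree $1$, so $u_\sim\le u\le u^\sim$ and the interval is well defined. Lemma~\ref{lem:inf_single_2} applied to $V=\class u$ then gives $\class u\subseteq\Ucone\{u_\sim\}$ and $\class u\subseteq\Lcone\{u^\sim\}$, whence $\class u\subseteq\Ucone\{u_\sim\}\cap\Lcone\{u^\sim\}=\interval{u_\sim,u^\sim}$.

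The substance is the reverse inclusion $\interval{u_\sim,u^\sim}(w)\le u\sim w$ for every $w\in U$. A useful preliminary, by Lemma~\ref{lem:downsim}, is $u\sim u_\sim=u\sim u^\sim=1$; in particular $u_\sim=\min\class u$ and $u^\sim=\max\class u$. Using $\class u(u_\sim)=\class u(u^\sim)=1$ together with the easy inclusion and transitivity, one computes $\up\class u=\Ucone\{u_\sim\}$ and $\down\class u=\Lcone\{u^\sim\}$, so that $\down\class u\cap\up\class u=\interval{u_\sim,u^\sim}$. Thus the target equality $\class u=\interval{u_\sim,u^\sim}$ is exactly the statement that $\class u$ is convex, and what remains is to verify the pointwise inequality $\interval{u_\sim,u^\sim}(w)\le u\sim w$.

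To prove this I would argue by completeness, in the spirit of Lemma~\ref{lem:int_blocks}: encode $u$ and $w$ as suprema (resp.\ infima) of weighted two-element sets and read off $u\sim w$ from $\sup V_1\sim\sup V_2$ (resp.\ $\inf V_1\sim\inf V_2$). Writing $a=u_\sim\preceq w$ and $b=w\preceq u^\sim$, so that $\interval{u_\sim,u^\sim}(w)=a\wedge b$, a supremum-based application with $V_1=\{u,\deg a{u_\sim}\}$ and $V_2=\{\deg a{u_\sim},w\}$ (whose suprema are $u$ and $w$) uses $u\sim u_\sim=1$ to cover $w$ through the bottom endpoint, and dually an infimum-based application through $u^\sim$ covers it through the top endpoint. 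The hard part, which I expect to be the main obstacle, is that $w$ need not be comparable with $u$: unlike in Lemma~\ref{lem:int_blocks}, where $w$ lies below the comparison element and can be placed by reflexivity into the very set whose supremum is that element, here no single application yields $a\wedge b\le u\sim w$ outright, and the two endpoint-estimates must be combined. Conceptually this reflects the fact that $\tu{(u_\sim)^\sim,u_\sim}$ and $\tu{u^\sim,(u^\sim)_\sim}$ are fixpoints of $\tu{{}_\sim,{}^\sim}$ by Theorem~\ref{thm:bas_gal}(c), so by Theorem~\ref{thm:max_blocks} they determine maximal blocks containing $u$ to degree $1$; any block $B$ with $B(u)=1$ satisfies $B(w)=B(w)\otimes B(u)\le w\sim u$ and hence lies in $\class u$, so that $\class u$ is covered by the maximal blocks through $u$, and it is this covering of the interval by such blocks that carries the combination step.
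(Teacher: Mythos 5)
Your first inclusion $\class u\subseteq\interval{u_\sim,u^\sim}$ is correct and is exactly the paper's argument (Lemma~\ref{lem:inf_single_2} applied to $V=\class u$). The reverse inclusion, however, is not established. You candidly note that the pair $V_1=\{u,\deg a{u_\sim}\}$, $V_2=\{\deg a{u_\sim},w\}$ does not close the argument --- indeed $V_1\sim^\from V_2=S(V_2,V_1\circ{\sim})$ already requires bounding $(u\sim w)\vee(a\otimes(u_\sim\sim w))$ from below, which involves exactly the quantity you are trying to prove --- and you then fall back on a covering of $\interval{u_\sim,u^\sim}$ by maximal blocks through $u$. But that covering \emph{is} the content of the reverse inclusion: knowing that every block $B$ with $B(u)=1$ lies inside $\class u$ gives an inclusion in the wrong direction, and to conclude you must exhibit, for each $u'$, a block containing $u$ to degree $1$ and $u'$ to degree at least $a=\interval{u_\sim,u^\sim}(u')$. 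The only concrete candidates you name, $\interval{u_\sim,(u_\sim)^\sim}$ and $\interval{(u^\sim)_\sim,u^\sim}$, do not suffice: already in the crisp case, take $U$ to be the product of two four-element chains and $\tu{f,g}$ the componentwise connection with $g$ the ``shift up by one, capped'' map (a complete tolerance by Theorem~\ref{thm:from_gal_to_tol}); for $u=(1,1)$ one has $u_\sim=(0,0)$, $u^\sim=(2,2)$, $(u_\sim)^\sim=(u^\sim)_\sim=(1,1)$, and the element $w=(0,2)$ lies in $\class u=\interval{(0,0),(2,2)}$ but in neither $\interval{(0,0),(1,1)}$ nor $\interval{(1,1),(2,2)}$.

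The missing construction is where the paper's proof does its work: for $u'$ with $a=\interval{u_\sim,u^\sim}(u')$ it sets $V=\{\deg a{u'},u\}$ and $v=\inf V$, and shows via the Galois-connection identities (e.g.\ $u_\sim\preceq v=u\preceq v^\sim$ and $u'_\sim\preceq v=u'\preceq v^\sim$, together with Lemma~\ref{lem:double_sim}) that $V\subseteq\Ucone\{{v^\sim}_\sim\}\cap\Lcone\{v^\sim\}=\interval{{v^\sim}_\sim,v^\sim}$, which is a maximal block by Theorem~\ref{thm:max_blocks} since $\tu{v^\sim,{v^\sim}_\sim}$ is a fixpoint of $\tu{{}_\sim,{}^\sim}$. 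Hence $V$ is itself a block and $a=V(u')\otimes V(u)\leq u'\sim u$. In the counterexample above this recipe produces the block $\interval{(0,1),(1,2)}$, which does contain both $(1,1)$ and $(0,2)$. Without this (or an equivalent) choice of block, the ``combination step'' you defer to remains unproved, so the proposal has a genuine gap precisely at the substantive half of the theorem.
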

\begin{proof}
By Lemma \ref{lem:inf_single_2}, $\class u\subseteq \Ucone \{\inf \class u\} = \Ucone\{u_\sim\}$ and similarly $\class u\subseteq \Lcone\{u^\sim\}$. Thus, $\class u\subseteq \interval{u_\sim,u^\sim}$.

Let $u'\in U$, $a =\interval{u_\sim,u^\sim}(u') = (u_\sim\preceq u')\wedge(u'\preceq u^\sim)$. We will show the $\lat$-set $V=\{\deg a{u'},u\}$ is a block. For the lower cone of $V$ we have
\begin{align}\label{eqn:thm:classes}
\Lcone V(w) &= (w\preceq u)\wedge(a\to (w\preceq u')).
\end{align}
Let $v=\inf V$. According to Lemma \ref{lem:inf_single_2}, $V\subseteq \Ucone \{v\}$. By Lemma \ref{lem:double_sim}, ${v^\sim}_\sim\leq v$, whence $V\subseteq \Ucone \{{v^\sim}_\sim\}$.

Now consider membership degrees of $u$ and $u'$ in the lower cone $\Lcone\{v^\sim\}$. Since $a\leq u_\sim \preceq u'$, then \eqref{eqn:thm:classes} $\Lcone V(u_\sim) = 1\wedge (a\to (u_\sim\preceq u')) = 1.$ Thus, $1= u_\sim\preceq v = u\preceq v^\sim =\Lcone\{v^\sim\}(u)$, obtaining $\Lcone\{v^\sim\}(u)=1$.

For $\Lcone\{v^\sim\}(u')$ we first notice by \eqref{eqn:thm:classes}, $\Lcone V(u'_\sim) = u'_\sim\preceq u = u'\preceq u^\sim\geq a$. By Lemma \ref{lem:inf_single_2}, and \eqref{eqn:single_cone_1}, $\Lcone V(u'_\sim) = \Lcone \{v\}(u'_\sim) = u'_\sim\preceq v$, and $\Lcone\{v^\sim\}(u') = u'\preceq v^\sim = u'_\sim \preceq v\geq a$. Thus, $V\subseteq \Lcone\{v^\sim\}$.

Together, $V\subseteq \Ucone \{{v^\sim}_\sim\} \cap \Lcone\{v^\sim\} = \interval{{v^\sim}_\sim,v^\sim}$. By Theorem \ref{thm:max_blocks}, $\interval{{v^\sim}_\sim,v^\sim}$ is a block. Thus, $V$ is also a block and by definition of block we obtain $\interval{u_\sim,u^\sim}(u') = V(u') = V(u')\otimes V(u)\leq u'\sim u = \class u(u')$. Thus, $\interval{u_\sim,u^\sim}\subseteq \class u$ and the theorem is proved.
\end{proof}

%K: pridavam The
The following is an important consequence of Theorem \ref{thm:classes} that we will use later to prove our main result.

\begin{lemma}\label{lem:tol_gal_tol}
For each $u,v\in U$ we have
\begin{align}
u\sim v &= (u_\sim \preceq v)\wedge(v\preceq u^\sim).
\end{align}
\end{lemma}
\begin{proof}
The right-hand side is equal to $\interval{u_\sim,u^\sim}(v)$, which is by Theorem \ref{thm:classes} equal to $\class u(v) = u\sim v$.
\end{proof}

We use the above results in the proof of the following lemma. By Theorem \ref{thm:from_tol_to_gal}, for each complete $\lat$-tolerance $\sim$ on $\latU$ the pair $\tu{{}_\sim,{}^\sim}$ is an isotone $\lat$-Galois connection. Thus, we can $\lat$-order such $\lat$-Galois connections by the $\lat$-relation  $\preceq_{\igal(\latU,\latU)}$ \eqref{eqn:igal_preceq}.
\begin{lemma}\label{lem:tol_to_gal_iso}
For any two complete $\lat$-tolerances $\sim_1$, $\sim_2$ on $\latU$ we have
\begin{align*}
S(\sim_1,\sim_2) &= \tu{{}_{\sim_1},{}^{\sim_1}}\preceq_{\igal(\latU,\latU)}\tu{{}_{\sim_2},{}^{\sim_2}}.
\end{align*}
\end{lemma}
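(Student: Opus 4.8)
The statement to prove is the equality
\[
S(\sim_1,\sim_2) = \tu{{}_{\sim_1},{}^{\sim_1}}\preceq_{\igal(\latU,\latU)}\tu{{}_{\sim_2},{}^{\sim_2}},
\]
where the left side is the degree of graded inclusion of one complete $\lat$-tolerance in the other (as $\lat$-sets on $U\times U$), and the right side is the $\lat$-order on the associated isotone $\lat$-Galois connections, given by \eqref{eqn:igal_preceq}. The plan is to prove the equality by establishing both inequalities ($\leq$ and $\geq$), using Lemma~\ref{lem:tol_gal_tol} as the essential bridge between the tolerance values $u\sim v$ and the maps ${}_\sim$, ${}^\sim$.

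**Expanding both sides.**
First I would write out the right-hand side explicitly. By \eqref{eqn:igal_preceq} applied to $\tu{{}_{\sim_1},{}^{\sim_1}}$ and $\tu{{}_{\sim_2},{}^{\sim_2}}$, the right-hand side equals
\[
\bigwedge_{u\in U}(u_{\sim_2}\preceq u_{\sim_1}) \wedge \bigwedge_{u\in U}(u^{\sim_1}\preceq u^{\sim_2}).
\]
(I would double-check the direction of the $\preceq$'s: in \eqref{eqn:igal_preceq} the second argument's $f$-value sits on the left of $\preceq$, and here $f={}_\sim$ is the infimum map, $g={}^\sim$ the supremum map.) The left-hand side is $S(\sim_1,\sim_2)=\bigwedge_{u,v\in U}(u\sim_1 v)\to(u\sim_2 v)$. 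The whole proof then reduces to comparing these two infima, and Lemma~\ref{lem:tol_gal_tol} lets me rewrite each factor $u\sim_i v$ as $(u_{\sim_i}\preceq v)\wedge(v\preceq u^{\sim_i})$.

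**The two inequalities.**
For the ``$\geq$'' direction, I would fix $u,v$ and show that the right-hand side (a fixed degree $c$, say) satisfies $c\otimes(u\sim_1 v)\leq(u\sim_2 v)$, which via adjointness gives $c\leq(u\sim_1 v)\to(u\sim_2 v)$, and taking the infimum over all $u,v$ yields $c\leq S(\sim_1,\sim_2)$. Concretely, using $u\sim_1 v=(u_{\sim_1}\preceq v)\wedge(v\preceq u^{\sim_1})$ and the two families of inequalities $c\leq u_{\sim_2}\preceq u_{\sim_1}$ and $c\leq u^{\sim_1}\preceq u^{\sim_2}$ contained in $c$, transitivity of $\preceq$ gives $c\otimes(u_{\sim_1}\preceq v)\leq(u_{\sim_2}\preceq v)$ and $c\otimes(v\preceq u^{\sim_1})\leq(v\preceq u^{\sim_2})$, which assemble (using monotonicity of $\otimes$ and $\wedge$) into $c\otimes(u\sim_1 v)\leq(u\sim_2 v)$. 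For the ``$\leq$'' direction I would specialize: taking $v=u_{\sim_1}$ and $v=u^{\sim_1}$ respectively in $S(\sim_1,\sim_2)\leq(u\sim_1 v)\to(u\sim_2 v)$ and using that $u\sim_1 u_{\sim_1}=1$ and $u\sim_1 u^{\sim_1}=1$ by Lemma~\ref{lem:downsim}, I extract $S(\sim_1,\sim_2)\leq u\sim_2 u_{\sim_1}$ and $S(\sim_1,\sim_2)\leq u\sim_2 u^{\sim_1}$; then Lemma~\ref{lem:tol_gal_tol} (for $\sim_2$) converts these into the cone inequalities $S(\sim_1,\sim_2)\leq u_{\sim_2}\preceq u_{\sim_1}$ and $S(\sim_1,\sim_2)\leq u^{\sim_1}\preceq u^{\sim_2}$, whose infima over $u$ give exactly the right-hand side.

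**Expected obstacle.**
The routine calculations are genuinely routine once Lemma~\ref{lem:tol_gal_tol} is in hand; the main thing to get right is bookkeeping of directions in the ``$\leq$'' step — i.e. choosing the correct specializations of $v$ and correctly reading off which half of the conjunction $u\sim_2 v=(u_{\sim_2}\preceq v)\wedge(v\preceq u^{\sim_2})$ survives each substitution (when $v=u_{\sim_1}$ the factor $u_{\sim_2}\preceq v$ is the informative one, whereas $v\preceq u^{\sim_2}$ is forced to $1$ by extensivity/reflexivity; the dual holds for $v=u^{\sim_1}$). I expect this directional bookkeeping, rather than any conceptual difficulty, to be the delicate part.
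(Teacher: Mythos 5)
Your proof is correct and follows essentially the same route as the paper's: both rely on Lemma~\ref{lem:tol_gal_tol} as the bridge, with the ``$\leq$'' half obtained by specializing $v$ to $u_{\sim_1}$ and $u^{\sim_1}$ via Lemma~\ref{lem:downsim} and the ``$\geq$'' half by transitivity and adjointness. The only cosmetic difference is that in the ``$\leq$'' step you read the needed conjunct directly from $u\sim_2 u^{\sim_1}=(u_{\sim_2}\preceq u^{\sim_1})\wedge(u^{\sim_1}\preceq u^{\sim_2})$, whereas the paper centres the class at $u^{\sim_1}$ and then applies the Galois adjointness; your version is marginally more direct.
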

\begin{proof}
By definitions of $S$ and $\preceq_{\igal(\latU,\latU)}$ we have to prove the following equality:
\begin{align}\label{eqn:tol_to_gal_iso_aux}
\bigwedge_{u,v\in U}(u\sim_1 v)\to (u\sim_2 v)
&=
\bigwedge_{u\in U}(u_{\sim_2}\preceq u_{\sim_1}) \wedge \bigwedge_{u\in U}(u^{\sim_1}\preceq u^{\sim_2}).
\end{align}
We will proceed by proving both inequalities ``$\leq$'' and ``$\geq$''.

``$\leq$'':~Since $u\sim_1 u^{\sim_1}$ (Lemma \ref{lem:downsim}), the left-hand side of 
\eqref{eqn:tol_to_gal_iso_aux} is $\leq$ $\bigwedge_{u\in U}u\sim_2 u^{\sim_1}$. Now by Theorem \ref{thm:classes} and \eqref{eqn:single_cone_1} we have
\begin{align*}
u\sim_2 u^{\sim_1} &= \class[\sim_2]{u^{\sim_1}}(u) =
\Lcone\{(u^{\sim_1})^{\sim_2}\}(u)\wedge\Ucone\{(u^{\sim_1})_{\sim_2}\}(u)
\\
&\leq \Ucone\{(u^{\sim_1})_{\sim_2}\}(u) = (u^{\sim_1})_{\sim_2}\preceq u =
u^{\sim_1}\preceq u^{\sim_2}.
\end{align*}
Thus, $\bigwedge_{u,v\in U}(u\sim_1 v)\to (u\sim_2 v)\leq \bigwedge_{u\in U}u^{\sim_1}\preceq u^{\sim_2}$. The inequality $\bigwedge_{u,v\in U}(u\sim_1 v)\to (u\sim_2 v)\leq \bigwedge_{u\in U}u_{\sim_2}\preceq u_{\sim_1}$ is proved similarly.

``$\geq$'': by Theorem \ref{thm:classes} and \eqref{eqn:single_cone_1} again and by antisymmetry of $\preceq$ we obtain
\begin{align*}
&(u^{\sim_1}\preceq u^{\sim_2})\otimes(u\sim_1 v)
\leq (u^{\sim_1}\preceq u^{\sim_2})\otimes
\left((v\preceq u^{\sim_1})\wedge (u_{\sim_1}\preceq v)\right) \\
&\quad \leq (u^{\sim_1}\preceq u^{\sim_2})\otimes
(v\preceq u^{\sim_1})\leq v\preceq u^{\sim_2}.
\end{align*}
Similarly $(u_{\sim_2}\preceq u_{\sim_1})\otimes (u\sim_1 v)\leq (u_{\sim_2}\preceq v)$, thereby (Theorem \ref{thm:classes} and \eqref{eqn:single_cone_1})
\begin{align*}
(u\sim_2 v) &= (u_{\sim_2}\preceq v)\wedge (v\preceq u^{\sim_2})\\
&\geq \left( (u^{\sim_1}\preceq u^{\sim_2})\otimes(u\sim_1 v)\right)
\wedge
\left( (u_{\sim_2}\preceq u_{\sim_1})\otimes (u\sim_1 v)\right) \\
&\geq \left( (u^{\sim_1}\preceq u^{\sim_2})\wedge (u_{\sim_2}\preceq u_{\sim_1})\right)
\otimes (u\sim_1 v).
\end{align*}
By adjointness,
\begin{align*}
(u\sim_1 v)\to (u\sim_2 v)\geq (u^{\sim_1}\preceq u^{\sim_2})\wedge (u_{\sim_2}\preceq u_{\sim_1}),
\end{align*}
yielding the ``$\geq$'' part of \eqref{eqn:tol_to_gal_iso_aux}.
\end{proof}

\subsection{From extensive isotone Galois connections to complete tolerances}

Let $\tu{f,g}$ be an extensive isotone $\lat$-Galois connection on a completely lattice $\lat$-ordered set $\latU = \tu{\tu{U,\approx},\preceq}$. We set for each $u,v\in U$,
\begin{align}
u\simfg v &= (f(u)\preceq v)\wedge(v\preceq g(u)).
\end{align}
The following theorem summarizes main properties of the $\lat$-relation $\simfg$.

\begin{theorem}\label{thm:from_gal_to_tol}
$\simfg$ is a complete tolerance such that for each $u\in U$,
\begin{align}
u_{\simfg} &= f(u), & u^{\simfg} &= g(u).\label{eqn:simfg_complete}
\end{align}
\end{theorem}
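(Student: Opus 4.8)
The plan is to verify each claimed property of $\simfg$ in turn: that it is an $\lat$-tolerance (reflexive and symmetric), that the identities \eqref{eqn:simfg_complete} hold, and finally that it is complete. I would handle symmetry and reflexivity first as warm-up, then establish \eqref{eqn:simfg_complete}, because the infimum/supremum formulas will be the workhorse for proving completeness. Note immediately that $\simfg$ is built exactly like the right-hand side of Lemma~\ref{lem:tol_gal_tol}: $u\simfg v = \interval{f(u),g(u)}(v)$, so $\simfg$ describes membership of $v$ in the interval $\interval{f(u),g(u)}$. Since $\tu{f,g}$ is extensive, $f(u)\leq u\leq g(u)$, so $u\in[f(u),g(u)]$ and reflexivity follows: $u\simfg u = (f(u)\preceq u)\wedge(u\preceq g(u)) = 1$.

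First, for symmetry I would exploit the Galois property \eqref{eqn:gal} together with extensivity. I need $u\simfg v = v\simfg u$, i.e.\ $(f(u)\preceq v)\wedge(v\preceq g(u)) = (f(v)\preceq u)\wedge(u\preceq g(v))$. The natural move is to show $f(u)\preceq v = u\preceq g(v)$ via \eqref{eqn:gal} directly, and $v\preceq g(u) = f(v)\preceq u$ again by \eqref{eqn:gal} (reading it with the roles of the two arguments swapped). These two adjunction identities pair the two conjuncts across the swap, giving symmetry.

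Next I would prove \eqref{eqn:simfg_complete}, namely $u_{\simfg} = f(u)$ and $u^{\simfg} = g(u)$. By definition $u_{\simfg} = \inf\class[\simfg]{u}$ and $\class[\simfg]{u}(v) = u\simfg v = \interval{f(u),g(u)}(v)$, so $\class[\simfg]{u} = \interval{f(u),g(u)}$ as an $\lat$-set. But the infimum and supremum of an interval are its endpoints: $\inf\interval{f(u),g(u)} = f(u)$ and $\sup\interval{f(u),g(u)} = g(u)$, as recorded in the excerpt ($v = \min\interval{v,u}$, $u = \max\interval{v,u}$). Hence $u_{\simfg} = f(u)$ and $u^{\simfg} = g(u)$ follow immediately. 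This identification of the class with a concrete interval is the conceptual heart of the argument and makes everything else mechanical.

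The hard part will be \emph{completeness}: showing compatibility with $\approx$ and the two inequalities $\simfg^+(V_1,V_2)\leq (\inf V_1 \simfg \inf V_2)$ and $\simfg^+(V_1,V_2)\leq (\sup V_1\simfg\sup V_2)$ for arbitrary $V_1,V_2\in L^U$. My plan is to unfold $\inf V_1\simfg\inf V_2 = (f(\inf V_1)\preceq\inf V_2)\wedge(\inf V_2\preceq g(\inf V_1))$ and bound each conjunct below by $\simfg^+(V_1,V_2) = \simfg^{\to}(V_1,V_2)\wedge\simfg^{\from}(V_1,V_2)$. Here I expect to lean on Theorem~\ref{thm:bas_gal}(d) (the behaviour $f(\inf U')\leq\inf f(U')$, $g(\inf V')=\inf g(V')$, and their suprema duals) to push $f$ and $g$ through infima/suprema, and on the explicit definition $\simfg^{\to}(V_1,V_2) = S(V_1,\simfg\circ V_2)$ together with the interval description to relate $\simfg\circ V_2$ to cones of $f(V_2)$ and $g(V_2)$. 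The principal obstacle is the bookkeeping: translating the power-relation degree $\simfg^+(V_1,V_2)$ into bounds on order-relations between the Galois images of the infima and suprema, while respecting the directionality of $\to$ versus $\from$. Compatibility with $\approx$ should then follow either directly from the interval formula (convex $\lat$-sets, and hence intervals, are compatible with $\approx$) or as a consequence of the completeness inequalities restricted to singletons; I would note it briefly rather than belabor it.
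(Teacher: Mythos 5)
Your treatment of reflexivity, symmetry, and the identities \eqref{eqn:simfg_complete} is correct and matches the paper: the identification $\class[{\simfg}]u=\interval{f(u),g(u)}$ followed by $\inf\interval{f(u),g(u)}=f(u)$, $\sup\interval{f(u),g(u)}=g(u)$ is exactly the paper's argument, and your symmetry proof via the two adjunction readings of \eqref{eqn:gal} is sound (the paper just declares it evident). Where you diverge is the completeness half. The paper does not attack $\simfg^+$ directly: it writes $u\simfg v=R(u,v)\wedge R^{-1}(u,v)$ for the auxiliary relation $R(u,v)=u\preceq g(v)$, proves only that $R$ is complete, and then invokes Lemma~\ref{lem:inv_compl} and Theorem~\ref{thm:compl_closure_system} to conclude that the intersection $R\cap R^{-1}=\simfg$ is complete. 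This decomposition halves the work (two inequalities for one relation instead of four conjunct bounds) and disposes of compatibility with $\approx$ via the closure-system machinery. Your direct route does go through, but to make it more than a plan you need the concrete containments that drive the paper's computation, namely ${\simfg}\circ V\subseteq{\preceq}\circ g(V)=\down g(V)$ and ${\simfg}\circ V\subseteq{\succeq}\circ f(V)=\up f(V)$ (obtained from $V(w)\leq g(V)(g(w))$, resp.\ $V(w)\leq f(V)(f(w))$); these convert $\simfg^{\to}$ and $\simfg^{\from}$ into $V_1\preceq^\to g(V_2)$ and $f(V_1)\preceq^\from V_2$, after which Lemma~\ref{lem:preceq_approx_compl}, Theorem~\ref{thm:bas_gal}(d) and one application of \eqref{eqn:gal} yield all four required bounds; symmetry of $\simfg$ lets you reuse the same two containments with $V_1$ and $V_2$ swapped for the remaining two conjuncts. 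Your fallback for compatibility with $\approx$ (convexity of the intervals $\interval{f(u),g(u)}$ in each argument, chained via symmetry) is fine and is in fact more explicit than what the paper writes.
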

\begin{proof}
The $\lat$-relation $\simfg$ is evidently reflexive and symmetric, hence an $\lat$-tolerance.

Set $R(u,v)=u\preceq g(v)$. We have $u\simfg v = R(u,v)\wedge R^{-1}(u,v)$. Thus, by
Lemma~\ref{lem:inv_compl} and Theorem~\ref{thm:compl_closure_system} it is sufficient to prove that $R$ is complete.

Let $V\in L^U$.  Using obvious inequality $V(w)\leq g(V)(g(w))$ we have
\begin{align*}
(R\circ V)(v) &= \bigvee_{w\in U}R(v,w)\otimes V(w) = 
\bigvee_{w\in U}(v\preceq g(w))\otimes V(w) \\ 
&\leq
\bigvee_{w\in U}(v\preceq g(w))\otimes g(V)(g(w)) \leq
\bigvee_{w'\in U}(v\preceq w')\otimes g(V)(w') \\
&= ({\preceq}\circ g(V))(v)
\end{align*} 
and
\begin{align*}
(R^{-1}\circ V)(v) &= \bigvee_{w\in U}R(w,v)\otimes V(w) = 
\bigvee_{w\in U}(v\succeq f(w))\otimes V(w) \\ 
&\leq
\bigvee_{w\in U}(v\succeq f(w))\otimes f(V)(f(w)) \leq
\bigvee_{w'\in U}(v\succeq f(w))\otimes f(V)(w') \\
&= ({\succeq}\circ f(V))(v),
\end{align*} 
whence $R^\to(V_1,V_2) = S(V_1,R\circ V_2)\leq S(V_1,{\preceq}\circ g(V_2)) =
V_1\preceq^\to g(V_2)$ and $R^{\from}(V_1,V_2) = S(V_2,R^{-1}\circ V_1)\leq S(V_2,{\succeq}\circ f(V_1)) =
f(V_1)\preceq^\from V_2$.

Now by Lemma \ref{lem:preceq_approx_compl} and Theorem \ref{thm:bas_gal} (d),
\begin{align*}
R^+(V_1,V_2) &\leq R^\to(V_1,V_2) \leq V_1\preceq^\to g(V_2)
\leq \sup V_1\preceq \sup g(V_2) \\
&\leq \sup V_1\preceq g(\sup V_2)
= R(\sup V_1,\sup V_2),\\
R^+(V_1,V_2) &\leq R^\from(V_1,V_2) \leq  f(V_1)\preceq^\from V_2
\leq  \inf f(V_1)\preceq \inf V_2 \\
&\leq f(\inf V_1)\preceq\inf V_2
= R(\inf V_1,\inf V_2),
\end{align*}
proving completeness of $R$ and hence of $\simfg$.

To prove \eqref{eqn:simfg_complete}, we notice that for each $u\in U$ the class $\class[{\simfg}] u$ is equal to the interval $\interval{f(u),g(u)}$:
\begin{align*}
\class[{\simfg}] u(v) &= u\simfg v = (f(u)\preceq v)\wedge(v\preceq g(u)) \\
&= \Ucone\{f(u)\}(v)\wedge \Lcone\{g(u)\}(v) = \interval{f(u),g(u)}(v).
\end{align*}
Now, $u_{\simfg} = \inf \interval{f(u),g(u)} = f(u)$ and $u^{\simfg} = \sup \interval{f(u),g(u)} = g(u)$.

%DODAT ODKAZY???
\end{proof}

\subsection{Factorization theorem, representation theorem}
By Theorem \ref{thm:max_blocks}, the factor set $U/{\sim}$ consists of intervals. Thus, by Theorem \ref{thm:ext_ordered_set}, the tuple $\latU/{\sim} = \tu{\tu{U/{\sim},\approx^{+}},\preceq^{+}}$ is an $\lat$-ordered set. 
%We will usually omit the primes and write just $\latU/{\sim} = \tu{\tu{U/{\sim},\approx},\preceq}$. 
By Theorem \ref{thm:from_tol_to_gal}, the pair $\tu{{}_\sim,{}^\sim}$ is an extensive isotone Galois connection. The following theorem connects $\latU/{\sim}$ to the completely lattice $\lat$-ordered set $\Fix_{\tu{{}_\sim,{}^\sim}}$.
\begin{theorem}[factorization theorem]\label{thm:fact}
The $\lat$-ordered set $\latU/{\sim}$ is isomorphic to the completely lattice $\lat$-ordered set $\Fix_{\tu{{}_\sim,{}^\sim}}$ and, as such, is itself a completely lattice $\lat$-ordered set. The isomorphism is given by $\interval{v,u}\to \tu{u,v}$.
\end{theorem}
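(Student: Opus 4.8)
The plan is to exhibit the stated map as a bijection and then verify that it preserves the $\lat$-order, which by the definition of isomorphism of $\lat$-ordered sets is all that is required (preservation of $\approx^{+}$ then follows automatically through \eqref{eqn:antisym_eq}). First I would invoke Theorem~\ref{thm:max_blocks}: the elements of $\latU/{\sim}$ are precisely the intervals $\interval{v,u}$ for which $\tu{u,v}$ is a fixpoint of $\tu{{}_\sim,{}^\sim}$. Since $v=\min\interval{v,u}$ and $u=\max\interval{v,u}$ are recoverable from the interval itself, the interval determines the pair $\tu{u,v}$ and vice versa, so $\interval{v,u}\mapsto\tu{u,v}$ is a well-defined bijection of $\latU/{\sim}$ onto $\Fix_{\tu{{}_\sim,{}^\sim}}$.

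For the order-preservation I would take two fixpoints $\tu{u_1,v_1}$, $\tu{u_2,v_2}$ and compare both sides of
\begin{align*}
\interval{v_1,u_1}\preceq^{+}\interval{v_2,u_2} = \tu{u_1,v_1}\preceq\tu{u_2,v_2}.
\end{align*}
The right-hand side equals $u_1\preceq u_2$ by the definition of the $\lat$-order on $\Fix_{\tu{{}_\sim,{}^\sim}}$. For the left-hand side I would apply Lemma~\ref{lem:ext_preceq}, which, since every interval has a minimum and a maximum, evaluates the power relation directly on the endpoints and yields $(v_1\preceq v_2)\wedge(u_1\preceq u_2)$.

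The two sides then match once I use the identity $u_1\preceq u_2 = v_1\preceq v_2$, which holds for any two fixpoints of an isotone $\lat$-Galois connection (shown in the discussion preceding the definition of $\Fixfg$); this collapses $(v_1\preceq v_2)\wedge(u_1\preceq u_2)$ to $u_1\preceq u_2$. Hence the bijection is an isomorphism, and since $\Fix_{\tu{{}_\sim,{}^\sim}}$ is a completely lattice $\lat$-ordered set, so is the isomorphic $\latU/{\sim}$. I do not expect a serious obstacle here: all the substantive work is already carried by Theorem~\ref{thm:max_blocks} (identifying maximal blocks with fixpoint intervals) and by Lemma~\ref{lem:ext_preceq} (reducing $\preceq^{+}$ on intervals to their endpoints). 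The only point requiring a little care is confirming that the endpoint pair is recovered unambiguously from each maximal block, so that the correspondence is genuinely a bijection rather than merely a surjection.
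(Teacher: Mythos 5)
Your proposal is correct and follows essentially the same route as the paper: the paper's proof also reduces everything to Theorem~\ref{thm:max_blocks}, Lemma~\ref{lem:ext_preceq} (together with Lemma~\ref{lem:ext_approx}), and the definition of the $\lat$-order on $\Fix_{\tu{{}_\sim,{}^\sim}}$, where the collapse of $(v_1\preceq v_2)\wedge(u_1\preceq u_2)$ to $u_1\preceq u_2$ is exactly the fixpoint identity you cite. Your observation that preservation of $\approx^{+}$ follows from antisymmetry via \eqref{eqn:antisym_eq} is a harmless shortcut around the explicit use of Lemma~\ref{lem:ext_approx}, which the paper itself proves by the same antisymmetry argument.
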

\begin{proof}
Follows directly from Lemma \ref{lem:ext_approx}, \ref{lem:ext_preceq} and definition of $\lat$-order on $\Fix_{\tu{{}_\sim,{}^\sim}}$.
\end{proof}

The second main result is that complete tolerances on completely lattice $\lat$-ordered sets can be represented by extensive isotone Galois connections.
\begin{theorem}[representation theorem]\label{thm:repre}
The mapping
\begin{align*}
%K: menim \to na \mapsto
{\sim} \mapsto \tu{{}_\sim,{}^\sim}
\end{align*}
%DODELAT ISOMORPHISM
%
%CTol
%
%NEMAME DOKAZANE ZE CTol JE UZ. SYSTEM???
%
is an isomorphism between $\ctol \latU$ and $\eigal(\latU)$.
%the set of all complete tolerances on $\latU$ and the set of all extensive isotone Galois connections on $\latU$. 
Its inverse is
\begin{align*}
%K: menim \to na \mapsto
\tu{f,g} &\mapsto {\sim}_{\tu{f,g}}.
\end{align*}
$\ctol \latU$ and $\eigal(\latU)$ are both completely lattice $\lat$-ordered sets.

%SUPREMA A INFIMA??? Ale ta by mela platit obecne, nejen u extenzivnich
%\begin{align*}
%f_{\mathrm{inf}}(u) &= \sup\{\deg{H(\tu{f,g})}{f(u)}\ |\ \tu{f,g}\} \\
%g_{\mathrm{inf}}(u) &= \inf\{\deg{H(\tu{f,g})}{g(u)}\ |\ \tu{f,g}\}
%\end{align*}
\end{theorem}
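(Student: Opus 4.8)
The plan is to obtain the representation theorem by assembling the two conversion constructions of the preceding subsections with the order-computing Lemma~\ref{lem:tol_to_gal_iso}; essentially all the substantive work has already been carried out there, so what remains is to check that the two maps are well defined, mutually inverse, and order-preserving.

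First I would confirm the two maps have the claimed codomains. For a complete $\lat$-tolerance $\sim$, Theorem~\ref{thm:from_tol_to_gal} says $\tu{{}_\sim,{}^\sim}$ is an extensive isotone $\lat$-Galois connection, so ${\sim}\mapsto\tu{{}_\sim,{}^\sim}$ indeed sends $\ctol\latU$ into $\eigal(\latU)$; conversely, for $\tu{f,g}\in\eigal(\latU)$, Theorem~\ref{thm:from_gal_to_tol} says $\simfg$ is a complete $\lat$-tolerance, so $\tu{f,g}\mapsto\simfg$ sends $\eigal(\latU)$ into $\ctol\latU$.

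Next I would verify that the two round trips are identities. Starting from $\sim$, the tolerance $\sim_{\tu{{}_\sim,{}^\sim}}$ is by definition $u\,\sim_{\tu{{}_\sim,{}^\sim}}\,v=(u_\sim\preceq v)\wedge(v\preceq u^\sim)$, and Lemma~\ref{lem:tol_gal_tol} identifies the right-hand side with $u\sim v$; hence $\sim_{\tu{{}_\sim,{}^\sim}}={\sim}$. Starting from $\tu{f,g}$, equation~\eqref{eqn:simfg_complete} gives ${}_{\simfg}=f$ and ${}^{\simfg}=g$, so $\tu{{}_{\simfg},{}^{\simfg}}=\tu{f,g}$. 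Thus the two maps are mutually inverse bijections.

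Finally, to upgrade the bijection to an isomorphism I would invoke Lemma~\ref{lem:tol_to_gal_iso}, which states precisely $S(\sim_1,\sim_2)=\tu{{}_{\sim_1},{}^{\sim_1}}\preceq_{\igal(\latU,\latU)}\tu{{}_{\sim_2},{}^{\sim_2}}$. Since the $\lat$-order on $\ctol\latU$ is the restriction of $S$ and the $\lat$-order on $\eigal(\latU)$ is the restriction of $\preceq_{\igal(\latU,\latU)}$, this is exactly the degree-for-degree preservation of the $\lat$-order required of an isomorphism, and preservation of the $\lat$-equality then follows by antisymmetry~\eqref{eqn:antisym_eq}. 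A bijection preserving the $\lat$-order is an isomorphism, so the two maps are mutually inverse isomorphisms. The concluding claim is then immediate: $\ctol\latU$ is a completely lattice $\lat$-ordered set by Theorem~\ref{thm:ctol_clos}, and since complete lattice $\lat$-orderedness transfers across isomorphisms, $\eigal(\latU)$ is one as well. The only place where genuine difficulty could hide is the order-preservation step, but that difficulty has already been absorbed into Lemma~\ref{lem:tol_to_gal_iso}; here it is a direct citation.
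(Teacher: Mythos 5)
Your proposal is correct and follows essentially the same route as the paper, whose proof consists precisely of citing Theorem~\ref{thm:from_tol_to_gal}, Lemma~\ref{lem:tol_gal_tol}, Theorem~\ref{thm:from_gal_to_tol}, Lemma~\ref{lem:tol_to_gal_iso}, and Theorem~\ref{thm:ctol_clos}. You have simply spelled out how these ingredients combine (well-definedness, the two round trips, order preservation, and transfer of completeness), which matches the intended argument exactly.
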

\begin{proof}
Follows from Theorem~\ref{thm:from_tol_to_gal}, Lemma~\ref{lem:tol_gal_tol}, Theorem~\ref{thm:from_gal_to_tol}, Lemma~\ref{lem:tol_to_gal_iso}, and Theorem~\ref{thm:ctol_clos}.
\end{proof}

\section{Conclusion}
We introduced a notion of complete binary fuzzy relation on complete fuzzy lattice (completely lattice fuzzy ordered set). The notion leads in ordinary (crisp) case to the classical notion of complete relation on complete lattice, but re-formulated in terms of the theory of power structures. We proved some basic properties of power structures of fuzzy ordered sets.

In the main part of the paper, we defined complete fuzzy binary relations and complete fuzzy tolerances and investigated their properties. Our main results are covered in Theorem \ref{thm:fact} and \ref{thm:repre}. We show that a fuzzy complete  lattice can be factorized by means of a complete fuzzy tolerance and that there is a naturally-defined structure of fuzzy complete lattice on the factor set. This result corresponds to the known result from the ordinary case \cite{Cze:Flt,Wille:Ctrcl}.

In addition, we found an isomorphism between the fuzzy ordered sets of all complete fuzzy tolerances and extensive isotone fuzzy Galois connections on a fuzzy complete lattice. This result is useful for testing fuzzy tolerances for completeness and is new even in the ordinary (crisp) setting. 

Our future research will focus on applying results from this paper to Formal Concept Analysis of data with fuzzy attributes \cite{Bel:FRS}. In ordinary setting, there is a correspondence between complete tolerances on a concept lattice and so called block relations of the associated formal context \cite{Wille:Ctrcl,GanWi:FCA}. Theorem \ref{thm:fact} and \ref{thm:repre} will help establish a link between complete fuzzy tolerances on a fuzzy concept lattice and (properly defined) block relations on the formal context. This will allow generalize results from \cite{Wille:Ctrcl,GanWi:FCA} to fuzzy concept lattices.

One of the consequences of our results is that the condition of compatibility from the definition of complete relation on a completely lattice $\lat$-ordered set (Sec.~\ref{Sec:compl}) is redundant for $\lat$-tolerances. This leads to an open problem, namely, whether the condition of compatibility follows from the other conditions of the definition.

%\section{Conclusion}
%TODO
%Ze to neni primocare zobecneni: muselo se zavest princip rozsireni pro relace (ktery je sam o sobe novy) a s jeho pomoci definovat uplne relace. Dale se dokazala korespondence mezi uplnyma tolerancema a extensive Galois connections, coz je vysledek, ketry je novy i v crisp. Samotny hlavni vysledek je ovsem zajimavy i sam o sobe (faktorizace necim jinym nez ekvivalenci, snizovani dimenzionality, tj. mozne aplikace).
%
%Dalsi ukoly:
%
%-- prozkoumat novy princip rozsireni, jeho vlastnosti, vztah k rezum atd. (podobne jako u Zadehova)
%
%-- faktorizaci aplikovat na fuzzy FCA (druha pulka KoKru).

%\bibliographystyle{elsarticle-num}
%\bibliography{references}

\end{document}